\documentclass[letterpaper,11pt]{article}
\usepackage{microtype}
\usepackage{mathtools}
\usepackage{amsmath}
\usepackage{amsthm}
\usepackage{graphicx} 
\usepackage{xcolor}
\usepackage{amsfonts}
\usepackage{bm}
\usepackage{caption}
\usepackage{pifont}
\usepackage{dsfont}
\usepackage[linesnumbered]{algorithm2e}
\RestyleAlgo{ruled}
\usepackage{subcaption}
\usepackage[commandnameprefix=always]{changes}
\usepackage{xparse}
\usepackage{wrapfig}
\usepackage{comment}
\usepackage{fullpage}
\usepackage{amssymb}
\usepackage{esvect}
\usepackage{thmtools}
\usepackage{thm-restate}
\usepackage{import}
\usepackage{framed}
\usepackage{cite}
\usepackage{enumitem}
\usepackage{hyperref}
\usepackage{mathrsfs}
\usepackage[capitalize]{cleveref}
\definecolor{myblue}{HTML}{0088cc}
\hypersetup{
    colorlinks=true,
    linkcolor=black,
    citecolor=myblue,
    filecolor=black,
    urlcolor=myblue,
}

\DeclareUrlCommand{\path}{}
\declaretheorem[name=Theorem,numberwithin=section]{theorem}
\declaretheorem[name=Lemma,sibling=theorem]{lemma}
\usetikzlibrary{calc}

\DeclareMathOperator{\dist}{dist}
\DeclareMathOperator{\edist}{edist}

\NewDocumentCommand{\inlabels}{o}{%
  \ensuremath{%
    \Sigma^{\operatorname{in}}%
    \IfValueT{#1}{_{#1}}%
  }%
}

\NewDocumentCommand{\outlabels}{o}{%
  \ensuremath{%
    \Sigma^{\operatorname{out}}%
    \IfValueT{#1}{_{#1}}%
  }%
}

\newcommand{\ball}[3]{\ensuremath B^{#1}_{#2}(#3)}

\newcommand{\rsigma}[3]{\ensuremath \sigma^{\operatorname{#2}}_{#1}\upharpoonright_{#3}}

\newcommand{\ssigma}[2]{\ensuremath \sigma^{\operatorname{#2}}_{#1}}

\newcommand{\alg}[1]{\ensuremath \mathcal{A}_{#1}}

\newcommand{\MM}{\mathcal{M}}

\newcommand{\CC}{\mathcal{C}}

\newcommand{\ngadget}[2]{\ensuremath \mathsf{N}_{#1,#2}}
\newcommand{\egadget}[2]{\ensuremath \mathsf{E}_{#1,#2}}
\newcommand{\decode}[2]{\ensuremath \mathsf{decode}^{-1}(#1,#2)}

\newcommand{\dept}{dependent\xspace}
\newcommand{\local}{LOCAL\xspace}
\newcommand{\qlocal}{quantum-\local}
\newcommand{\detlocal}{det-\local}
\newcommand{\randlocal}{rand-\local} 
\newcommand{\slocal}{S\local}
\newcommand{\olocal}{online-\local} 
\newcommand{\findept}{finitely-\dept}
\newcommand{\nonsign}{non-signaling\xspace}

\newcommand{\id}{\mathsf{id}}

\DeclareMathOperator{\diam}{diam}
\DeclareMathOperator{\View}{View}
\DeclareMathOperator{\instance}{I}

\def\gn{\ensuremath \text{gball}}
\def\AA{\ensuremath \mathcal{A}}

\newtheorem{corollary}[theorem]{Corollary}
\theoremstyle{definition}
\newtheorem{definition}[theorem]{Definition}
\theoremstyle{remark}
\newtheorem*{remark}{Remark}

\DeclareMathOperator{\poly}{poly} 

\newcommand{\SWAP}{\operatorname{SWAP}}
\newcommand{\LIFT}{\operatorname{Lift}}

\newcommand{\outcome}{\operatorname{O}}

\DeclareMathOperator{\inpt}{in}
\DeclareMathOperator{\oupt}{out}

\begin{document}

\begin{flushleft}
    \huge\bf
    It does not matter how you define \\ locally checkable labelings
\end{flushleft}
\smallskip
 \begin{flushleft}
     \setlength{\parskip}{3pt}

     \textbf{Antonio Cruciani} · Aalto University

     \textbf{Avinandan Das} · Aalto University

     \textbf{Alesya Raevskaya} · Aalto University

     \textbf{Jukka Suomela} · Aalto University
 \end{flushleft}
\smallskip
\paragraph{Abstract.}
Locally checkable labeling problems (LCLs) form the foundation of the modern theory of distributed graph algorithms. First introduced in the seminal paper by Naor and Stockmeyer [STOC 1993], these are graph problems that can be described by listing a \emph{finite} set of valid local neighborhoods. This seemingly simple definition strikes a careful balance between two objectives: they are a family of problems that is broad enough so that it captures numerous problems that are of interest to researchers working in this field, yet restrictive enough so that it is possible to prove strong theorems that hold for all LCL problems. In particular, the distributed complexity landscape of LCL problems is now very well understood.

Yet what we are seeing more and more often is that there are LCL problems with unexpected, counterintuitive properties, that we have never seen in ``natural'' local graph problems, for example:
\begin{itemize}[noitemsep]
    \item There is an LCL problem that admits a distributed quantum advantage [SODA 2026].
    \item There is an LCL problem that benefits from shared randomness and from shared quantum state [ICALP 2026].
    \item There is an LCL problem whose complexity in the LOCAL model depends on whether state transitions are restricted to computable functions.
    \item We can construct LCL problems with unnatural round complexities such as $\Theta(\log^{123.45} n)$ and $\Theta(n^{0.12345})$ [STOC 2018].
\end{itemize}
Furthermore, many questions related to the distributed complexity of LCL problems are undecidable [STOC 1993, PODC 2017]. Overall, the LCL problems seem to be a poor proxy for the family of natural local graph problems. With large volumes of research on LCL problems, the following questions are getting ever more pressing: \emph{Are we studying the right problem family?} Maybe the above counterintuitive features are mere artifacts of the specific definition of LCLs introduced by Naor and Stockmeyer, and they disappear if we slightly restrict the family of LCL problems?

In this work we show that the family of LCL problems is \emph{extremely robust} to variations. We present a very restricted family of locally checkable problems (essentially, the ``node-edge checkable'' formalism familiar from round elimination, restricted to regular unlabeled graphs); most importantly, such problems cannot directly refer to e.g.\ the existence of short cycles. We show that one can translate between the two formalisms (there are local reductions in both directions that only need access to a symmetry-breaking oracle, and hence the overhead is at most an additive $O(\log^* n)$ rounds in the LOCAL model). In particular, all counterintuitive properties listed above hold also for restricted LCLs.

\thispagestyle{empty}
\setcounter{page}{0}
\newpage

\section{Introduction}

The modern complexity theory of distributed graph algorithms largely builds on the foundation of \emph{locally checkable labeling} problems (LCLs), first introduced by Naor and Stockmeyer \cite{naor-stockmeyer-1995-what-can-be-computed-locally} in their Dijkstra Prize-winning work.
Thanks to a large international research effort over the past ten years, spanning a number of research groups around the world, we now have a thorough understanding of the complexity landscape of LCL problems across numerous models of distributed computing \cite{
    brandt-fischer-etal-2016-a-lower-bound-for-the,
    fischer-ghaffari-2017-sublogarithmic-distributed,
    brandt-hirvonen-etal-2017-lcl-problems-on-grids,
    ghaffari-harris-kuhn-2018-on-derandomizing-local,
    balliu-hirvonen-etal-2018-new-classes-of-distributed,
    balliu-brandt-etal-2019-the-distributed-complexity-of,
    chang-pettie-2019-a-time-hierarchy-theorem-for-the,
    chang-kopelowitz-pettie-2019-an-exponential-separation,
    chang-2020-the-complexity-landscape-of-distributed,
    rozhon-ghaffari-2020-polylogarithmic-time-deterministic,
    balliu-brandt-etal-2020-how-much-does-randomness-help,
    balliu-brandt-etal-2021-almost-global-problems-in-the,
    balliu-censor-hillel-etal-2021-locally-checkable,
    balliu-brandt-etal-2022-efficient-classification-of,
    grunau-rozhon-brandt-2022-the-landscape-of-distributed,
    akbari-eslami-etal-2023-locality-in-online-dynamic,
    chang-studeny-suomela-2023-distributed-graph-problems,
    dhar-kujawa-etal-2024-local-problems-in-trees-across-a,
    akbari-coiteux-roy-etal-2025-online-locality-meets
}.
But are LCL problems the \emph{right} problem family to study?

\paragraph{What are LCL problems?}

LCL problems are, put simply, graph problems that can be described by listing a \emph{finite} set of valid local neighborhoods \cite{naor-stockmeyer-1995-what-can-be-computed-locally}. These are constraint-satisfaction problems defined on bounded-degree graphs; we have a finite set of input labels, a finite set of output labels, and a finite locally checkable constraint that relates the local graph structure to valid input--output combinations.

Numerous problems that have been studied in our field are LCL problems (when restricted to bounded-degree graphs); examples include vertex coloring, edge coloring, maximal independent set, maximal matching, and sinkless orientations. Such problems are particularly natural to study in a distributed setting, as a solution is easy to \emph{verify} with a distributed algorithm---hence they can also be seen as a natural distributed analog of the class NP (or, more precisely, FNP): these problems might be hard to solve, but solutions are at least easy to verify.

\paragraph{Why do we study LCL problems?}

Intuitively, LCL problems aim to strike a balance between two objectives:
\begin{enumerate}
    \item They are a family of problems that is broad enough so that it captures numerous problems that are of interest to researchers working in this field.
    \item They are restrictive enough so that it is possible to prove strong, useful theorems that hold for all LCL problems.
\end{enumerate}
LCL problems can be seen as one possible mathematically precise formalization of the family of ``simple, natural graph problems.'' LCL problems exclude numerous pathological graph problems that have an inherently global definition and hence are not that interesting from the perspective of, say, the LOCAL model of distributed computing. Ideally, we could prove strong results about LCL problems, and such results would be widely applicable to natural graph problems.

\paragraph{Success stories.}

This intuitive idea has been largely successful. For example, many natural graph problems are such that their round complexity (in bounded-degree graphs with $n$ nodes) in the usual deterministic LOCAL model is either $O(\log^* n)$ or $\Omega(\log n)$, but it is rare to see any natural problems that would have round complexity strictly between $\omega(\log^* n)$ and $o(\log n)$. Work on LCL problems has provided a clear explanation of this phenomenon: it can be shown \cite{chang-kopelowitz-pettie-2016-an-exponential-separation} that there is no LCL problem with round complexity between $\omega(\log^* n)$ and $o(\log n)$. Moreover, the proof is constructive; as soon as you can design an $o(\log n)$-round algorithm for any LCL problem, we can immediately speed it up to $O(\log^* n)$ rounds.

\paragraph{Recent counterintuitive findings.}

However, we are also seeing more and more findings where results on LCL problems disagree with our understanding of natural distributed graph problems:
\begin{itemize}
    \item There is an artificial LCL problem that admits a distributed quantum advantage in the LOCAL model \cite{balliu-casagrande-etal-2026-distributed-quantum}, but there is no known natural problem that would separate the LOCAL model from the quantum-LOCAL model.
    \item There is an artificial LCL problem that benefits from shared randomness and from shared quantum state \cite{balliu-ghaffari-etal-2025-shared-randomness-helps-with}, and again no natural LCL problem is known where shared randomness helps.
    \item There is an artificial LCL problem whose complexity in the LOCAL model depends on whether state transitions are restricted to computable functions \cite{local-computability-2026}, and again no such natural problem is known.
    \item We can construct artificial LCL problems with round complexities such as $\Theta(\log^{123.45} n)$ and $\Theta(n^{0.12345})$ \cite{balliu-hirvonen-etal-2018-new-classes-of-distributed}, which we do not see in any natural graph problem.
    \item Many questions related to the distributed complexity of LCL problems are undecidable; we can create artificial LCL problems that essentially encode the halting problem of Turing machines in their definitions \cite{naor-stockmeyer-1995-what-can-be-computed-locally,brandt-hirvonen-etal-2017-lcl-problems-on-grids}.
\end{itemize}
All of these examples seem to point to the conclusion that the family of LCL problems is \emph{too broad}, and it contains not only natural graph problems but also numerous artificial problems with features unlike anything we see in typical problems of interest. The following questions are becoming more pressing over time:
\begin{quote}
    \em
    Are we perhaps studying the wrong problem family?

    Are LCLs as a problem family way too broad?

    Could it be the case that there is some natural family of ``restricted LCLs'' that still contains most practically relevant distributed graph problems, yet excludes most of the artificial problems mentioned above?
\end{quote}

\paragraph{Key features that artificial problems exploit.}

In essence, all artificial LCL problems that we have seen fundamentally rely on the following elements of the definition of LCLs:
\begin{enumerate}
    \item The problem definition can refer to the \emph{local graph structure}. For example, the problem definition can capture the idea that we need to solve a certain problem if the graph is locally grid-like.
    \item The problem definition can refer to \emph{local inputs} (or some local feature that can trivially encode inputs, e.g., degrees of the nodes).
\end{enumerate}
At the same time, the usual distributed problems such as vertex coloring, maximal independent set, and sinkless orientation do not need to refer to the local graph structure or local inputs. Of course, to define a problem such as vertex coloring, we need to be able to capture the notion of \emph{adjacency}, but what we mean is that the problem definition does not need to refer to, say, the existence of $4$-cycles in a local neighborhood.

\paragraph{RE formalism---a concrete candidate definition of restricted LCLs.}

Let us now try to lay out a concrete candidate definition of ``restricted LCLs.'' We intentionally try to overshoot so that the definition is rather \emph{too restricted}, and see what happens then. We try to make sure that there is no way to directly refer to the local graph structure or inputs, or anything that could be used to directly encode inputs. We will follow the same formalism that has been successfully used in the context of the \emph{round elimination} technique \cite{brandt-2019-an-automatic-speedup-theorem-for,olivetti-2020-brief-announcement-round-eliminator-a,olivetti-2025-round-eliminator-a-tool-for-automatic}; we call this the \emph{RE formalism}.

We will assume that the input graph is an unlabeled $3$-regular simple graph and our task is to \emph{label half-edges}. We have a \emph{node constraint} that indicates which half-edge labels are valid combinations incident to a node (this is a set of $3$-element multisets) and an \emph{edge constraint} that indicates which half-edge labels are valid combinations at the endpoints of an edge (this is a set of $2$-element multisets).

With some thinking, it is not hard to see that this definition still allows us to encode problems essentially equivalent to our familiar distributed graph problems through simple local reductions. For example, the task of $4$-vertex coloring can be captured so that the set of half-edge labels is $\{1,2,3,4\}$, the node constraint specifies that all half-edges incident to a node have the same color (representing the color of the node), and the edge constraint specifies that the half-edge labels along a single edge must have distinct colors (representing the requirement that each edge must be properly colored). We can also use half-edge label pairs to represent e.g.\ edge orientations, and this way encode problems similar to the \emph{sinkless orientation} problem.

However, it is not at all clear if we can encode any of the artificial problems mentioned above in this formalism. Note that in the RE formalism, the \emph{feasibility of solutions is preserved under arbitrary lifts} (covering maps), which does not hold in the usual LCL formalism. In particular, we can take a short cycle and eliminate it with an appropriate lift, yet the same solution remains valid. Furthermore, after taking appropriate lifts, the local graph structure looks like a $3$-regular tree, and hence there does not seem to be any way to encode inputs.

Our original intuition was that the RE formalism is, in essence, at least as restrictive as LCL problems in regular unlabeled trees (where most of the counterintuitive artificial problems do not exist), and we set out to formally prove this intuition. Furthermore, we expected that we would eventually need to \emph{strengthen} the definition to capture natural problems such as list coloring that seem to genuinely require the possibility of encoding inputs. Surprisingly, this turned out not to be the case.

\paragraph{Our contribution: RE formalism $\approx$ general LCLs.}

An informal version of our main theorem is this:
\begin{quote}
    \em
    As long as we can break symmetry, the RE formalism is as expressive as general LCL problems.
\end{quote}
Here the ability to \emph{break symmetry} refers, in essence, to the ability to color a graph with a constant number of colors. For example, in the LOCAL and CONGEST models, this means that we will consider round complexities that are $\Omega(\log^* n)$. In many other models, such as finitely-dependent distributions, non-signaling distributions, SLOCAL, online-LOCAL, and dynamic-LOCAL, it is known that we can break symmetry essentially for free, and this assumption does not limit the applicability of our result at all.

\paragraph{Corollaries of our work.}

Essentially all counterintuitive properties of LCLs still hold for the RE formalism. Concrete corollaries of our work include:
\begin{itemize}
    \item There is an RE-formalism problem that admits a distributed quantum advantage in the LOCAL model, as a corollary of our work and \cite{balliu-casagrande-etal-2026-distributed-quantum}.
    \item There is an RE-formalism problem that benefits from shared randomness and from shared quantum state, as a corollary of our work and \cite{balliu-ghaffari-etal-2025-shared-randomness-helps-with}.
    \item There is an RE-formalism problem whose complexity in the LOCAL model depends on whether state transitions are restricted to computable functions, as a corollary of our work and \cite{local-computability-2026}.
    \item We can construct RE-formalism problems with round complexities such as $\Theta(\log^{123.45} n)$ and $\Theta(n^{0.12345})$, as a corollary of our work and \cite{balliu-hirvonen-etal-2018-new-classes-of-distributed}.
    \item Many questions related to the distributed complexity of RE-formalism problems are undecidable, as a corollary of our work and \cite{naor-stockmeyer-1995-what-can-be-computed-locally,brandt-hirvonen-etal-2017-lcl-problems-on-grids}.
\end{itemize}

\paragraph{Discussion.}

As the RE formalism is engineered to be \emph{over-restrictive}, it follows that many natural candidates for restricted LCLs are sandwiched between the RE formalism and general LCLs. For example, if we added input labels or generalized the formulation beyond $3$-regular graphs, we would have a formalism that is a strict generalization of the RE formalism. Hence all the counterintuitive properties discussed above still hold.

It is very hard to imagine a formalism of local graph problems that still contains problems similar in spirit to widely-studied graph problems, yet rules out the artificial problems discussed above.
Hence we arrive at these conclusions:
\begin{enumerate}
    \item \emph{The definition of LCL problems is extremely robust to variations.} This is great news, since this shows that the long line of prior work on LCL problems holds near-verbatim across a wide range of more restrictive problem families, all the way to input-free problems on regular graphs defined in the RE formalism. In essence, our community has studied \emph{the right problem family} all along; what we have discovered are fundamental truths about the nature of local problems, and not merely artifacts of the Naor--Stockmeyer definition of LCLs.
    \item \emph{The counterintuitive properties of LCLs are fundamental.} We cannot define away the artificial problems that we have seen. For example, the ability to encode grid-like constructions is also inherent in all restricted versions of LCLs. This is probably somewhat akin to the universality of the notion of computation, where we have seen that superficially-distinct models of computation turn out to be all equivalent to Turing machines.
\end{enumerate}

\paragraph{Future work and open questions.}

Our work leaves open two main questions:
\begin{enumerate}
    \item What, exactly, happens inside the ``symmetry-breaking region''? In particular, what can we say about the relative expressive power of general LCLs vs.\ the RE formalism when we zoom into $o(\log^* n)$-round LOCAL algorithms?
    \item What happens if we restrict not only the problem definition but also the family of graphs? In particular, what can we say about the case of trees?
\end{enumerate}
We also emphasize that, throughout this work, we will assume that our input graph is a simple graph. This does not make a difference in the case of LCL problems (for example, our LCL can trivially specify that nodes with multiple parallel edges or self-loops are unconstrained, and hence only make the problem easier to solve and will not appear in worst-case instances). However, our proof of the equivalence between LCLs and the RE formalism will make critical use of the assumption that the input graph in the RE formalism is indeed a simple graph. Whether allowing multiple parallel edges or self-loops there fundamentally changes the expressive power of the formalism is left as an open question for future work.

\subsection{Proof overview}

Our high-level plan is, in essence, this: we take any LCL problem $\Pi$, defined in the usual Naor--Stockmeyer formalism, and define a new graph problem $\Pi'$ that can be defined in the RE formalism. Then we show that, across all models of interest, $\Pi$ and $\Pi'$ have the same distributed computational complexity, as long as we are above the symmetry-breaking region.

The intuitive idea is easiest to explain, if we go through 5 different problem formulations, which we will call here $A$, $B$, $C$, $D$, and $E$. Here $A = \Pi$ is the original LCL problem, and $E = \Pi'$ is an essentially equivalent problem in the RE formalism. The formal proof will take a shortcut and skip $C$, so we will only see $A \to B \to D \to E$ later in this paper, but the $B \to D$ step in the formal proof captures the intuitive ideas that we explain below in steps $B \to C$ and $C \to D$.

\paragraph{\boldmath $A \to B$: making it regular and input-free.}

As a warm-up, let us first note that it is straightforward to turn an LCL $A$ into an equivalent LCL $B$ that is defined on input-free $3$-regular graphs. In essence, we can use simple local reductions to make the graph regular, and simple gadgets to encode inputs in the local structure of a regular graph. The hard part is to make $B$ verifiable while defining only the node and edge constraints (which cannot refer to the local graph structure).

\paragraph{\boldmath $B \to C$: making it PN-checkable.}

As an intermediate goal, let us consider the challenge of turning an LCL problem $B$ into an equivalent problem $C$ that is \emph{PN-checkable}. By a PN-checkable problem, we refer to a problem where the validity of the output can be verified with a distributed algorithm that works in the port-numbering model. In particular, the verifier cannot see short cycles; a verifier with locality $r$ can only observe the tree-like unfolding (local view) of each node up to distance $r$.

Now assume for a while that we \emph{can} encode inputs in problem $C$; the only challenge is that problem $B$ may refer to the local graph structure in general (say, ``the graph looks locally like a grid''), while $C$ can only specify valid input--output labelings for tree-like local views.

The trick that we can employ here is this: problem $C$ is defined so that, in addition to whatever labels we need for the original problem $B$, we also have an additional input label $x$ and an additional output label $y$ on each node. We then interpret $(x,y)$ as the locally unique identifier of a node. In particular, if we have two nodes in our local view with the same $(x,y)$ pair, we verify the output as if they were the same node, using the original constraints of $B$. We then need to argue that this results, indeed, in a problem that is equivalent to $B$; the intuition is as follows:
\begin{itemize}
    \item Assume that we have an algorithm $\mathcal{A}'$ for solving $C$. We can use it to construct an algorithm $\mathcal{A}$ for solving $B$ as follows. Algorithm $\mathcal{A}$ first finds a distance-$c$ coloring (for some suitable constant $c$) and puts it in the input labels $x$. Then we apply algorithm $\mathcal{A}'$. Note that regardless of what algorithm $\mathcal{A}'$ outputs in label $y$, we will have the following property: $(x,y)$ pairs will be distinct for any two distinct nodes in the same local view. Hence if $\mathcal{A}$ manages to make the PN-verifier of $C$ happy, the output of $\mathcal{A}$ will satisfy the constraints of the original problem $B$.
    \item Assume that we have an algorithm $\mathcal{A}$ for solving $B$. We can use it to construct an algorithm $\mathcal{A}'$ for solving $C$ as follows. For any input, $\mathcal{A}'$ simply ignores the label $x$ and applies the original algorithm $\mathcal{A}$. Then $\mathcal{A}'$ finds a distance-$c$ coloring and puts it in the output labels $y$. This will again ensure that $(x,y)$ pairs are distinct for any two distinct nodes in the same local view. Hence if $\mathcal{A}'$ manages to satisfy the constraints of the original problem $B$, the output of $\mathcal{A}'$ will make the PN-verifier of $C$ happy.
\end{itemize}
Note that here we will need to break symmetry and find a distance-$c$ coloring. Hence, for example, when applied to the LOCAL model, this means that there is an additive $O(\log^* n)$ overhead. Put differently, the round complexities of $B$ and $C$ are asymptotically the same as long as we are in the $\Omega(\log^* n)$ region.

\paragraph{\boldmath $C \to D$: eliminating inputs again.}

So far we have been able to eliminate the inputs and make the graph regular. Then we made the problem PN-checkable, but to do that we needed to introduce inputs. Now we will need to eliminate such inputs again, yet keep the problem PN-checkable.

This is the most challenging technical obstacle that we need to overcome in this work. Intuitively, the idea is that we replace the input labels with a sequence of gadgets. However, it is not at all obvious that there is any such gadget that we can use so that our original problem $C$ and the new problem $D$ have the same complexity. The challenges are as follows:
\begin{itemize}
    \item An algorithm $\mathcal{A}'$ that solves $D$ cannot assume anything about the input. The input may be adversarial. It may not consist of an appropriate sequence of gadgets. Yet $\mathcal{A}'$ must be able to produce a feasible solution for any input. In essence, we would like to devise a way for $\mathcal{A}'$ to announce that the input structure is invalid and hence there is no need to do anything.
    \item However, if $\mathcal{A}'$ can announce that the input is invalid, it may open up a loophole that makes problem $D$ trivial. Perhaps $\mathcal{A}'$ can \emph{always} claim that the input is invalid? The PN-verifier cannot ``see'' the structure of the gadgets, so it cannot directly know whether $\mathcal{A}'$ is lying when it claims that the input is invalid.
\end{itemize}

We overcome this challenge in \cref{sec:graph-encoding} by introducing a very specific gadget structure and requiring the algorithm to produce an output labeling $\chi$ that looks like a consistent distance-$c$ coloring of the graph in all local neighborhoods. The key graph-theoretic result that we need to prove is, informally, this (see \cref{thm:gadgeting}):
\begin{quote}
    If the input follows the right gadget structure, then the only locally-consistent labeling $\chi$ has to be a distance-$c$ coloring of the graph.
\end{quote}
Notably, an algorithm cannot cheat and label two nearby nodes $u$ and $v$ with the same label and in this way trick the verifier into believing that they are indeed the same node in the local view.

Now problem $D$ is defined so that the labeling $\chi$ must be locally-consistent. Then the PN-verifier can ``decode'' the gadgets and recover the underlying input for our original LCL problem $C$ and apply the verifier of $C$. Again, we need to argue that this results in a problem that is equivalent to $C$; the intuition is as follows:
\begin{itemize}
    \item Assume that we have an algorithm $\mathcal{A}'$ for solving $D$. We can use it to construct an algorithm $\mathcal{A}$ for solving $C$ as follows. Algorithm $\mathcal{A}$ first breaks symmetry and constructs appropriate gadget sequences to form an input for $D$, and applies algorithm $\mathcal{A}'$. Algorithm $\mathcal{A}'$ does not have any room for cheating, and it has to produce a genuine distance-$c$ coloring $\chi$, together with an output that makes the verifier of $D$ happy; hence the output (with appropriate translations) must satisfy the constraints of the original problem $C$.
    \item Assume that we have an algorithm $\mathcal{A}$ for solving $C$. We can use it to construct an algorithm $\mathcal{A}'$ for solving $D$ as follows. For any input, $\mathcal{A}'$ first finds a genuine distance-$c$ coloring $\chi$ of the graph. Then $\mathcal{A}'$ tries to decode the gadget structure. If successful, it will apply $\mathcal{A}$ locally. If unsuccessful, it can indicate that the gadget structure is invalid, and $\chi$ will serve as a PN-checkable proof of this fact. In both cases the output of $\mathcal{A}'$ will make the PN-verifier of $D$ happy.
\end{itemize}

\paragraph{\boldmath $D \to E$: from PN-checkable to the RE formalism.}

Finally, we need to turn PN-checkable problems into problems in the RE formalism. In essence, the RE formalism is a restrictive special case of PN-checkable problems with a very small verification radius. We apply the usual technique of reducing the verification radius: each edge has to produce an output label that encodes the radius-$r$ local views of its endpoints.

\paragraph{Going back from the RE formalism to LCL problems.}

So far we have seen that anything that can be expressed as an LCL problem can be also expressed as an equivalent problem in the RE formalism (modulo some overhead related to symmetry-breaking). Conversely, anything that can be expressed in the RE formalism can be trivially written as an equivalent LCL problem. This completes the proof showing that the two formalisms (as well as anything in between them) are equally expressive.

\subsection{Structure of this work}

\begin{itemize}
    \item \cref{sec:prelim}: Definitions of the relevant restricted variants of LCL problems (\cref{ssec:lcls}) and the relevant models of computing (\cref{ssec:models}).
    \item \cref{sec:framework}: General framework that allows us to do ``local reductions'' between problems and formalisms, across different models of computing.
    \item \cref{sec:eliminate-inputs}: Eliminating inputs and making the graph regular.
    \item \cref{sec:graph-encoding}: Making the problem PN-checkable without inputs. \emph{\textbf{This is the interesting and nontrivial part of the paper, where heavy lifting takes place.}}
    \item \cref{sec:re-formalism}: Final steps from PN-checkable problems to the RE formalism.
    \item \cref{sec:final}: Putting the pieces together and formally stating the main theorem.
\end{itemize}

\section{Preliminaries}\label{sec:prelim}

We write $\mathbb{N} = \{0,1,2,\dotsc\}$, $\mathbb{N}_+ = \{1,2,\dotsc\}$, $[n] = \{1, 2, \dotsc, n\}$, and $[0] = \emptyset$.

\subsection{Graphs}

\paragraph{Basic concepts.}
In this paper, graphs are assumed to be simple and undirected, unless specified otherwise. A graph $G = (V, E)$ consists of a set of nodes (or vertices) $V$ and a set of edges $E$. When necessary, we write $V(G)$ and $E(G)$ to explicitly refer to the vertex and edge sets of a specific graph $G$.
The \emph{distance} between two nodes $u, v \in V(G)$, denoted by $\dist_G(u, v)$, is defined as the number of edges on a shortest path between $u$ and $v$. The \emph{diameter} of a graph $G$, defined as $\diam(G) = \max_{u,v\in V(G)}\dist_G(u,v)$, is the length of the longest shortest path in $G$. When the graph is clear from the context, we omit the subscript and simply write $\dist(u,v)$.
The \emph{degree} of a node $v \in V(G)$, denoted $\deg_G(v)$, is the number of edges incident to $v$. We write $\Delta(G) = \Delta = \max_{u\in V(G)}\deg_G(u)$ for the maximum degree of $G$.

\paragraph{Neighborhoods, balls, and centered graphs.}
For a vertex $v$, we write \[\mathcal{N}(v) = \{ u \in V : \{u,v\} \in E \}\] for its \emph{neighborhood}, i.e., the set of adjacent nodes. We define the \emph{ball of radius $r$ around $v$} as $B_G^r(v) = (V', E')$, where
\begin{align*}
V' &= \{ u \in V : \dist(v,u)\leq r \}, \\
E' &= \{ \{s,t\} \in E : \dist(v,s) < r \text{ or } \dist(v,t) < r \}.
\end{align*}
Finally, a pair $(H,v)$ is called a \emph{centered graph of radius $r$} if $B_H^r(v) \cong H$, that is, $H$ is fully contained in the ball of radius $r$ around $v$. We extend these notions in natural ways to node-labeled and edge-labeled graphs.

\paragraph{PN-views.}

A \emph{walk} of length $\ell$ is a sequence of nodes $(v_0, \dotsc, v_\ell)$ such that $\{v_i, v_{i+1}\} \in E$ for all $i$. A walk is \emph{non-backtracking} if $v_i \ne v_{i+2}$ for all $i$. Let $V_G^r(v)$ consist of all non-backtracking walks of length at most $r$ that start at $v_0 = v$. Define the successor relation $A_G^r(v)$ as follows: walk $(v_0, \dotsc, v_\ell) \in V_G^r(v)$ is a successor of walk $(v_0, \dotsc, v_{\ell-1})$. Now define a tree $T^r_G(v)$ with the set of nodes $V_G^r(v)$ and the set of directed edges $A_G^r(v)$. Finally, if $G$ is a labeled graph, the label of walk $(v_0, \dotsc, v_\ell)$ in $T^r_G(v)$ is inherited from the label of node $v_\ell$ in $G$; we define the edge labels in an analogous manner.
Now we say that any labeled tree isomorphic to $T^r_G(v)$ is the \emph{radius-$r$ PN-view} of node $v$.

Importantly, the local view does \emph{not} carry information on the identity of original nodes; it only carries information on the structure of the tree formed by non-backtracking walks. For example, the local views in a $3$-cycle and $10$-cycle are isomorphic---for any $r$ they are simply a node and two directed paths of length $r$. Intuitively, a distributed algorithm that works in an anonymous network can recover the structure of $T^r_G(v)$ but it cannot recover the structure of $B^r_G(v)$.

Finally, a \emph{centered PN-view of radius $r$} is a pair $(T,v)$ such that $T$ is a tree isomorphic to the radius-$r$ local view of $v$ in $T$.

\begin{remark}
Here the name ``PN'' refers to the \emph{port-numbering} model, perhaps the most commonly-used formalization of distributed algorithms in anonymous networks. However, we do not need to worry about the concept of port-numbered networks and port numbers in this work.
\end{remark}


\subsection{LCL problems and their restrictions}\label{ssec:lcls}

In general, a graph problem $\Pi$ associates a set of input-output-labeled graphs $(G, \sigma^{\inpt}, \sigma^{\oupt})$ with each input-labeled graph $(G, \sigma^{\inpt})$; we will call these \emph{valid solutions}, \emph{legal solutions}, or simply \emph{solutions}. We are given $(G, \sigma^{\inpt})$ as input, and the algorithm that runs in $G$ has to produce an output labeling $\sigma^{\oupt}$ such that $(G, \sigma^{\inpt}, \sigma^{\oupt})$ is a valid solution.

\paragraph{LCL problems.}

A \emph{locally checkable labeling} problem \cite{naor-stockmeyer-1995-what-can-be-computed-locally}, or in brief an LCL, is defined by a finite set $\CC$ of input-output-labeled centered graphs of radius $r$.
We define that a solution $\mathbf{G} = (G, \sigma^{\inpt}, \sigma^{\oupt})$ is valid if $(B_{\mathbf{G}}^r(v), v)$ is isomorphic to a centered graph in $\CC$ for all nodes $v$.

Intuitively, these are problems where the validity of a solution can be verified with an $r$-round LOCAL algorithm, in the following sense: all nodes output ``yes'' if a solution is valid and at least one node outputs ``no'' if a solution is not valid. To verify the solution, it suffices that all nodes gather their radius-$r$ labeled neighborhood and check that it is indeed isomorphic to a centered graph in $\CC$.

Since $\CC$ is finite, it also implies that the problem is well-defined only for graphs of some finite maximum degree $\Delta$. Hence, throughout this paper, we will focus on the family of graphs of maximum degree $\Delta$, for some constant $\Delta$.
Furthermore, since $\CC$ is finite, there are only finitely many possible input labels and output labels that we can use. We will write $\inlabels$ and $\outlabels$ for the sets of input and output labels. An LCL problem with all relevant sets and parameters can then be fully specified as a tuple \[\Pi=\left(\Delta,\inlabels,\outlabels,\CC,r\right).\]

\paragraph{PN-checkable problems.}

Intuitively, an LCL $\Pi$ is \emph{PN-checkable} if the validity of a solution can be verified with an $r$-round algorithm in the port-numbering model. Such algorithms do not have access to unique identifiers and hence cannot e.g.\ distinguish between a short cycle and an infinitely long path.

More formally, a PN-checkable problem is defined by a finite set $\CC$ of centered PN-views of radius $r$.
We define that a solution $\mathbf{G} = (G, \sigma^{\inpt}, \sigma^{\oupt})$ is valid if $(T_{\mathbf{G}}^r(v), v)$ is isomorphic to a centered PN-view in $\CC$ for all nodes $v$.

Importantly, LCL problems can directly capture tasks such as ``a node must be labeled with $1$ if and only if it is part of a $3$-cycle'', while PN-checkable problems cannot directly refer to the existence of short cycles.

\paragraph{RE formalism.}

Intuitively, a problem in the \emph{RE formalism} is a very restricted PN-checkable problem, with the checking radius ``$r=1/2$,'' with labels on \emph{half-edges}, and we are only interested in the case of $3$-regular graphs. The formalism is essentially a restricted version of the notation used to specify problems in the Round Eliminator tool \cite{olivetti-2020-brief-announcement-round-eliminator-a,olivetti-2025-round-eliminator-a-tool-for-automatic}.

A problem in the RE formalism is a tuple
\[
\Pi = (\outlabels, \CC_V, \CC_E),
\]
where $\outlabels$ is a finite set of edge labels, $\CC_V$ is the \emph{node constraint} and $\CC_E$ is the \emph{edge constraint}. Here $\CC_V$ is a set of $3$-element multisets from $\outlabels$, and $\CC_E$ is a set of $2$-element multisets from $\outlabels$.

In the RE formalism, we will completely ignore the input labeling $\sigma^{\inpt}$ (or, equivalently, require that the input labels are empty). Our graph $G$ is assumed to be $3$-regular (almost equivalently, we could specify that nodes of degree different from $3$ are unconstrained, but let us stick to the $3$-regular case for simplicity). Our output labeling $\sigma^{\oupt}$ assigns a label from $\outlabels$ to each half-edge (put otherwise, we can imagine that we have subdivided all edges of $G$ to construct a new graph $G'$ with twice as many edges, and $\sigma^{\oupt}$ is an edge-labeling of $G'$).

Now a half-edge labeling $\sigma^{\oupt}$ is valid if (1)~for every node $v$, the three half-edge labels incident to it form a multiset that is in $\CC_V$, and (2)~for every edge $e$, the two half-edge labels on its two ends form a multiset that is in $\CC_E$.

\paragraph{Always-solvable LCLs.}
We say that an LCL $\Pi$ is \emph{always solvable} if for every graph $G$ of maximum degree $\Delta$ and every input labeling $\sigma^{\inpt}$, there exists at least one legal output labeling $\sigma^{\oupt}$.


\subsection{Key models of computing}\label{ssec:models}

We will here briefly introduce the key models of computing that we will discuss in this work. In essence, we want to show that our main result---that problems in the RE formalism are equally expressive as general LCL problems---holds across a wide range of models of computing, and hence we will need to refer to various models introduced and studied in prior work. The unifying theme here is that all models of interest are parameterized by $T$, which represents the \emph{locality} or \emph{round complexity} of the algorithm.

\paragraph{The \local model.}
In the \local model \cite{linial-1992-locality-in-distributed-graph-algorithms,peleg-2000-distributed-computing-a-locality-sensitive} of computing, we are given a distributed system of $n$ processors (or nodes) connected through a communication network represented as a graph $G=(V,E)$, along with an input function $x$. Every node $v\in V(G)$ holds input data $x(v)$ which encodes the number $n$ of nodes in the network, a unique identifier from the set $[n^c]$, where $c\geq 1$ is a fixed constant, and \emph{possible} inputs defined by the problem of interest. In case of randomized computation we refer to the \randlocal (i.e., randomized \local), which means that $x(v)$ additionally encodes an \emph{infinite} string of bits that are uniformly and independently sampled for each node, and private (i.e., not shared with the other nodes in the graph). In case of \emph{deterministic} computation, we call the model \detlocal. Computation is performed by synchronous rounds of communication. In each round, nodes can exchange messages of \emph{unbounded} (but finite) size with their neighbors, and then perform an arbitrarily long (but terminating) local computation. Errors occur neither in sending messages nor during local computation. The computation terminates when every node $v$ outputs a label $\sigma$. The running time of an algorithm in this model is the number of communication rounds, given as a function of $n$, that are needed to output a labeling that solves the problem of interest. In \randlocal, we also require that the algorithm solves the problem of interest with probability at least $1-\delta$ where $\delta=1/\poly(n)$, where $\poly(n)$ is any polynomial function in $n$. If an algorithm runs in $T$ rounds, and both communication and computation are unbounded, we can look at it as a function mapping the set of radius-$T$ neighborhoods to the set of output labels in the deterministic case, or to a probability distribution over the output labels in the randomized case. In this case, we say that $T$ is the \emph{locality} of the algorithm.
\paragraph{The \slocal model.}
The \slocal model of computing~\cite{ghaffari-kuhn-maus-2017-on-the-complexity-of-local} is a sequential version of the \local model. More precisely, an algorithm $\AA$ processes the nodes sequentially in an order $p=v_1,v_2,\dots,v_n$. The algorithm must work for any given order $p$. When processing a node $v$, the algorithm can query $B_G^T(v)$ and $\AA$ can read $u$'s state for all nodes $u\in B_G^T(v)$. Based on this information, node $v$ updates its own state and computes its output label $\sigma$. In doing so, $v$ can perform unbounded computation, i.e., $v$'s new state can be an arbitrary function of the queried $B_G^T(v)$. The output can be remembered as a part of $v$'s state. The \emph{time complexity} $T_{\AA,p}(G,\bm{x})$ of the algorithm on graph $G$ and inputs $\bm{x}=(x(v_1),\dots,x(v_n))$ with respect to order $p$ is defined as the maximum $T$ over all nodes $v$ for which the algorithm queries a radius-$T$ neighborhood of $v$. The \emph{time complexity} $T_\AA$ of algorithm $\AA$ on graph $G$ and inputs $\bm{x}$ is the maximum $T_{\AA,p}(G,\bm{x})$ over all orders $p$.


\paragraph{The \nonsign model.} The \nonsign model is a model of computing that abstracts from how the actual computation is happening in the network, focusing on a probabilistic description of the valid output labelings. In this model, rather than given algorithms, we are asked to produce \emph{outcomes} (also called strategies) that are functions mapping the input to a probability distribution over output labelings. In other words, given a graph and its input, a probability distribution over output labelings is assigned to the graph such that a sample from the distribution will produce a valid output labeling with high probability. The complexity of the outcome is given by its dependency radius \(T\). This means that an outcome is non-signaling beyond distance \(T\) if, for any subset \(A\) of the nodes of the graph, modifying the graph or its input at distance greater than \(T\) from \(A\) does not change the output distribution over \(A\).
Different papers refer to this model as \emph{non-signaling distributions}, \emph{NS-LOCAL}, \emph{physical locality}, and the \emph{$\phi$-LOCAL} model; we refer to \cite{gavoille-kosowski-markiewicz-2009-what-can-be-observed,arfaoui-fraigniaud-2014-what-can-be-computed-without,akbari-coiteux-roy-etal-2025-online-locality-meets} for more details.

\paragraph{The bounded-dependence model.}
In the bounded-dependence model, we have a probability distribution that is not only non-signaling, but it also satisfies the following independence property: if $A$ and $B$ are two sets of nodes such that their radius-$T$ balls do not intersect, then outputs of $A$ and $B$ are independent. In the case of a constant $T$, this coincides (up to constant factors) with the notion of \emph{finitely dependent distributions} \cite{burton-goulet-meester-1993-on-1-dependent-processes-and,holroyd-liggett-2016-finitely-dependent-coloring}. We refer to \cite{akbari-coiteux-roy-etal-2025-online-locality-meets} for more details.

\paragraph{The \qlocal model.} The \qlocal model is defined like the \local model introduced above, with the following differences. Every processor (node) can locally operate on an unbounded (but finite) number of qubits, applying any unitary transformations, and quantum measurements can be locally performed by nodes at any time. In each communication round, nodes can send an unbounded (but finite) number of qubits to their neighbors. The local output of a node still needs to be an output label encoded in classical bits. As in \randlocal, we ask that an algorithm solves a problem with probability at least $1-1/\poly(n)$. A more formal definition of the model can be found in~\cite{gavoille-kosowski-markiewicz-2009-what-can-be-observed}.

\section{General framework}\label{sec:framework}
This section introduces the generic gadget-based framework that we use as a template for
reductions between LCL problems and formalisms across multiple models of computing.
We first define an encoding that maps any bounded-degree labeled instance to a $3$-regular gadgeted
graph while preserving the underlying structure up to isomorphism (\cref{sec:graph_encoding}). We then specify a
decoding procedure together with the associated
$\decode{\cdot }{\cdot}$ correspondence between node objects and gadget vertices (\cref{sec:graph_decoding}).
Building on this, we define natural lifts between output labelings of the decoded instance and
the gadget graph, and show how to simulate algorithms through
the encoding/decoding with only constant-factor overhead determined by the maximum gadget
diameter $\Lambda$ (\cref{sec:simulation_of_algorithms}). Finally, we describe the symmetry-breaking complexity for the models of interest as an additive model-dependent cost $\diamondsuit(\mathcal{M})$ (\cref{sec:symmetry_breaking}).
\subsection{Graph encoding}\label{sec:graph_encoding}

We describe a general  graph encoding scheme that transforms a bounded-degree labeled graph into a $3$-regular graph.

Let $G_1=(V_1,E_1)$ be a graph of maximum degree at most $d$. Each vertex $v\in V_1$ is assigned a label $\ssigma{V_1}{}(v)\in [k_1]$, and each edge $e\in E_1$ is assigned a label $\ssigma{E_1}{}(e)\in [k_2]$, where $k_1,k_2\ge 0$ are fixed integers.

For every vertex $v\in V_1$, we define a \emph{node gadget} $\ngadget{v}{\ssigma{V_1}{}(v)}$ that encodes the pair $(v,\ssigma{V_1}{}(v))$. In the encoded graph, each vertex of $G_1$ is replaced by its corresponding node gadget.

Similarly, for every edge $e\in E_1$, we define an \emph{edge gadget} $\egadget{e}{\ssigma{E_1}{}(e)}$ that encodes the pair $(e,\ssigma{E_1}{}(e))$. Each edge of $G_1$ is replaced by the corresponding edge gadget, which connects the node gadgets of its endpoints.

The structure of node and edge gadgets is completely determined by the labeling functions $\ssigma{V_1}{}$ and $\ssigma{E_1}{}$, together with the degree of the corresponding vertices. Since the label sets are finite and the degree of $G_1$ is bounded, there exist only finitely many distinct node and edge gadgets, each of finite size.

The resulting encoded graph $G'_1$ is always $3$-regular. The encoding scheme  must satisfy the following properties.

\begin{description}
	\item[Uniqueness of encoding.] Let there be another labeled graph $(G_2,\ssigma{V(G_2)}{},\ssigma{E(G_2)}{})$ which is encoded into a $3$-regular graph $G'_2$. If $G'_1$ and $G'_2$ are isomorphic, then $(G_1,\ssigma{V(G_1)}{},\ssigma{E(G_1)}{})$ and $(G_2,\ssigma{V(G_2)}{},\ssigma{E(G_2)}{})$  are also isomorphic.
	\item[Decodability assumptions.] We assume that the encoding scheme satisfies that in the graph $G'_1$ produced by the encoding, each vertex of $G'_1$ belongs to at most one gadget, node and edge gadgets do not overlap, and every detected gadget corresponds to a unique vertex or edge of the encoded graph.
\end{description}

\subsection{Graph decoding}\label{sec:graph_decoding}

We define a \emph{graph decoding} from the family of $3$-regular graphs to the family of finitely labeled graphs of maximum degree at most $d$.

 Let $G'$ be a $3$-regular graph  and its decoding is $(G,\ssigma{V(G)}{},\ssigma{E(G)}{})$ where $\ssigma{V(G)}{} :V(G)\rightarrow [k_1]$ and $\ssigma{E(G)}{} : E(G)\rightarrow [k_2]$ be the input labelings of $G$ where $k_1$ and $k_2$ are fixed positive integers. 	The decoding procedure also implicitly defines a function \[\mathsf{decode}^{-1}:(V(G)\times [k_1])\cup (E(G)\times [k_2])\rightarrow 2^{V(G')},\] which is specified alongside the decoding procedure below. We fix arbitrary values $\bot_V \in [k_1]$ and $\bot_E \in [k_2]$, which are used as default labels. For simplicity of notation, we denote both values by $\bot$. The graph decoding proceeds in three steps.

\noindent
\paragraph{Detect node gadgets.}
 
We identify induced subgraphs of $G'$ which are isomorphic to one of the possible node gadgets. For each detected node gadget which encodes the integer $k\in [k_1]$, we create a node $v\in V(G)$ and set $\ssigma{V(G)}{}(v) = k$. We refer to the node gadget as $\ngadget{v}{k}$. We set  $\decode{v}{k} = \ngadget{v}{k}$. 

\noindent
\paragraph{Detect edge gadgets.}
If there exists an edge gadget encoding an integer $p\in [k_2]$ connecting two node gadgets  $\ngadget{u_1}{k'_1}$ and $\ngadget{u_2}{k'_2}$,  then we add the edge $\{u_1,u_2\}\in E(G)$ and ensure that $\ssigma{E(G)}{}(\{u_1,u_2\}) = p$. We set $\decode{\{u_1,u_2\}}{p} = \egadget{\{u_1,u_2\}}{p}$. 

\noindent
 \paragraph{Malformed parts.}
Any vertex or edge of $G'$ that is not contained in any detected node gadget or detected edge gadget is considered \emph{malformed}.

		Every malformed vertex $v$ of $G'$ is lifted to the decoded graph $G$ as a vertex and it takes the  default value $\bot$ in $[k_1]$. We set $\decode{v}{\bot} = \{v\}$. 

		Each malformed edge $\{u_1,u_2\}$ of $G'$  gives rise to an edge in the decoded graph (adopting a default label) according to its endpoints:
    \begin{itemize}
		    \item If both $u_1$ and $u_2$  are malformed vertices, then the edge $\{u_1,u_2\}$ is lifted as is in $G$ and $\ssigma{E(G)}{}(\{u_1,u_2\}) = \bot$. Set $\decode{\{u_1,u_2\}}{\bot} = \{u_1,u_2\}$.
		    \item If $u_1$ belongs to a detected node gadget corresponding to a vertex $v\in V(G)$ and $u_2$ is malformed, then $\{v,u_2\}$ is added in $G$ and $\ssigma{E(G)}{}(\{v,u_2\}) = \bot$. Set $\decode{\{v,u_2\}}{\bot} = \{u_1,u_2\}$ 
		    \item If $u_1$ and $u_2$  belong to two detected node gadgets corresponding to vertices $u,v \in V(G)$, then an edge is added between $u$ and $v$ in $G$ and $\ssigma{E(G)}{}(\{u,v\}) = \bot$. Set $\decode{\{u,v\}}{\bot} = \{u_1,u_2\}$.  
    \end{itemize}

The decoded graph $G$ is allowed to be a multigraph; in particular, multiple edges between the same pair of vertices may arise from malformed edges.

\begin{remark}
Even though $\decode{\cdot}{\cdot}$ outputs a set of vertices, we abuse the notation and assume that it returns a graph.
\end{remark}

\begin{remark}
The function $\decode{\cdot}{\cdot}$ also implicitly defines $\mathsf{decode}$ which is just its preimage.
\end{remark}

\subsubsection{LOCAL decoding}

Fix $\Lambda$ to be thrice the maximum diameter of any node gadget or edge gadget.  The following lemma defines the local decoding. 

\begin{lemma}\label{lemma:local_decoding}
	Fix an arbitrary node $v\in V(G')$ such that it is not a part of any edge gadget. Let $(G'',\ssigma{V(G'')}{},\ssigma{E(G'')}{})$ be the decoding of  $\ball{\Lambda}{G'}{v}$. Let $u\in V(G)$ and $u'\in V(G'')$ be the decoded nodes such that $v$ belongs to both $\decode{u}{\ssigma{G}{}(u)}$ and $\decode{u'}{\ssigma{G''}{}(u')}$. 
Then
	\[
		\bigl(\ball{}{G}{u},\rsigma{V(G)}{}{\ball{}{G}{u}}, \rsigma{E(G)}{}{\ball{}{G}{u}}\bigr)
       				 \cong
        		       \bigl(\ball{}{G''}{u'},\rsigma{V(G'')}{}{\ball{}{G''}{u'}},\rsigma{E(G'')}{}{\ball{}{G''}{u'}}\bigr).
	\]
\end{lemma}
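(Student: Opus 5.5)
The plan is to show that the decoding procedure is local: whether a vertex of $G'$ lies in a detected node gadget, in a detected edge gadget, or is malformed depends only on a bounded neighbourhood of that vertex. Here $\ball{}{G}{u}$ is the radius-one ball in the decoded graph, so it consists of $u$, its incident decoded edges and their other endpoints, together with labels. It therefore suffices to check three things: every gadget or malformed piece of $G'$ that contributes to $\ball{}{G}{u}$ lies inside $\ball{\Lambda}{G'}{v}$; it is detected in the same way when decoding the ball as when decoding $G'$; and no spurious gadget appears in the ball.

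\textbf{Step 1: $u$ itself is decoded identically.} Since $v$ is not in an edge gadget, either $v$ lies in a detected node gadget $\ngadget{u}{k}$ or $v$ is malformed.

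In the first case, the gadget has diameter at most $\Lambda/3$, so it lies within distance $\Lambda/3$ of $v$. Being an induced subgraph isomorphic to a node gadget is a property of the gadget's vertex set and the edges among them. All these vertices are at distance less than $\Lambda$ from $v$, so all edges among them are present in $\ball{\Lambda}{G'}{v}$.

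We also need that detection in the ball does not differ from detection in $G'$. For this we invoke the decodability assumption: each vertex belongs to at most one gadget and gadgets do not overlap. Detection is defined purely by induced-subgraph isomorphism, and induced subgraphs contained well inside the ball coincide with those of $G'$. Hence the gadget containing $v$ is found in both decodings, and $u'$ carries the same label $k$.

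In the malformed case, $v$ is in no gadget of $G'$. Any candidate gadget through $v$ inside the ball would have diameter at most $\Lambda/3$, so it would lie deep inside the ball and would then also be a gadget of $G'$, a contradiction. So $v$ is malformed in both decodings and $u=u'=v$ with label $\bot$.

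\textbf{Step 2: the neighbours and incident edges agree.} Each decoded edge at $u$ comes from one of two sources:
\begin{itemize}
\item an edge gadget attached to the node gadget of $u$, which lies within distance $\Lambda/3+\Lambda/3$ of $v$, with the far node gadget extending at most another $\Lambda/3$;
\item a malformed edge leaving the gadget of $u$, whose far endpoint lies within distance $\Lambda/3+1$ of $v$, and whose endpoint's own gadget (if any) extends to at most $2\Lambda/3+1$.
\end{itemize}

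This is where the factor $3$ in $\Lambda$ is used: every object needed to determine the closed neighbourhood of $u$ sits within distance at most $\Lambda$ of $v$. The same argument as in Step 1 then shows that each such object is detected identically in $\ball{\Lambda}{G'}{v}$. For objects touching the boundary, we only need that their induced structure is visible, which requires slack in the distance bound. This gives a label-preserving bijection between the edges at $u$ and those at $u'$, and between their endpoints. Edge labels are preserved because they are read off from gadget types, or equal $\bot$ for malformed edges.

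\textbf{Main obstacle.} The delicate point is the boundary of the ball. Radius-$\Lambda$ balls omit edges between two vertices both at distance exactly $\Lambda$, so a far node gadget reaching near distance $\Lambda$ might fail to be detected, or a truncated structure might appear malformed. Truncation could also create a fake induced copy of a gadget.

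To handle this, I would carefully bound every relevant vertex at distance at most $\Lambda-1$ from $v$ so that induced structure is preserved. If the bound is not met, I would note that a wrongly-classified far-endpoint object only affects $\ball{}{G''}{u'}$ through the endpoint's label. I would then argue that this object is at distance greater than $\Lambda/3$ from $v$'s gadget, so it still lies fully inside the ball.

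Spurious gadgets near the boundary cannot attach to $u'$, because adjacency to $u$'s gadget forces them to lie within the safe radius.
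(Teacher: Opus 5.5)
Your Steps 1 and 2 are essentially the paper's argument. The paper's proof only observes that $u$'s node gadget, each neighbour's node gadget, and the edge gadget or malformed edge joining them all lie in $\ball{\Lambda}{G'}{v}$, because every gadget has diameter at most $\Lambda/3$ and malformed pieces have diameter at most $1$. Decoding the ball therefore recreates every neighbour with the same node and edge labels. You are more careful than the paper in demanding that detection inside the truncated ball agree with detection in $G'$. Your handling of $u$'s own gadget, of the malformed case, and of spurious gadgets adjacent to $u$'s gadget is fine. One caveat: the paper states the non-overlap assumption only for encoded graphs, and both you and the paper tacitly extend it to arbitrary $G'$.

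What does not work is your boundary paragraph.

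\textbf{The bound is not available.} You cannot assume every relevant vertex is at distance at most $\Lambda-1$. Your own estimates already place the far node gadget out to distance $\Lambda$. Once the single edges joining consecutive (non-overlapping) gadgets are counted, the worst case is $\Lambda/3+1+\Lambda/3+1+\Lambda/3=\Lambda+2$.

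\textbf{The fallback is false.} Suppose the far gadget $\ngadget{w}{k}$ is truncated and not detected in the ball. Then the edge gadget joining it to $u$'s gadget is not detected either, since the decoding only recognises edge gadgets that join two detected node gadgets. Its vertices become malformed, so $u'$ acquires a $\bot$-labelled neighbour across a $\bot$-labelled edge, instead of $w$ (label $k$) across an edge labelled $p$. The damage is therefore not confined to the endpoint's label. Also, being ``at distance greater than $\Lambda/3$ from $v$'s gadget'' does not imply containment in the ball.

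\textbf{Spurious far-end gadgets are not covered.} A spurious induced copy of a node gadget can appear at the boundary, at the far end of an edge-gadget-shaped piece. It would give $u'$ a neighbour with no counterpart in $G$. Your last sentence only rules out spurious gadgets adjacent to $u$'s own gadget.

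The paper's proof does not address any of this either. The clean fix is to add constant slack to the radius and decode $\ball{\Lambda+c}{G'}{v}$, say with $c=\Lambda/3+3$. Every relevant piece then lies within distance $\Lambda+2$. Every candidate gadget meeting such a piece lies within $\tfrac{4}{3}\Lambda+2$, i.e.\ at most the radius minus one. In that region the ball's induced structure coincides with that of $G'$, so detection agrees in both directions. All later uses only need locality $O(\Lambda)$, so this costs nothing, and with it your argument goes through.
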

\begin{proof}
	In order to prove an isomorphism, we describe a mapping $\varphi: V(\ball{}{G}{u})\rightarrow V(\ball{}{G''}{u'})$ as follows. 
   We first set $\varphi(u) = u'$ and $\ssigma{G''}{}(u') = \ssigma{G}{}(u)$. The edge $\{u,w\}\in E(\ball{}{G}{u})$ if and only if $\decode{u}{\ssigma{V(G)}{}(u)}$ and $\decode{w}{\ssigma{V(G)}{}(w)}$ is connected with $\decode{\{u,w\}}{\ssigma{E(G)}{}(\{u,w\})}$. Since $\decode{u}{\ssigma{V(G)}{}(u)}$,  $\decode{w}{\ssigma{V(G)}{}(w)}$ and $\decode{\{u,w\}}{\ssigma{E(G)}{}(\{u,w\})}$ are contained completely within $\ball{\Lambda}{G'}{v}$ (as the maximum diameter of a node or an edge gadget is at most $\Lambda/3$ and the malformed nodes and edges have diameter at most $1$), the decoding procedure of  $\ball{\Lambda}{G'}{v}$ creates a neighbor $\varphi(w)\in \ball{}{G''}{u'}$ and sets $\ssigma{V(G'')}{}(\varphi(w))= \ssigma{V(G)}{}(w)$ and $\ssigma{E(G'')}{}{\{u',\varphi(w)\}} = \ssigma{E(G)}{}(\{u,w\})$  thus describing the bijective mapping.
\end{proof}


\begin{corollary}\label{cor:local_decoding_radius_T}
For any integer $T > 0$, fix an arbitrary node $v\in V(G')$ such that it is not a part of any edge gadget. Let $(G'',\ssigma{V(G'')}{},\ssigma{E(G'')}{})$ be the decoding of  $\ball{\Lambda T}{G'}{v}$. Let $u\in V(G)$ and $u'\in V(G'')$ be the decoded nodes  such that $v$ belongs to both $\decode{u}{\ssigma{G}{}(u)}$ and $\decode{u'}{\ssigma{G''}{}(u')}$.
Then
\[
\bigl( \ball{T}{G}{u}, \rsigma{V(G)}{}{\ball{T}{G}{u}}, \rsigma{E(G)}{}{\ball{T}{G}{u}} \bigr)
\cong
\bigl( \ball{T}{G''}{u'}, \rsigma{V(G'')}{}{\ball{T}{G''}{u'}}, \rsigma{E(G'')}{}{\ball{T}{G''}{u'}} \bigr).
\]
\end{corollary}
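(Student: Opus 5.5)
The plan is to generalize the proof of \cref{lemma:local_decoding} from radius $1$ to radius $T$. As there, we build an explicit map $\varphi\colon V(\ball{T}{G}{u})\to V(\ball{T}{G''}{u'})$ that preserves labels and adjacency. We define $\varphi$ layer by layer along shortest paths from $u$, and show by induction on $i\le T$ that the decoded neighbourhood of depth $i$ is faithfully reproduced.

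The key invariant is the following: for every $w\in V(G)$ with $\dist_G(u,w)=i$, the preimage $\decode{w}{\ssigma{V(G)}{}(w)}$ lies within distance $\Lambda i+\Lambda/3$ of $v$ in $G'$.

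The induction goes as follows.
\begin{itemize}
\item \emph{Base case $i=0$.} The gadget or malformed vertex containing $v$ has diameter at most $\Lambda/3$.
\item \emph{Step.} Take an edge $\{w,x\}$ of $G$ with $w$ at depth $i$. Its preimage (an edge gadget or a malformed edge) has diameter at most $\Lambda/3$, and so does the preimage of $x$. Hence every vertex of these preimages lies within distance $\Lambda i+\Lambda/3+\Lambda/3+\Lambda/3+O(1)$ of $v$.
\end{itemize}
To absorb the $+O(1)$ slack from the edges joining gadgets, I would use the fact that $\Lambda$ is \emph{thrice} the maximum diameter, or bound more carefully. The conclusion I aim for is this: for every element (vertex or edge) of $\ball{T}{G}{u}$, its full preimage together with all its incident connecting edges lies inside $\ball{\Lambda T}{G'}{v}$. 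Note that for the edges of the ball at the outermost layer we only need the endpoints at depth $<T$ to be fully enclosed, which is consistent with the ball definition.

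Given this containment, the decoding of $\ball{\Lambda T}{G'}{v}$ detects exactly the same node and edge gadgets in this region. Detection is by induced subgraphs, and an induced subgraph fully inside the ball is also induced in $G'$, and vice versa. Likewise, the same vertices are classified as malformed. So setting $\varphi(w)$ to the decoded node of the same gadget or malformed vertex gives a label-preserving bijection, and the edges of $\ball{T}{G}{u}$ correspond to the edges of $\ball{T}{G''}{u'}$.

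The main obstacle is the boundary. Near the border of $\ball{\Lambda T}{G'}{v}$, gadgets can be truncated. In $G''$ such truncated pieces become malformed vertices, or spurious small gadget matches, which are absent in $G$. I must argue that none of these enter $\ball{T}{G''}{u'}$, so that no node at $G''$-distance $\le T$ from $u'$ is an artifact. I would handle this by running the same distance induction in $G''$. A node at $G''$-distance $i\le T$ from $u'$ has its preimage within $\Lambda i+\Lambda/3$ of $v$, hence strictly inside the ball with margin at least $2\Lambda/3$ for $i<T$ (and $i=T$ only contributes vertices, not further edges). In that region the ball agrees with $G'$ on all vertices and edges within distance $\Lambda/3$, so any gadget detected there is a genuine gadget of $G'$. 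Alternatively, the corollary can be derived by applying \cref{lemma:local_decoding} iteratively at each layer, using that $\ball{\Lambda}{G'}{v_w}\subseteq\ball{\Lambda T}{G'}{v}$ for representatives $v_w$ of depth-$(T-1)$ nodes. I would pick whichever of these two arguments makes the slack constants cleaner.
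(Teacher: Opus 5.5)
Your route is the paper's: a layer-by-layer induction showing that the preimages of depth-$i$ nodes stay inside a $G'$-ball around $v$ whose radius grows by $\Lambda$ per layer. Your ``alternative'' of reapplying \cref{lemma:local_decoding} around representatives of depth-$(T-1)$ nodes is essentially the paper's inductive step. Your version is more careful than the paper in two ways. You work inside the single fixed ball $\ball{\Lambda T}{G'}{v}$, whereas the paper's induction silently passes from the decoding of $\ball{\Lambda(i-1)}{G'}{v}$ to that of $\ball{\Lambda i}{G'}{v}$. You also rule out boundary artifacts in $G''$, which the paper never discusses.

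As written, however, your handling of the outermost layer does not go through. Nodes at $G$-distance exactly $T$ from $u$ belong to the labeled ball $\ball{T}{G}{u}$, together with their labels and the labels of the edges joining them to depth $T-1$. For these to reappear in $G''$, two things must be detected in the decoding of $\ball{\Lambda T}{G'}{v}$: the full node gadget of each such node, and the edge gadget leading to it. If the node gadget is cut by the boundary, the decoding no longer sees that node and the incoming edge gadget becomes malformed. In $G''$ the depth-$(T-1)$ node then has $\bot$-labeled neighbors through $\bot$-labeled edges, and the isomorphism fails. Your invariant only places depth-$T$ preimages within distance $\Lambda T+\Lambda/3$ of $v$, which may lie outside the ball. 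The remark that ``$i=T$ only contributes vertices'' does not rescue this, since those vertices and their labels are exactly what is at stake. Relatedly, $\ball{r}{G'}{v}$ omits edges between two vertices at distance exactly $r$, so ``induced in the ball if and only if induced in $G'$'' fails precisely for vertex sets touching that sphere.

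The fix is to keep the sharper bound that your own step computation gives. Each layer costs only two gadget diameters plus the $O(1)$ connecting edges, so depth-$i$ preimages lie within $\Lambda/3+i(2\Lambda/3+O(1))$ of $v$. For $i\ge 2$ (and $\Lambda$ above a small constant) this is strictly less than $\Lambda i$, so you recover the paper's invariant with room to spare and the depth-$T$ gadgets sit strictly inside the ball. The spare third of $\Lambda$ per layer is exactly what absorbs the connecting edges. For $T=1$ the bound is $\Lambda+O(1)$. That additive slack is already ignored in \cref{lemma:local_decoding} itself and is cured by enlarging $\Lambda$ by a constant.
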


\begin{proof}
		We prove the claim by induction. For base case, $T=1$ and it follows immediately from \cref{lemma:local_decoding}. Let the claim be true for $T=i-1$ for some integer $i>1$. Let the isomorphism between $\ball{i-1}{G}{u}$ and  $\ball{i-1}{G''}{u'}$ preserving the node and edge labeling  be $\varphi$. Let $w\in V(G)$ such that $\dist_{G}(u,w) = i-1$. By inductive hypothesis, $\decode{w}{\ssigma{V(G)}{}(w)}$ is contained within $\ball{\Lambda(i-1)}{G'}{v}$.  For every node $q\in \ball{}{G}{w}$, $\decode{q}{\ssigma{V(G)}{}(q)}$ and $\decode{\{w,q\}}{\ssigma{E(G)}{}(\{w,q\})}$  are contained completely within $\ball{\Lambda i}{G'}{v}$ due to the same reason as in \cref{lemma:local_decoding} and therefore, the decoding procedure creates a neighbor $\varphi(q)$ of $\varphi(w)$ in $G''$. This defines an extension of $\varphi$ as an isomorphism between $\ball{i}{G}{u}$ and $\ball{i}{G''}{u'}$  proving the hypothesis.
\end{proof}

\begin{lemma}[\local Decoding] \label{lemma:local_decoding_procedure}
	There exists a deterministic \local algorithm with locality $O(\Lambda)$ that, running on a node $v$ in  $G'$, decides whether it is a part of $\decode{u}{\ssigma{V(G)}{}(u)}$ for some $u\in V(G)$. If it is, then it recognizes a unique representative node  which is consistently recognized by every node in $\decode{u}{\ssigma{V(G)}{}(u)}$. If $v$ recognizes itself, then it outputs the following:
	\begin{enumerate}
		\item Output the tuple $(u,\ball{}{G}{u},\rsigma{V(G)}{}{\ball{}{G}{u}},\rsigma{E(G)}{}{\ball{}{G}{u}})$.\label{p1}
		\item Assign a unique identifier to $u$ and  every other node $\ball{}{G}{u}$ represented as a function  $\id'$ whose co-domain  is bounded by $\poly(|V(G')|)$.\label{p2}
		\item For each node $w\in V(\ball{}{G}{u})$, node $v$ stores a path $P_{w:v,x_w}$ between nodes $v$ and $x_w$ in $G'$ where  $x_w$ is the representative node of $\decode{w}{\ssigma{V(G)}{}(w)}$.\label{p3} 
	\end{enumerate}
\end{lemma}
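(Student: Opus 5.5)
We plan to build the algorithm in three phases, each of constant locality in $\Lambda$. In the first phase, node $v$ gathers $\ball{\Lambda}{G'}{v}$ (or a slightly larger ball, say radius $2\Lambda$, to have enough margin) together with the unique identifiers of all nodes in it. It then runs the decoding procedure of \cref{sec:graph_decoding} on this ball, obtaining a labeled (multi)graph $G''$. By the decodability assumptions, every vertex of $G'$ lies in at most one gadget. Node and edge gadgets have diameter at most $\Lambda/3$, and malformed vertices are singletons. It follows that whether $v$ belongs to a detected node gadget $\decode{u}{\ssigma{V(G)}{}(u)}$, to an edge gadget, or is malformed can be decided from the ball, because every gadget containing $v$ lies entirely within distance $\Lambda/3$ of $v$. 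Hence $v$ decides correctly whether it is part of $\decode{u}{\ssigma{V(G)}{}(u)}$ for some $u$. Here we treat a malformed vertex as its own singleton node object.

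For the representative, we define it as the node of minimum identifier in $\decode{u}{\ssigma{V(G)}{}(u)}$. Every node $x$ of this set also sees the whole set inside its radius-$\Lambda$ ball and detects the same gadget, so all of them pick the same representative; this gives consistency. The main subtlety is making sure gadget detection is \emph{canonical}: two nodes of the same gadget must not detect different overlapping induced copies. We get this from the decodability assumption that detected gadgets are disjoint and unique. Since detection depends only on the induced subgraph within distance $\Lambda/3$ of the gadget, any node within the gadget sees the same candidate copies and resolves them identically.

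If $v$ is the representative, we produce the three outputs. For item~\ref{p1}, we apply \cref{lemma:local_decoding}: the decoding of $\ball{\Lambda}{G'}{v}$ yields $\ball{}{G''}{u'}$ isomorphic, with labels, to $\ball{}{G}{u}$, so $v$ outputs this isomorphic copy. For item~\ref{p2}, each neighbor $w$ of $u$ in $\ball{}{G}{u}$ gets as identifier the original identifier of the representative $x_w$ of its gadget. This representative is computable by $v$ because the whole gadget of $w$ lies in $\ball{\Lambda}{G'}{v}$ (by the containment argument in \cref{lemma:local_decoding}). These identifiers are globally consistent, since every node computes the same representative, and they are injective because gadgets are disjoint. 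They are also bounded by $\poly(|V(G')|)$, being original identifiers. For item~\ref{p3}, $v$ takes a shortest path from $v$ to $x_w$ inside the collected ball, which has length $O(\Lambda)$. Altogether the locality is $O(\Lambda)$.

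We expect the main obstacle to be the consistency step: ensuring that the ball-local decoding and the global decoding agree on gadget membership near the boundary, and that the representative choice is the same for all members. We plan to handle this by enlarging the collection radius to $2\Lambda$, so that every gadget touching a relevant node lies strictly inside the view.
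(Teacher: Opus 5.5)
Your proposal is correct and follows essentially the same route as the paper's proof. In both, you collect a constant-radius ball, decode it locally, and pick the minimum-identifier vertex of the gadget as the consistently recognized representative. Item~1 then comes from \cref{lemma:local_decoding}, item~2 from the representatives' original identifiers, and item~3 from shortest paths to those representatives. Collecting radius $2\Lambda$ instead of the paper's $\Lambda$ only changes the constant in the $O(\Lambda)$ locality, and it also absorbs the connecting edges between consecutive gadgets that the paper's diameter count glosses over.
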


\begin{proof}
	We describe a deterministic LOCAL procedure executed at node $v$ as follows.  
	\begin{enumerate}
		\item Node $v$ gathers its radius-$\Lambda$ neighborhood $\ball{\Lambda}{G'}{v}$. Since every node gadget and edge gadget has diameter at most $\Lambda/3$, this view contains the entire gadget (if any) that $v$ belongs to and hence, decides whether it is a part of $\decode{u}{\ssigma{V(G)}{}(u)}$ for some node $u\in V(G)$.

		\item Let us assume that the node  $v$ belongs to  $\decode{u}{\ssigma{V(G)}{}(u)}$. The interesting case is when $\decode{u}{\ssigma{V(G)}{}(u)}$ is the node gadget $\ngadget{u}{\ssigma{V(G)}{}(u)}$. Then $v$ recognizes the node with the minimum $\id$ in the node gadget as the representative node. This choice is consistent across all nodes of $\decode{u}{\ssigma{V(G)}{}(u)}$. It does the same with $\decode{w}{\ssigma{V(G)}{}(w)}$ for each $w\in \ball{}{G}{u}$.

		\item Let us assume that the node $v$ has the minimum $\id$ in $\ngadget{u}{\ssigma{V(G)}{}(u)}$. It  then applies the graph decoding procedure on  $\ball{\Lambda}{G'}{v}$. Let the decoded graph-labeling pair be $(G'',\ssigma{V(G'')}{},\ssigma{E(G'')}{})$ and let  $u'\in V(G'')$ be a node such that $v\in \ngadget{u'}{\ssigma{V(G'')}{}(u')}$. The node $v$ then outputs  $(u,\ball{}{G''}{u'},\rsigma{V(G'')}{}{\ball{}{G''}{u'}}, \rsigma{E(G'')}{}{\ball{}{G''}{u'}})$ and sets $\id'(u')\coloneqq\id(v)$. 
		
		\item For each $w'\in V(\ball{}{G''}{u'})$, let $x_{w'}$ be the representative node of $\decode{w'}{\ssigma{V(G'')}{}(w')}$. Set $P_{w':v,x_{w'}}$ as the shortest path between $v$ and $x_{w'}$ in $G'$.  
	\end{enumerate}
	By \cref{lemma:local_decoding}, we have that \[(\ball{}{G}{u},\rsigma{V(G)}{}{\ball{}{G}{u}}, \rsigma{E(G)}{}{\ball{}{G}{u}})\cong   ( \ball{}{G''}{u'},\rsigma{V(G'')}{}{\ball{}{G''}{u'}}, \rsigma{E(G'')}{}{\ball{}{G''}{u'}}).\] By the isomorphism and the description of the algorithm, items~\ref{p1},~\ref{p2} and~\ref{p3}  of the lemma follow immediately.
\end{proof}

\subsubsection{LOCAL encoding}

As in local decoding, we present lemmas which encapsulate the local version of graph encoding. As specified in \cref{sec:graph_encoding}, we begin with $(G_1,\ssigma{V(G_1)}{},\ssigma{E(G_1)}{})$ where  $G_1$ has max. degree at most $d$ and its encoding $G'_1$ which is $3$-regular.


\begin{lemma}[Locality of encoding]\label{lem:locality_of_encoding}
For any integer $T\ge 0$ and any vertex $x\in V(G_1)$, let $H'$ be the encoded graph of $(\ball{T+1}{G_1}{x},\rsigma{V_1}{}{\ball{T+1}{G_1}{x}},\rsigma{E_1}{}{\ball{T+1}{G_1}{x}})$.
Let $a$ be an arbitrary vertex of the node gadget $\ngadget{x}{\ssigma{V_1}{}(x)}$ in $G'_1$, and let $a_H$ denote the corresponding vertex of the same node gadget 
in $H'$. Then $\ball{T}{G'_1}{a}\ \cong\ \ball{T}{H'}{a_H}$.

\end{lemma}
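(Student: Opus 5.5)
The plan is to build an explicit isomorphism by exploiting the fact that the encoding is \emph{local}. Each gadget $\ngadget{y}{\ssigma{V_1}{}(y)}$ is determined by the label of $y$ and its degree. Each edge gadget $\egadget{e}{\ssigma{E_1}{}(e)}$ is determined by the label of $e$ and attaches only to the node gadgets of its endpoints. Write $B = \ball{T+1}{G_1}{x}$. The key observation is that $H'$ is canonically identified with the subgraph of $G'_1$ consisting of the node gadgets of vertices of $B$ and the edge gadgets of edges of $B$. The one caveat concerns vertices $y$ at distance exactly $T+1$ from $x$, whose degree in $B$ may be smaller than in $G_1$, so their gadgets in $H'$ may differ from those in $G'_1$ (to keep $H'$ $3$-regular). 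For all $y$ with $\dist_{G_1}(x,y)\le T$, the degree in $B$ equals the degree in $G_1$, since all incident edges lie in $B$ by the definition of balls. Together with equal labels, this means their node gadgets and all incident edge gadgets coincide in the two encodings.

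The steps, in order, are as follows.
\begin{enumerate}
\item Define $\psi$ on the union $U$ of node gadgets of vertices $y$ with $\dist_{G_1}(x,y)\le T$ and edge gadgets of edges with at least one such endpoint (the latter restricted to their gadget vertices), sending each gadget vertex to its copy in $H'$.
\item Check that $\psi$ preserves adjacency and non-adjacency among vertices of $U$. This uses the fact that the edges of $G'_1$ inside $U$ are exactly the gadget-internal edges plus the attachment edges between an edge gadget and its endpoint node gadgets, and that these attachments are reproduced identically in $H'$ because the gadget types agree.
\item Show $V(\ball{T}{G'_1}{a})\subseteq U$ up to the frontier, and likewise in $H'$. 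The argument is a distance argument: any path in $G'_1$ leaving $a$ and reaching a gadget of a vertex at $G_1$-distance $T+1$ must traverse, for each step in $G_1$, at least one edge gadget. Each edge gadget has at least one vertex (so contributes length at least $1$), and the path must also cross the attachment edges into and out of it. Hence reaching the node gadget of a vertex at distance $T+1$ takes more than $T$ steps, and the same count applies in $H'$.
\item Conclude that $\psi$ restricted to $\ball{T}{G'_1}{a}$ is an isomorphism onto $\ball{T}{H'}{a_H}$. Distances from $a$ are preserved because shortest paths of length at most $T$ stay inside the common region, and the edge-set convention of balls (edges with an endpoint at distance $<T$) is respected because neighbors of vertices at distance $<T$ still lie in the common region.
\end{enumerate}

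The main obstacle is the distance bookkeeping in the third step, which requires showing that one $G_1$-step costs at least one $G'_1$-step even though gadgets are arbitrary. The paper has not yet specified the gadgets, so I would lean on the generic facts that an edge gadget separates its two node gadgets (nodes of distinct node gadgets are non-adjacent in $G'_1$) and that every vertex lies in at most one gadget (the decodability assumptions). If edge gadgets could be empty, i.e.\ a direct edge, the bound would still hold, since passing between distinct node gadgets needs at least one edge. In that case I would simply count edges between distinct node gadgets along the path, which is at least $\dist_{G_1}$.
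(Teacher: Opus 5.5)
Your proposal is correct and follows essentially the same route as the paper. Both arguments build a gadget-by-gadget isomorphism between the common part of $G'_1$ and $H'$, then observe that the radius-$T$ ball around $a$ cannot leave that common part, because each step in $G_1$ costs at least one edge in the encoding. Your restriction of the domain to node gadgets of vertices at $G_1$-distance at most $T$ (plus the edge gadgets of edges of $\ball{T+1}{G_1}{x}$) is in fact more careful than the paper's proof. The paper asserts that the node gadgets of \emph{all} vertices of $\ball{T+1}{G_1}{x}$ coincide in both encodings, which overlooks that boundary vertices can have smaller degree in the ball and hence receive a different gadget.
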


\begin{proof}
Let $S\subseteq V(G'_1)$ be the set of vertices that belong to gadgets corresponding to vertices or edges of $\ball{T+1}{G_1}{x}$. By the encoding assumptions from \cref{sec:graph_encoding} (gadgets are disjoint, and each vertex/edge of $G_1$ gives rise to a unique gadget whose internal structure is determined by its label and degree), for every $y\in V(\ball{T+1}{G_1}{x})$ the node gadget $\ngadget{y}{\ssigma{V_1}{}(y)}$ appears in $G'_1$ and in $H'$ as two copies of the same fixed finite graph; likewise for every edge $e\in E(\ball{T+1}{G_1}{x})$ the edge gadget $\egadget{e}{\ssigma{E_1}{}(e)}$ appears in both graphs as two copies of the same fixed finite graph.

For each such node gadget (resp.\ edge gadget), fix an isomorphism between its copy in $G'_1$ and its copy in $H'$. Since gadgets are vertex-disjoint, these gadget-wise isomorphisms together define a bijection $\varphi : S \to V(H')$ by setting $\varphi(v)$ to be the image of $v$ under the isomorphism of the (unique) gadget containing $v$. This defines an isomorphism between the induced subgraph $G'_1[S]$ and the corresponding induced subgraph of $H'$. Indeed, $\varphi$ preserves edges inside each gadget by construction; and it preserves edges between
gadgets because inter-gadget edges are created by the encoding solely according to incidences in $G_1$, and $H$
contains exactly the incidences among vertices/edges of $\ball{T+1}{G_1}{x}$.

Now let $a\in V(G'_1)$ be any vertex in the node gadget of $x$, and define $a_H := \varphi(a)$.
Since $\ball{T}{G'_1}{a}$ can only reach gadgets corresponding to vertices/edges within distance at most $T+1$ from $x$
in $G_1$, we have $\ball{T}{G'_1}{a}\subseteq G'_1[S]$. Therefore, the restriction of $\varphi$ to
$\ball{T}{G'_1}{a}$ is a bijection onto $\ball{T}{H'}{a_H}$ and preserves adjacency proving the isomorphism.
\end{proof}

\subsection{Simulation of algorithms}\label{sec:simulation_of_algorithms}

Given an encoding-decoding scheme that follows the framework of \cref{sec:graph_encoding,sec:graph_decoding}, let $G'$ be a $3$-regular graph and its decoding be  $(G,\ssigma{V(G)}{},\ssigma{E(G)}{})$ where $G$ belongs to a subfamily of graphs with degree at most $d$ and $\ssigma{V(G)}{}:V(G)\rightarrow [k_1]$ and $\ssigma{E(G)}{}:E(G)\rightarrow [k_2]$ for some $k_1,k_2>0$ are the decoded node and edge labelings of $G$. Let $G''$ be the perfect  encoding of  $(G,\ssigma{V(G)}{},\ssigma{E(G)}{})$. Observe that the way the encoding-decoding framework is defined, $G'$ and $G''$ are isomorphic only when $G'$ is a perfect encoding and does not have any malformed parts. 

Let there be algorithms $\mathcal{A}_1$ and $\mathcal{A}_2$ such that the following holds for some integer $k>0$.
\begin{enumerate}
	\item $\mathcal{A}_1$ runs on $(G,\ssigma{V(G)}{},\ssigma{E(G)}{})$ producing an output labeling $\ssigma{G}{out} : V(G)\rightarrow [k]$ and has locality~$T_1$.
	\item $\mathcal{A}_2$ runs on $G''$ producing an output labeling $\ssigma{G''}{out} : V(G'')\rightarrow [k]$ and has locality~$T_2$.
\end{enumerate}

We now define a lift of the output labelings from $G$ and $G''$ to $G'$ and $G$ respectively as follows.

\paragraph{\boldmath Lifting $\ssigma{G}{out}$ to $G'$.}
We define a labeling $\ssigma{G'}{out} : V(G')\rightarrow [k]$ as follows.

\begin{enumerate}
	\item  For each node $v\in V(G)$ and for each node $u$ in $\decode{v}{\ssigma{V(G)}{}(v)}$, we have that $\ssigma{G'}{out}(u) = \ssigma{G}{out}(v)$.
	\item Every other node $u'$ in $G'$ is assigned an arbitrary but fixed default value $\bot\in [k]$, i.e. $\ssigma{G'}{out}(u') = \bot$. 
\end{enumerate}

\paragraph{\boldmath Lifting $\ssigma{G''}{out}$ to $G$.}
The labeling $\ssigma{G}{out}:V(G)\rightarrow [k]$ is defined as follows : for every node $v\in V(G)$ and an arbitrary node $u\in \ngadget{v}{\ssigma{V(G)}{}(v)}$, $\ssigma{G}{out}(v) = \ssigma{G''}{out}(u)$. Furthermore, we make the following assumption on $\ssigma{G''}{out}$ that every vertex in every node-gadget is assigned the same output label and every node in every edge-gadget is assigned a fixed but arbitrary label $\bot\in [k]$.

These lifts can be computed by simulating algorithms $\mathcal{A'}_1$ and $\mathcal{A'}_2$ with locality $T_1' = \Lambda T_1$ and $T_2' = T_2$ respectively which simulate $\mathcal{A}_1$ and $\mathcal{A}_2$ respectively. In the next few subsections, we describe the simulations in various models. 

\subsubsection{Simulation in \local}
\paragraph{\boldmath Description of $\mathcal{A'}_1$.}
The algorithm $\mathcal{A'}_1$  running on node $v\in V(G')$ first applies \local decoding procedure (\cref{lemma:local_decoding_procedure}). If $v$ belongs to an edge-gadget, $\ssigma{G'}{out}(v)$ is set as fixed but an arbitrary default value $\bot\in [k]$.  Otherwise, it belongs to $\decode{u}{\ssigma{V(G)}{}(u)}$ for some node $u\in V(G)$. If it is not the representative node, then it remains idle for the rest of the simulation rounds and finally takes in the output label of the representative node. Else, $v$ simulates one communication round of the $\mathcal{A}_1$ on node $u$ and  node $w\in N_G(u)$ via the path $P_{w:v,x_w}$ (produced by \cref{lemma:local_decoding_procedure}) where $x_w$ is the representative node of $\decode{w}{\ssigma{V(G)}{}(w)}$. Since  the length of $P_{w:v,x_w}$ is bounded by $\Lambda$, one communication round of $\mathcal{A}_1$ is simulated by $\mathcal{A'}_1$ in at most $\Lambda$ rounds and therefore, in $O(\Lambda T)$ rounds, node $v$ simulates $T$ rounds of $\mathcal{A}_1$ on $u$ and sets $\ssigma{G'}{out}(v) \coloneqq \ssigma{G}{out}(u)$ and finally broadcasts it to all other nodes of $\decode{u}{\ssigma{V(G)}{}(u)}$.

\paragraph{\boldmath Description of $\mathcal{A'}_2$.}
The algorithm $\mathcal{A'}_2$ running on node $v\in V(G)$  collects \[(\ball{T_2}{G}{v},\rsigma{V(G)}{}{\ball{T_2}{G}{v}},\rsigma{E(G)}{}{\ball{T_2}{G}{v}})\] and applies graph encoding on it to extract $\ball{T_2}{G''}{a}$ where the node $a$ is an arbitrary vertex in $\ngadget{v}{\ssigma{V(G)}{}(v)}$ and simulates $\mathcal{A}_2$ and sets $\ssigma{G}{out}(v) = \ssigma{G''}{out}(a)$. 
\subsubsection{Simulation in \slocal}
\paragraph{Description of $\mathcal{A'}_1$.}
Let  $p$ be the adversarial order of vertices  of $G'$ that the algorithm $\mathcal{A'}_1$ processes. This implicitly defines an order $p'$ of vertices in $G$ as follows: for each vertex $v_i\in V(G')$ accessed according to the ordering of $p$, if $v_i\in \decode{u}{\ssigma{V(G)}{}(u)}$ and $\{v_1,\ldots, v_{i-1}\}\cap\decode{u}{\ssigma{V(G)}{}(u)}=\varnothing$, then insert $u$ in $p'$. Let $p = (v_1,\ldots, v_{|V(G')|})$ and $p' = (u_1,\ldots,u_{|V(G)|})$.

The algorithm $\mathcal{A'}_1$ processing the nodes of $G'$ according to $p$ is equivalent to simulating $\mathcal{A'}_1$ on $G$ according to $p'$. While processing vertex $v_i$, the algorithm $\mathcal{A'}_1$ first checks if $v_i$ belongs to an edge gadget and sets $\ssigma{G'}{out}(v_i) = \bot$ if it is. Else, it applies \cref{cor:local_decoding_radius_T} to get \[\bigl( \ball{T}{G}{u_j}, \rsigma{V(G)}{}{\ball{T}{G}{u_j}}, \rsigma{E(G)}{}{\ball{T}{G}{u_j}} \bigr),\] where $v_i\in \decode{u_j}{\ssigma{V(G)}{}(u_j)}$.
Also, for each $x\in \ball{T}{G}{u_j}\cap \{u_1,\ldots,u_j\}$, algorithm $\mathcal{A'}_1$ sets $\ssigma{G}{out}(x) = \ssigma{G'}{out}(y)$ where $y\in \{v_1,\ldots,v_i\}$ such that $y\in \decode{x}{\ssigma{V(G)}{}(x)}$. It then simulates algorithm $\mathcal{A}_1$ on  $\bigl( \ball{T}{G}{u_j}, \rsigma{V(G)}{}{\ball{T}{G}{u_j}}, \rsigma{E(G)}{}{\ball{T}{G}{u_j}} \bigr)$ and the partial function $\ssigma{G}{out}$ restricted to $\{u_1,\ldots,u_j\}$ and computes $\ssigma{G}{out}(u_j)$ and sets $\ssigma{G'}{out}(v_i) = \ssigma{G}{out}(u_j)$.

\paragraph{Description of $\AA'_2$.} Let $p=(u_1,u_2,\dots, u_{|V(G)|})$ be the adversarial order of vertices for $G$ that algorithm $\AA_2'$ processes. The algorithm works as follows: following the order provided by the adversary, it processes one node at the time. While processing node $u_i$, the algorithm $\AA_2'$ collects the radius $T_2$ neighborhood around $u_i$ in $G$ and expands it to the relative radius $T'$ ball in $G'$ and selects a representative gadget node $v_i^r$ (as the main computing unit) among the $j$ gadgets $v_1,\dots v_j$ representing $u_i$ in $G'$ in addition, it lifts all the $\sigma^{\oupt}_G(w)$ for all nodes $w$ in its $T_2$ radius neighborhood in $G'$ whose out labels have been already set. Subsequently, it checks whether there is already another gadget node $v_j$ (among the gadget nodes representing $u_i$) in $G'$ for which their states have already been set to an output label. If that is the case, node $u_i$ sets its own output label $\sigma_{G}^{\oupt}(u_i)$ equal to that one. Otherwise, the node $u_i$ simulates $\AA_2$   on that representative node in $G'$ and sets its output label $\sigma_G^{\oupt}(u_i) = \sigma_{G'}^{\oupt}(v_i^r)$.

\subsubsection{Simulation in \nonsign model}

\paragraph{\boldmath Description of $\mathcal{A}'_1$.}
We assume there is a $T_1$-non-signaling distribution $\outcome(G)$ for $(G,\ssigma{V(G)}{},\ssigma{E(G)}{})$.

	\subparagraph{Construction of the outcome.} For graph $G'$,  define $\outcome(G')$ by the following sampling process:
	\begin{itemize}[noitemsep]
			\item[1)] Sample $\sigma_{G}^{\oupt}\sim \outcome(G)$
			\item[2)] Lift $\ssigma{G}{out}$ to $\ssigma{G'}{out}$. 
	\end{itemize}

	\subparagraph{Non-signaling argument.}
	We prove that $\outcome(G')$ is non-signaling beyond $T_1'=\Lambda T_1$.
	Let $A\subseteq V(G')$ be any node set, and consider another $3$-regular graph $G'_1$ such that $A'\subseteq V(G'_1)$ 
	such that
	\begin{align}\label{eq:isomorphic_view}
		\View_{T_1'}^{G'}(A)\cong \View_{T_1'}^{G'_1}(A').
	\end{align}
	We show that $\outcome(G')[A]\cong \outcome(G'_1)[A']$.
	Let $(G_1,\ssigma{V(G_1)}{},\ssigma{E(G_1)}{})$ be the decoded graph of $G'_1$.
Define $\tilde{A} \coloneqq \{u\in V(G)\ \mid\ \decode{u}{\ssigma{V(G)}{}(u)}\cap A\neq \varnothing\}$ and $\tilde{A'}$ is defined in the same way w.r.t $G_1$.
	By applying \cref{cor:local_decoding_radius_T,eq:isomorphic_view}, we have that 

	\[
		\View_{T}^{G}(\tilde{A})\cong \View_T^{G_1}(\tilde{A'}).
	\]
	and by our assumption, $\outcome(G)[\tilde{A}]\cong \outcome(G_1)[\tilde{A'}]$

		By the definition of the lift, we have that for each node $w\in A$ such that $w\in \decode{x}{\ssigma{V(G)}{}(x)}$ for some node $x\in V(G)$, $\ssigma{G'}{out}(w)= \ssigma{G}{out}(x)$ and every other node in $A$ is labeled $\bot$ and the same happens for each node $y\in A'$.  Since the lifted labels of $A$ only depends on $\ssigma{G}{out}(\tilde{A})$ and therefore any further decoded structure is irrelevant for the marginal on $A$, and therefore,  $\outcome(G')[A]\cong \outcome(G'_1)[A']$.

\paragraph{\boldmath Description of $\mathcal{A}'_2$.}

We assume there is a $T_2$ non-signaling outcome for $G''$.

	\subparagraph{Construction of the outcome.} For graph $G$,  define $\outcome(G)$ by the following sampling process:
	\begin{itemize}[noitemsep]
			\item[1)] Sample $\sigma_{G''}^{\oupt}\sim \outcome(G'')$
			\item[2)] Lift $\ssigma{G''}{out}$ to $\ssigma{G}{out}$. 
	\end{itemize}
	
We prove that $\outcome(G)$ is non-signaling beyond $T_2$. Let $B\subseteq V(G)$ be any node set and consider another labeled graph $(G_1,\ssigma{V(G_1)}{},\ssigma{E(G_1)}{})$ such that there exists a set $B_1\subseteq V(G_1)$ such that 
\begin{align}\label{eq:iso2}
	\View_{T_2}^G(B)\cong \View_{T_2}^{G_1}(B_1)
\end{align}
We show that $\outcome(G)[B]\cong \outcome(G_1)[B_1]$.

Let $G'_1$ be the encoding of $(G_1,\ssigma{V(G_1)}{},\ssigma{E(G_1)}{})$.
Define $\tilde{B} \coloneqq \bigcup\limits_{x\in B}\decode{x}{\ssigma{V(G)}{}(x)}$ and  $\tilde{B_1}$ is defined in the same way w.r.t $G_1$.  Since~\eqref{eq:iso2} holds, by \cref{lem:locality_of_encoding}, we have that $\View_{T_2}^{G''}(\tilde{B})\cong \View_{T_2}^{G'_1}(\tilde{B_1})$ and by our assumption $\outcome(G'')[\tilde{B}]\cong \outcome(G'_1[\tilde{B_1}])$. By the definition of the lift and the assumption on the output labelings, we have that $\outcome(G)[B]\cong \outcome(G_1)[B_1]$.

\subsubsection{Simulation in \findept model}

\paragraph{\boldmath Simulation of $\AA'_1$.}
We assume that there is a $T_1$-dependent outcome $\outcome(G)$ for the decoded instance
$(G,\sigma_V(G),\sigma_E(G))$.

\subparagraph{Construction of the outcome.}
For a 3-regular graph $G'$, define $\outcome(G')$ by:
\begin{itemize}[noitemsep]
		\item Sample $\sigma^{\oupt}_G \sim \outcome(G)$.
	\item Lift $\sigma^{\oupt}_G$ to $\sigma^{\oupt}_{G'}$.
\end{itemize}

\subparagraph{\boldmath $T_1'$-dependence argument.}
Let $A,B\subseteq V(G')$ be any two sets with $\dist_{G'}(A,B)>T_1'$.
Define
\begin{align*}
	\widetilde{A} &\coloneqq \{u\in V(G)\ \mid\ \decode{u}{\ssigma{V(G)}{}(u)}\cap A\neq \varnothing\},\\
 \widetilde{B} &\coloneqq \{u\in V(G)\ \mid\ \decode{u}{\ssigma{V(G)}{}(u)}\cap B\neq \varnothing\}.
\end{align*}
Then $\dist_G(\widetilde{A},\widetilde{B})>T_1$.
Since $\outcome(G)$ is $T$-dependent, $\sigma^{\oupt}_G[\widetilde{A}]$ and $\sigma^{\oupt}_G[\widetilde{B}]$ are independent.
By definition of the lift, $\sigma^{\oupt}_{G'}[A]$ is a deterministic function of
$\sigma^{\oupt}_G[\widetilde A]$ (and the fixed instance $G'$), and similarly
$\sigma^{\oupt}_{G'}[B]$ is a deterministic function of $\sigma^{\oupt}_G[\widetilde B]$.
Therefore $\sigma^{\oupt}_{G'}[A]$ and $\sigma^{\oupt}_{G'}[B]$ are independent, i.e., $\outcome(G')$ is
$T'$-dependent.

\paragraph{\boldmath Simulation of $\AA_2'$.}

We assume that there is a $T_2$-dependent outcome $\outcome(G'')$ for the perfect encoding~$G''$.

\subparagraph{Construction of the outcome.} For the decoded labeled graph $(G,\sigma_V(G),\sigma_E(G))$,
define $\outcome(G)$ by
\begin{itemize}[noitemsep]
		\item Sample $\sigma^{\oupt}_{G''}\sim \outcome(G'')$.
		\item Define $\sigma^{\oupt}_G$ by lifting $\sigma^{\oupt}_{G''}$ to $G$.
\end{itemize}

\subparagraph{\boldmath $T_2$-dependence argument.}
Let $B,C\subseteq V(G)$ with $\dist_G(B,C)>T_2$.
Define the corresponding gadget-sets in $G''$ by $\widetilde{B} \coloneqq \bigcup\limits_{x\in B}\decode{x}{\ssigma{V(G)}{}(x)}$ and $\widetilde{C} := \bigcup_{x\in C}\decode{x}{\ssigma{V(G)}{}(x)}$.
In a perfect encoding, contracting each node gadget of $G''$ to a single vertex yields (a copy of) $G$,
so distances cannot decrease under this contraction. Hence
\[
\dist_{G''}(\widetilde B,\widetilde C)\ \ge\ \dist_G(B,C)\ >\ T_2.
\]
Since $\outcome(G'')$ is $T_2$-dependent, the random labelings
$\sigma^{\oupt}_{G''}[\widetilde{B}]$ and $\sigma^{\oupt}_{G''}[\widetilde{C}]$ are independent.
By construction, $\sigma^{\oupt}_G[B]$ is a deterministic function of $\sigma^{\oupt}_{G''}[\widetilde{B}]$, and similarly for $\sigma^{\oupt}_G[C]$. Therefore $\sigma^{\oupt}_G[B]$ and $\sigma^{\oupt}_G[C]$ are independent.
This holds for all $B,C$ with $\dist_G(B,C)>T_2$, so $\outcome(G)$ is $T_2$-dependent.

\subsubsection{Simulation in \qlocal}

\paragraph{\boldmath Description of $A'_1$.}
The description of the algorithm is identical to the one for \local with the difference that: (1) the model is message passing, (2) the representative node runs a quantum algorithm, and (3) the edge-gadget nodes lying on the designated path for transmitting qubits apply a $\SWAP$ operator to forward qubits on such a path.

\paragraph{\boldmath Description of $\AA_2'$.}
Each node $u\in V(G)$ locally constructs a virtual copy of the constant-size gadget
vertices that it should take care of: all vertices in the node gadget of $u$, and for every incident edge
$\{u,w\}$ a deterministically chosen part (or all of it) of the corresponding edge gadget, so that every gadget vertex of the encoded structure is owned by exactly one endpoint.
In every round, node $u$ applies to its owned virtual gadget vertices exactly the same unitaries and measurements that $\AA_1'$ would apply at those vertices. Whenever $\AA_1'$ would transmit qubits across an edge whose endpoints are owned by different nodes $u$ and $w$, node $u$ sends those qubits directly to $w$ over the physical edge $\{u,w\}$ in one round. Thus, one round in $\AA_2'$ corresponds to one round in $\AA_1'$. This implies that the number of rounds needed by $\AA_2'$ is the same as $\AA_1'$ up to a constant additive factor needed for the decoding step.

\subsection{Symmetry breaking and above-symmetry-breaking complexity}\label{sec:symmetry_breaking}

By \emph{symmetry breaking} we refer to the task of finding a distance-$k$ vertex coloring with $O(1)$ colors, for a constant $k$, in bounded-degree graphs. Note that here the number of colors can depend on distance $k$ and maximum degree $\Delta$.

Recall that distance-$k$ coloring is the canonical task that allows us to solve all LCL problems that have locality $O(\log^* n)$ in the deterministic \local model: for any such LCL problem $\Pi$, there is a constant $k$ such that an oracle that finds a distance-$k$ coloring, followed by constant-locality post-processing, is sufficient to solve $\Pi$ \cite{chang-kopelowitz-pettie-2019-an-exponential-separation}. Furthermore, being able to find distance-$k$ coloring for \emph{some} $k$ easily translates to the ability to find distance-$k$ coloring for \emph{any} $k$, through a simple local reduction (in essence, find a coloring in the power graph).

For every model \(\mathcal{M}\) in our definitions, we define a function \(\diamondsuit(\mathcal{M})\) that is the symmetry-breaking complexity in this model. That is, asymptotic locality \(\diamondsuit(\mathcal{M})\) is sufficient and necessary to find a distance-$k$ coloring in model \(\mathcal{M}\). We recall the well-known results:
\begin{itemize}
    \item In the det-\local and rand-\local models, the symmetry-breaking complexity is $\Theta(\log^* n)$: we can use the Cole--Vishkin technique \cite{cole-vishkin-1986-deterministic-coin-tossing-with} to break symmetry, and this is optimal \cite{linial-1992-locality-in-distributed-graph-algorithms,naor-1991-a-lower-bound-on-probabilistic-algorithms-for}.
    \item In the \slocal and \olocal models, the symmetry-breaking complexity is $O(1)$: the sequential processing order trivially breaks symmetry (we can find a coloring greedily).
    \item In the \nonsign and bounded-dependence models, the symmetry-breaking complexity is $O(1)$. This is non-trivial, but we can exploit finitely dependent colorings from \cite{holroyd-liggett-2016-finitely-dependent-coloring} and its adaptation to any symmetry-breaking LCL from \cite{akbari-coiteux-roy-etal-2025-online-locality-meets}.
    \item Finally, in the \qlocal model, the symmetry-breaking complexity is still a major open problem, but it is known to be bounded by $O(\log^* n)$.
\end{itemize}
In particular, \(\diamondsuit(\mathcal{M})\) is always bounded by $O(\log^* n)$, and hence as long as we consider complexities in the range $\Omega(\log^* n)$, we can ignore additive \(\diamondsuit(\mathcal{M})\) terms.

We will use symmetry-breaking in this work for finding, in particular, distance-$k$ vertex coloring and distance-$k$ edge coloring, for various constants $k$.

\section{From LCL with inputs to LCL without inputs}\label{sec:eliminate-inputs}

In this section, we describe a reduction from an arbitrary promise-free LCL $A = (\mathcal{G}_d, \inlabels_A,\outlabels_A, \mathcal{C}_A, r_A)$ to a promise-free LCL without inputs $B =(\mathcal{G}_3, \outlabels_B, \mathcal{C}_B,r_B)$ where $\mathcal{G}_3$ denotes the family of $3$-regular graphs  preserving the locality of the problem across various models.



\subsection{Graph encoding}

\paragraph*{Ladder-Gadget.}

We denote the \emph{ladder-gadget} of length $k$ (for $k\geq 0$)  by $\mathcal{H}_k$ which we describe below.

\begin{description}
    \item[Vertices.]
    $V(\mathcal{H}_k) = \{a_0,\ldots,a_k,\, b_0,\ldots,b_k,\, x, t, u, v, w\}$.

    \item[Ladder structure.]
    The sequences of vertices $(a_0, a_1, \ldots, a_k, t, u)$ and $(b_0, b_1, \ldots, b_k, w, v)$ form two parallel paths.
    For every $0 \le i \le k$, add an edge $\{a_i, b_i\}$, forming the rungs of the ladder.
    Finally, the vertices $\{t, u, v, w\}$ induce the graph $K_4 \setminus \{t, w\}$.

    \item[Triangle.]
    Connect $x$ to both $a_0$ and $b_0$, forming a triangle on $\{x, a_0, b_0\}$.
\end{description}

Attaching $\mathcal{H}_k$ to a vertex $v$ of another graph means adding the edge $\{v, x\}$.


\paragraph{Description of Encoding.}
Given a graph $G$ and an input function $\sigma:V(G)\rightarrow [k]$ for some $k\geq 0$, its \emph{graph encoding} is a transformation of $G$ into a $3$-regular graph $G'$  by replacing each vertex and edge of $G$ as follows.

The function $\ssigma{}{}$ naturally can be interpreted as a pair of two functions $(\ssigma{V(G)}{}),\ssigma{E(G)}{}$ where $\ssigma{V(G)}{} = \ssigma{}{}$, for each edge $\{u,v\}\in E(G)$, $\ssigma{E(G)}{}(\{u,v\}) = (\sigma(u),\sigma(v))$. Now, we proceed with the encoding of nodes and edges.

\begin{enumerate}
	\item{\textbf{Encoding of nodes.}} For each vertex $v\in V(G)$, replace it with a \emph{node gadget} $(C_v,\sigma(v))$ which consists of a chordless cycle $C_v = (v_1,\ldots, v_{\deg_{G}(v)+2})$ and two copies of $\mathcal{H}_{\sigma(v)}$ attached to vertices $v_1$ and $v_2$. 
	\item{\textbf{Encoding of edges.}} For each edge $\{u,v\}\in E(G)$, add an \emph{edge gadget} $e_{u,v}$ that connects a designated vertex of $C_u$ to a designated vertex of $C_v$. The gadget $e_{u,v}$ is a $3$-path $\{u^\star,a_u^\star, a_v^\star, v^\star\}$, where $u^\star\in V(C_u)$ and $v^\star\in V(C_v)$ are the chosen vertices on the two cycles, and the auxiliary vertices $a_u^\star$ and $a_v^\star$ are attached with gadgets $\sigma(u)$ and $\sigma(v)$ respectively. The attachment vertices $u^\star$ and $v^\star$ on the cycles are chosen so that every vertex in the resulting graph $G'$ has degree exactly $3$.
   \end{enumerate}


\begin{lemma}[Uniqueness of Encoding]\label{lemma:unique_encoding}
        Let $G_1$ and $G_2$ be two graphs, and let $\sigma_1 : V(G_1) \to [k]$ and $\sigma_2 : V(G_2) \to [k]$ be input labelings for some $k \ge 0$. Let $G_1'$ and $G_2'$ denote the respective encodings of $(G_1, \sigma_1)$ and $(G_2, \sigma_2)$. Then, if $G_1' \cong G_2'$, it must hold that  $(G_1,\ssigma{1}{})$ and $(G_2,\ssigma{2}{})$  are also isomorphic.%
\end{lemma}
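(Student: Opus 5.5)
The plan is to show that the encoding can be inverted purely from the isomorphism type of $G'$. That is, I will give a decoding procedure defined only in graph-theoretic terms (hence isomorphism-invariant) which recovers $(G,\sigma)$ up to isomorphism. Any isomorphism $\phi\colon G_1'\to G_2'$ must then carry the recovered structure of $G_1'$ onto that of $G_2'$, inducing an isomorphism $(G_1,\sigma_1)\cong(G_2,\sigma_2)$.

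\textbf{Recognizing ladder gadgets.} The key step is to identify every copy of $\mathcal{H}_k$ intrinsically. The triangles of $G'$ should be exactly those of the form $\{x,a_0,b_0\}$ inside ladder gadgets, together with the two triangles of the $K_4\setminus\{t,w\}$ end, namely $\{t,u,v\}$ and $\{u,v,w\}$. To verify this, I will check that the chordless cycles $C_v$, which have length $\deg(v)+2\ge 3$, and the edge-gadget paths create no other triangles. A subtlety: a $3$-cycle $C_v$ when $\deg(v)=1$ is itself a triangle, so degenerate cases (degree $0$ or $1$, where the cycle has length $2$ or $3$) must be handled separately, perhaps by noting that the cycle vertices carrying attached ladders are distinguishable.

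The $K_4\setminus\{t,w\}$ end is the unique place where two triangles share an edge. So the ``diamond'' $\{t,u,v,w\}$ identifies the far end of a ladder. Walking back along the two rails $a_k,\dots,a_0$ and $b_k,\dots,b_0$, whose rungs form $4$-cycles, reaches the triangle $\{x,a_0,b_0\}$. From the walk we read off the length $k$, and we identify the attachment vertex, which is the third neighbor of $x$. Every step here is defined by adjacency alone, so $\phi$ maps ladders to ladders of the same length and attachment points to attachment points.

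\textbf{Recovering nodes, edges and labels.} With the ladders identified, attachment vertices come in two kinds. Cycle vertices $v_1,v_2$ are adjacent to each other and each carries one ladder. Edge-gadget vertices $a_u^\star,a_v^\star$ are also adjacent and each carries a ladder, so they must be distinguished by a further invariant. For $v_1,v_2$, the other neighbors lie on a chordless cycle through both of them. For $a_u^\star,a_v^\star$, the other neighbors $u^\star,v^\star$ lie in different cycles, and the edge $\{a_u^\star,a_v^\star\}$ lies on no short cycle avoiding ladders, unless $G$ has short cycles. This is the main obstacle. I expect to resolve it by deleting all ladder vertices and looking at the remaining graph, which has maximum degree $3$. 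In that graph, each $C_v$ is a cycle whose vertices have degree $\le3$, and each edge gadget becomes a pendant-free path of internal degree $2$ whose endpoints have degree $3$. Concretely, the vertices $v_1,v_2$ have degree $2$ after deletion, as do $a^\star$ vertices, and $u^\star$-type vertices have degree $3$. I would therefore distinguish the two kinds by whether the degree-$2$ pair's other neighbors are themselves degree-$2$ cycle vertices. If needed, I would also use that each $C_v$ is the unique cycle through $v_1v_2$ in which all other vertices have exactly one non-cycle neighbor.

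Once the node gadgets $(C_v,\sigma(v))$ and edge gadgets are identified invariantly, $\phi$ maps node gadgets to node gadgets preserving the ladder length, hence preserving $\sigma$, and edge gadgets to edge gadgets preserving their endpoint gadgets. This induces a bijection $V(G_1)\to V(G_2)$ that preserves labels and adjacency, which completes the proof.
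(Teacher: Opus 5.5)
Your skeleton is the paper's. There, $\varphi_2=\mathsf{decode}\circ\varphi\circ\mathsf{decode}^{-1}$, labels are read off ladder lengths, adjacency is read off edge gadgets, and it is simply asserted that $\varphi$ carries node gadgets to node gadgets. You set out to prove that assertion from adjacency alone, which is worth doing. The paper's detection rule (``induced subgraphs isomorphic to a node gadget'') is not faithful by itself: take a chordless cycle of $G'$ of length at most $\Delta+2$ through an edge-gadget pair with $\sigma(u)=\sigma(v)$. Together with that pair's two ladders it induces a copy of a node gadget, which is possible once $\Delta\ge 10$.

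Your ladder step is fine. Diamonds occur only at ladder ends, walking the rails from $a_k,b_k$ reaches $\{x,a_0,b_0\}$ and yields $k$, and the attachment vertex is the third neighbour of $x$. The triangle $C_v$ for $\deg(v)=1$ does no harm, because you start from diamonds. For $\deg(v)=0$ the ``cycle'' $C_v$ has length $2$, so the encoding is not a simple graph; that is a defect of the construction, not of your argument.

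The gap is in the step you flag as the main obstacle. The test you commit to asks whether the outer neighbours of a degree-$2$ pair in $H$ ($G'$ minus all ladders) are degree-$2$ cycle vertices. It distinguishes nothing. The outer neighbours of $v_1,v_2$ are $v_{d+2},v_3$, which carry edge gadgets and have $H$-degree $3$, exactly like $u^\star,v^\star$. In both kinds of pair, each attachment vertex has one $H$-neighbour of degree $2$ (its partner) and one of degree $3$.

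Your fallback is the right idea, but it must be read in $H$. In $G'$, which is $3$-regular, the condition only says the cycle is chordless, and chordless cycles pass through edge-gadget pairs whenever $G$ has a cycle through $uv$. The invariant that works is this: a degree-$2$ pair of $H$ comes from a node gadget iff it lies on a cycle of $H$ all of whose other vertices have $H$-degree $3$.
\begin{itemize}
\item For $(v_1,v_2)$ the witness is $C_v$. It is the only one, because the $H$-degree-$3$ vertices of $C_v$ form the path $v_3\cdots v_{d+2}$.
\item For $(a_u^\star,a_v^\star)$, first note that the edge from $x$ to the attachment vertex is a bridge, so no cycle through the pair enters a ladder. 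Any cycle through $a_u^\star a_v^\star$ must travel from $v^\star\in C_v$ to $u^\star\in C_u\neq C_v$. To leave $V(C_v)$ without reusing $a_v^\star$, it must pass through the $a^\star$ of another edge gadget, which has $H$-degree $2$.
\end{itemize}
With this criterion in place of the degree test, the rest of your argument goes through. Node gadgets map to node gadgets with the same ladder length, and edge gadgets map to edge gadgets with the same endpoint gadgets.
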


\begin{proof}
	Let the isomorphism between $G'_1$ and $G'_2$ be $\varphi$.
		Let $D_1 = \{(v,i)\in V(G_1)\times [k]\ \mid\ \sigma_1(v) = i \}$ and $D_2 = \{(u,j)\in V(G_2)\times [k]\ \mid\ \sigma_2(u) = j\}$.
	We define a mapping $\varphi_2 : D_1\rightarrow D_2$ as follows:
	\[
			\text{For each node $v\in V(G_1)$, } \varphi_2(v,\sigma_1(v)) = \mathsf{decode}(\varphi(\decode{v}{\ssigma{V(G_1)}{}(v)})).
	\]
We next show that $\varphi_2$ is an isomorphism between the decoded labeled graphs:
\begin{description}
\item[Label preservation.]
Let $(v,i)\in D_1$ and let $\varphi_2(v,i)=(u,j)$.
By construction, the node gadget $X_v := \decode{v}{\ssigma{V(G_1)}{}(v)}$ contains two attached copies of
$\mathcal H_i$, and the value $i$ is uniquely determined from the isomorphism type of $X_v$.
Since $\varphi$ is a graph isomorphism, $\varphi(X_v)$ is a node gadget in $G_2'$ isomorphic to $X_v$,
hence it also contains two copies of $\mathcal H_i$. Therefore
$\mathsf{decode}(\varphi(X_v))=(u,i)$, and thus $j=i$.
\item[Edge preservation.]
Let $v,w\in V(G_1)$ be distinct.
By definition of the encoding, $\{v,w\}\in E(G_1)$ if and only if there exists an edge gadget in $G_1'$
connecting a designated vertex of the cycle $C_v$ inside $X_v$ to a designated vertex of the cycle $C_w$ inside $X_w$.
As $\varphi$ preserves adjacency, it maps this edge gadget to an edge gadget in $G_2'$ connecting the cycles
inside $\varphi(X_v)$ and $\varphi(X_w)$.
Let $\varphi_2(v,\sigma_1(v))=(u,\sigma_2(u))$ and $\varphi_2(w,\sigma_1(w))=(z,\sigma_2(z))$.
Then $\varphi(X_v)$ and $\varphi(X_w)$ are exactly the node gadgets corresponding to $u$ and $z$ in $G_2'$,
so the existence of such an edge gadget is equivalent to $\{u,z\}\in E(G_2)$.
Hence $\{v,w\}\in E(G_1)\iff \{u,z\}\in E(G_2)$.
\end{description}
Combining label preservation and edge preservation, $\varphi_2$ induces an isomorphism
$(G_1,\sigma_1)\cong (G_2,\sigma_2)$, i.e.\ $(G_1,\ssigma{1}{})\cong (G_2,\ssigma{2}{})$.
\end{proof}
\subsection{Problem description}
\paragraph{Problem $B$.}\label{def:problem:b}
Problem~$B$ is a promise-free LCL problem without inputs, specified as follows.
\begin{itemize}[noitemsep]
    \item $\mathcal{G}_B$ is the family of all $3$-regular graphs.
    \item $\outlabels_B = \outlabels_A$.
	\item The checkability radius is $r_B = \Lambda\;r_A$.
	\item The constraints $\mathcal{C}_B$ are described below.
\end{itemize}


An element of $\mathcal{C}_B$ is a pair $(G(w),\ssigma{G}{out})$ where $G(w)$ is a $3$-regular graph centered at $w$ and  of radius at most $r_B$ and $\ssigma{G}{out}:V(G(w))\to \outlabels_B$ is an output labeling.

		Let $(G',\ssigma{G'}{in},\ssigma{G'}{out})$ be the decoding of $(G(w),\ssigma{G}{out})$. We have $(G(w),\ssigma{G}{out})\in \mathcal{C}_B$ if and only if \emph{one} of the following two cases holds.

\begin{enumerate}
	\item The node $w\in\decode{u}{\ssigma{G'}{in}(u)}$ for some node $u\in V(G')$. Then the following holds.
		\[
			\bigl(\ball{r_A}{G'}{u}, \rsigma{G'}{in}{\ball{r_A}{G'}{u}}, \rsigma{G'}{out}{\ball{r_A}{G'}{u}}\bigr)\in \mathcal{C}_A
		\]
	\item Otherwise, $\ssigma{G}{out}(w)$ can take any arbitrary value from $\outlabels_{B}$.
\end{enumerate}
    	


Without loss of generality, we assume that every pair $(G(w),\ssigma{G}{out}) \in \mathcal{C}_B$ has radius exactly $r_B$. Instances of smaller radius can be handled separately, as they only arise from input graphs of constant diameter.
\begin{lemma}[Soundness of the Lift from $A$ to$B$]\label{lemma:sound_lift_a_b}
	Let $\instance_A = (G_A,\sigma_{A}^{in})$ be an instance of problem $A$. Let $\instance_B = (G_B)$ be the corresponding encoded instance for $B$. Let $\sigma_{B}^{out}$ be any legal output labeling on $\instance_B$. Define an output labeling $\sigma_A^{out}$ for $G_A$ by assigning $\sigma^{out}_A(u) = \sigma_{B}^{out}(v)$ where $\sigma_{B}^{out}(v)$  where $v\in\decode{u}{\ssigma{A}{in}(u)}$. Then $\ssigma{A}{out}$ is a valid output labeling for $\instance_A$.
\end{lemma}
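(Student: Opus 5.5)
The approach is to show that validity of $\sigma_A^{out}$ at every node $u\in V(G_A)$ follows from validity of $\sigma_B^{out}$ at a single gadget vertex of $u$. Since $\instance_B$ is a perfect encoding of $\instance_A$, decoding $G_B$ gives back $(G_A,\sigma_A^{in})$ up to isomorphism. There are no malformed parts: every vertex of $G_B$ lies either in a node gadget or in an edge gadget, and the node gadgets are exactly the sets $\decode{u}{\ssigma{A}{in}(u)}$. We first check that $\sigma_A^{out}$ is well defined. For this we use the convention from the lifting discussion in \cref{sec:simulation_of_algorithms}: the legal labelings we consider are constant on each node gadget. Without that assumption we simply fix one representative vertex $v_u$ per gadget and define $\sigma_A^{out}(u)=\sigma_B^{out}(v_u)$. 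The argument below does not depend on which representative is chosen.

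Now fix $u\in V(G_A)$ and pick $w\in\decode{u}{\ssigma{A}{in}(u)}$, taking $w$ outside any edge gadget. Legality of $\sigma_B^{out}$ says that $(\ball{r_B}{G_B}{w},\sigma_B^{out}\upharpoonright)$ lies in $\mathcal{C}_B$. Let $(G',\ssigma{G'}{in},\ssigma{G'}{out})$ be the decoding of this ball, and let $u'$ be the decoded node whose gadget contains $w$. By \cref{cor:local_decoding_radius_T} with $T=r_A$ (note $r_B=\Lambda r_A$), there is a label-preserving isomorphism between $\ball{r_A}{G'}{u'}$ and $\ball{r_A}{G_A}{u}$, with $u'\mapsto u$, which also respects the input labels. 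Because $w$ lies in a detected node gadget, case~1 of the definition of $\mathcal{C}_B$ applies and not the unconstrained case~2. Case~1 gives $(\ball{r_A}{G'}{u'},\ssigma{G'}{in}\upharpoonright,\ssigma{G'}{out}\upharpoonright)\in\mathcal{C}_A$.

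It remains to check that the output labels match under this isomorphism: for each node $x$ in $\ball{r_A}{G_A}{u}$, the decoded output label of its image must equal $\sigma_A^{out}(x)$. Both are read off the $\sigma_B^{out}$ value on the gadget of $x$ inside the same ball, so they agree, provided the gadget-to-node readout used in decoding is the same one used to define $\sigma_A^{out}$. Once this holds, the radius-$r_A$ labeled neighborhood of $u$ in $(G_A,\sigma_A^{in},\sigma_A^{out})$ is isomorphic to a member of $\mathcal{C}_A$. As $u$ was arbitrary, $\sigma_A^{out}$ is valid.

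The step I expect to be the main obstacle is justifying that the gadgets of the encoding are detected correctly both inside the truncated ball and in $G_B$. Detection must not produce spurious node gadgets, for instance from cycles formed through edge gadgets, and it must recover the labels from the ladder lengths. This is where the uniqueness of encoding (\cref{lemma:unique_encoding}) and the decodability assumptions of \cref{sec:graph_encoding} are needed. A second issue is truncation at the boundary of the ball of radius $r_B$. Gadgets cut off by the boundary could be decoded as malformed, but they lie farther than distance $r_A$ from $u'$ in $G'$. This uses the diameter bound $\Lambda/3$ exactly as in the proof of \cref{cor:local_decoding_radius_T}, so these gadgets do not affect $\ball{r_A}{G'}{u'}$.
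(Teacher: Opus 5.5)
Your proposal is correct and is essentially the paper's proof: fix $u$, pick a vertex of its node gadget, use legality of $\sigma_B^{out}$ there to land in case~1 of $\mathcal{C}_B$ for the decoded ball, and transfer the resulting $\mathcal{C}_A$-membership to the radius-$r_A$ ball of $u$ in $G_A$ via the radius-$T$ local-decoding corollary with $T=r_A$. You are more careful than the paper on two points it takes for granted: choosing the vertex outside edge gadgets, as the corollary requires, and reading output labels consistently per gadget. One small correction: your claim that the choice of representative is irrelevant holds only under the constant-on-gadgets convention, or when the representative matches the decoder's readout, as your own later proviso concedes.
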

\begin{proof}
	Fix any vertex $u\in V(G_A)$, let $v$ be one of the nodes that are part of the gadget associated to node $u$ in $B$, i.e., $v\in  \decode{u}{\ssigma{V(G_A)}{in}(u)}$. Since $\sigma_B^{out}$ is legal for problem $B$,  the neighborhood satisfies the constraint $\CC_B$, formally
\[
\bigl(\ball{r_B}{G_B}{v},\rsigma{G_B}{out}{\ball{r_B}{G_B}{v}})\in \mathcal{C}_B.
\]
$G_B$ is correctly gadgeted in nodes and edge gadgets. Let $(G'(w),\ssigma{G'}{in},\ssigma{G'}{out})$ be the decoding of $	\bigl(\ball{r_B}{G_B}{v},\rsigma{G_B}{out}{\ball{r_B}{G_B}{v}})$ where $v\in \decode{w}{\ssigma{G'}{in}(w)}$, by the definition of problem $B$ we have that the centered graph at a node $w$ of radius at most $r_B$ satisfies $\CC_B$ if and only if

\[
\bigl(\ball{r_A}{G'}{w}, \rsigma{G'}{in}{\ball{r_A}{G'}{w}}, \rsigma{G'}{out}{\ball{r_A}{G'}{w}}\bigr)\in \CC_A.
\]
Applying \cref{cor:local_decoding_radius_T} we have that 
\[
\bigl(\ball{r_A}{G'}{w}, \rsigma{G'}{in}{\ball{r_A}{G'}{w}}, \rsigma{G'}{out}{\ball{r_A}{G'}{w}}\bigr)\cong	\bigl(\ball{r_A}{G_A}{u}, \rsigma{G_A}{in}{\ball{r_A}{G_A}{u}}, \rsigma{G_A}{out}{\ball{r_A}{G_A}{u}}\bigr)\in \CC_A.
\]
Thus $\sigma_{A}^{out}$ is a valid output labeling for $A$.
\end{proof}
\begin{lemma}[Soundness of the Lift from $B$ to $A$]\label{lemma:sound_lift_b_a}
	Let $\instance_B=(G_B)$ be an instance of problem $B$, and let $\instance_A=(G_A,\sigma_A^{in})$ be the instance for the problem $A$, achieved by decoding node and edge gadgets in $G_B$. Let $\sigma^{out}_A$ be a legal output labeling for problem $A$ on $\instance_A$. Define the output labeling for all $v\in V(G_B)$ as
	\[\sigma_B^{out}(v)=\begin{cases}
		\sigma_{A}^{out}(u) &\text{if $v\in\decode{u}{\ssigma{V(G_A)}{in}(u)}$} \text{ for some }u\in V(G_A) \\
		\bot &\text{otherwise.}
	\end{cases}
	\]
	Then $\sigma_B^{out}$ is legal for problem $B$ on $G_B$.

\end{lemma}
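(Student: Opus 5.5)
We check the constraint $\CC_B$ at every vertex $v\in V(G_B)$ separately, splitting into the two cases in the definition of problem $B$. We take $r_B=\Lambda r_A$, so a vertex $v$ sees $\ball{r_B}{G_B}{v}$. Let $(G',\ssigma{G'}{in},\ssigma{G'}{out})$ be the decoding of this labeled ball, and note that it is exactly what the constraint inspects.

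\emph{Case 1: $v$ does not lie in any decoded node object of the local decoding.} This happens when $v$ sits inside an edge gadget, or is otherwise not of the form required in the first case of the definition of $\CC_B$. Then the second case of that definition applies, and every value of $\ssigma{B}{out}(v)$, including $\bot$, is accepted. Nothing needs proving here.

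\emph{Case 2: $v\in\decode{u'}{\ssigma{G'}{in}(u')}$ for some $u'\in V(G')$.} The first step is to relate $u'$ to a node of the global decoding $G_A$. Vertex $v$ lies in a detected node gadget, or is a malformed vertex, of the local decoding. Since $\Lambda$ is three times the maximum gadget diameter, the entire gadget of $v$ is contained in the ball, so $v$ also lies in $\decode{u}{\ssigma{V(G_A)}{in}(u)}$ for a unique $u\in V(G_A)$. Malformed vertices are handled the same way, since they decode to themselves. We then apply \cref{cor:local_decoding_radius_T} with $T=r_A$, which gives an isomorphism
\[
\bigl(\ball{r_A}{G'}{u'},\rsigma{G'}{in}{\ball{r_A}{G'}{u'}}\bigr)\cong\bigl(\ball{r_A}{G_A}{u},\rsigma{G_A}{in}{\ball{r_A}{G_A}{u}}\bigr).
\]

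Next we check that this isomorphism $\varphi$ also carries the output labels. For each $w\in\ball{r_A}{G_A}{u}$, the lifted labeling $\ssigma{B}{out}$ is constant, equal to $\ssigma{A}{out}(w)$, on $\decode{w}{\cdot}$. That gadget set lies inside $\ball{r_B}{G_B}{v}$, so the decoded output at $\varphi(w)$, read off any vertex of that gadget, equals $\ssigma{A}{out}(w)$. Edge-gadget vertices carry $\bot$, but the decoding reads outputs only from node objects, so they are irrelevant. Hence the labeled ball $\bigl(\ball{r_A}{G'}{u'},\rsigma{G'}{in}{},\rsigma{G'}{out}{}\bigr)$ is isomorphic to the radius-$r_A$ labeled ball of $u$ in $(G_A,\ssigma{A}{in},\ssigma{A}{out})$. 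That ball belongs to $\CC_A$ because $\ssigma{A}{out}$ is legal on $\instance_A$. So $v$ satisfies $\CC_B$.

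The main obstacle is the consistency step in Case 2. We must show that the decoding of the truncated ball $\ball{r_B}{G_B}{v}$ agrees with the global decoding up to radius $r_A$, including near its boundary. Gadgets cut by the boundary may be misread as malformed, and malformed vertices of the global decoding may be multigraph edges. \cref{cor:local_decoding_radius_T} is stated only for $v$ outside edge gadgets, which is why Case 1 is split off first. A further caveat is that $G_A$ may be a multigraph, with malformed vertices decoded as nodes of label $\bot$. We rely on $A$ being promise-free, so that $\ssigma{A}{out}$ is legal on all of these nodes as well.
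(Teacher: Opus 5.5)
Your proof is correct and follows the same route as the paper. Vertices not lying in a decoded node object are accepted trivially by the second clause of $\CC_B$; for the others, \cref{cor:local_decoding_radius_T} with $T=r_A$ (so that $\Lambda T=r_B$) identifies the decoded radius-$r_A$ ball with the ball of $u$ in $(G_A,\sigma_A^{in},\sigma_A^{out})$, which lies in $\CC_A$. You are in fact slightly more careful than the paper: it splits cases by the global rather than the local decoding and simply asserts that the isomorphism also carries the output labels, whereas you justify both points.
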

\begin{proof}
	Given $\instance_B= (G_B)$ we apply the decoding of $G_B$ and obtain $\instance_A = (G_A,\sigma_{A}^{in})$. After that, we apply the labeling $\sigma_{A}^{out}$ and lift the output labels to $\instance_B$. Let $v\in V(G_B)$ be a node such that $v\in \decode{u}{\ssigma{V(G_A)}{in}(u)}$ for some $u\in V(G_A)$ (otherwise, by the second point of the description of problem $B$, $\ssigma{B}{out}(v)$ is legal trivially). Consider the ball of radius $r_B$ around $v$
			\[
		\bigl(\ball{r_B}{G_B}{v},\rsigma{G_B}{out}{\ball{r_B}{G_B}{v}})
		\]
		and let $(G'(w),\ssigma{G'}{in},\ssigma{G'}{out})$ be its decoding.  Now, consider ball of  radius $r_A$ around $w$ in $G'$ 
		\[
		\bigl(\ball{r_A}{G'}{w}, \rsigma{G'}{in}{\ball{r_A}{G'}{w}}, \rsigma{G'}{out}{\ball{r_A}{G'}{w}}\bigr)
		\]
		and consider the $r_A$ radius neighborhood  of a node $u$ in $G_A$
		\[
		\bigl(\ball{r_A}{G_A}{u}, \rsigma{G_A}{in}{\ball{r_A}{G_A}{u}}, \rsigma{G_A}{out}{\ball{r_A}{G_A}{u}}\bigr).
		\]
		By \cref{cor:local_decoding_radius_T}, we have that 
		\[
			\bigl(\ball{r_A}{G'}{w}, \rsigma{G'}{in}{\ball{r_A}{G'}{w}}, \rsigma{G'}{out}{\ball{r_A}{G'}{w}}\bigr)	\cong \bigl(\ball{r_A}{G_A}{u}, \rsigma{G_A}{in}{\ball{r_A}{G_A}{u}}, \rsigma{G_A}{out}{\ball{r_A}{G_A}{u}}\bigr)\in \CC_A.
		\]
		Therefore, by the first point of the  description of problem $B$, we have that 
		
		\[
		\bigl(\ball{r_B}{G_B}{v},\rsigma{G_B}{out}{\ball{r_B}{G_B}{v}})\in \CC_B
		\]
\noindent
	thus $\sigma_B^{out}$ is a valid labeling for $B$.		
\end{proof}

\begin{lemma}\label{lemma:prob_a_to_b}
	Let $\instance_A = (G_A,\sigma_A^{in})$ be an instance of problem $A$ Let $G_B$ be the corresponding encoded instance for $B$. Let $\outcome_B(\instance_B)=\{(\sigma_B^{\oupt,i},p_i)\}_{i\in I}$ be an outcome that solves $B$ on $\instance_B$ with probability at least $q$. Define $\outcome_A(\instance_A)$ by sampling $(\sigma_{B}^{out,i},p_i)$ from $\outcome_B(\instance_B)$ and applying the $\LIFT$. If the lift is sound, then $\outcome(\instance_A)$ solves $A$ with probability at least $q$.
\end{lemma}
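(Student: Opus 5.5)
The plan is a probability-coupling argument: the lift is a deterministic map on output labelings, and soundness guarantees it sends legal labelings to legal labelings, so the success probability can only go up. Write the outcome $\outcome_B(\instance_B)$ as the finite (or countable) family $\{(\sigma_B^{\oupt,i},p_i)\}_{i\in I}$ with $\sum_i p_i = 1$. Let $L_B\subseteq I$ be the indices for which $\sigma_B^{\oupt,i}$ is legal for $B$ on $\instance_B$. The hypothesis says exactly that $\sum_{i\in L_B} p_i \ge q$.

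First, I define $\outcome_A(\instance_A)$ formally as the pushforward of $\outcome_B(\instance_B)$ under $\LIFT$. Here $\LIFT$ is the labeling transformation of \cref{lemma:sound_lift_a_b}: $\sigma_A^{\oupt}(u)=\sigma_B^{\oupt}(v)$ for a fixed choice of $v\in\decode{u}{\ssigma{A}{in}(u)}$. The outcome assigns to each labeling $\sigma_A$ of $G_A$ the probability $\sum_{i:\LIFT(\sigma_B^{\oupt,i})=\sigma_A} p_i$. Because $\LIFT$ is deterministic and fixed by the instance, this is a well-defined probability distribution; several indices $i$ may merge into the same $\sigma_A$, and that is harmless.

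Second, I invoke soundness. By \cref{lemma:sound_lift_a_b}, which is what ``the lift is sound'' refers to, every $i\in L_B$ gives a lifted labeling $\LIFT(\sigma_B^{\oupt,i})$ that is legal for $A$ on $\instance_A$. Let $E_A$ be the event that the sampled labeling is legal for $A$. Then the event ``the sampled index lies in $L_B$'' is contained in $E_A$, so
\[
\Pr[E_A] \;=\; \sum_{i:\,\LIFT(\sigma_B^{\oupt,i}) \text{ legal}} p_i \;\ge\; \sum_{i\in L_B} p_i \;\ge\; q .
\]
This is the claim. Illegal $B$-labelings may happen to lift to legal $A$-labelings, which only helps.

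The only real subtlety is that the lift must be well defined, independent of the choice of representative $v$. I would handle this by fixing the representative once, namely the minimum-identifier node of the gadget as in \cref{lemma:local_decoding_procedure}. Alternatively, I would observe that soundness holds for any choice, so the fixed choice suffices. I would also note that $\instance_B$ is a perfect encoding of $\instance_A$, so every $u\in V(G_A)$ has a nonempty node gadget and the lift is total. No model-specific properties such as locality or non-signaling are needed here; those are handled separately by the simulations in \cref{sec:simulation_of_algorithms}.
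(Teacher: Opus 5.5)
Your proposal is correct and matches the paper's argument. The paper also takes the index set $S_B$ of outcomes legal for $B$ and the index set $S_A$ of outcomes whose lift is legal for $A$, uses soundness to get $S_B\subseteq S_A$, and concludes $\sum_{i\in S_A}p_i\ge\sum_{i\in S_B}p_i\ge q$. Your pushforward formalization and your remarks on fixing a representative node to make the lift well defined are harmless extras that the paper leaves implicit.
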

\begin{proof}
	Let
	\begin{align*}
		S_B &:= \{ i\in I \mid \sigma_B^{(\oupt,i)} \text{ is legal for $B$ on } G_B\}, \\
		S_A &:= \{ i\in I \mid \sigma_A^{(\oupt,i)} \text{ is legal for $A$ on } \instance_A\}.
	\end{align*}
	By soundness of the lift from $B$ to $A$, for every $i\in I$,
	\[
	i\in S_B \Rightarrow  \LIFT(\sigma_B^{(\oupt,i)}) \text{ is legal for $A$ on } \instance_A
	\Rightarrow i\in S_A.
	\]
	Hence $S_B\subseteq S_A$. Therefore,
	\[
	\sum_{i\in S_A} p_i  \geq \sum_{i\in S_B} p_i \geq q,
	\]
	where the last inequality is the assumption that $\outcome_B$ solves $B$ with success probability at least $q$.
\end{proof}
\begin{lemma}\label{lemma:prob_b_to_a}
Let $\instance_B = (G_B)$ be an instance of problem $B$ and let $\instance_A = (G_A,\sigma_A^{in})$ be the instance for problem $A$ achieved by decoding node and edge gadgets in $G_B$. Let $\outcome_A(\instance_A) = \{(\sigma_{A}^{(\oupt,i)},p_i)\}_{i\in I}$ be any outcome that solves $A$ on $\instance_A$ with probability at least $q$. Define the outcome $\outcome_B(\instance_B)$ by sampling $(\sigma_{A}^{(\oupt,i)},p_i)$ from $\outcome_A(\instance_A)$ and applying the $\LIFT$. If the $\LIFT$ is sound, then $\outcome_B(\instance_B)$ solves $B$ with probability at least $q$. 
\end{lemma}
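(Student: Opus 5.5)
The argument mirrors the proof of \cref{lemma:prob_a_to_b}, with the roles of the two problems exchanged and the soundness of the lift from $A$ to $B$ (\cref{lemma:sound_lift_b_a}) used in place of the soundness from $B$ to $A$. Write $\outcome_B(\instance_B)=\{(\sigma_B^{(\oupt,i)},p_i)\}_{i\in I}$, where $\sigma_B^{(\oupt,i)} \coloneqq \LIFT(\sigma_A^{(\oupt,i)})$ is the labeling of $G_B$ obtained from $\sigma_A^{(\oupt,i)}$ by the rule of \cref{lemma:sound_lift_b_a}. Each node $v$ in some $\decode{u}{\ssigma{V(G_A)}{in}(u)}$ gets $\sigma_A^{(\oupt,i)}(u)$, and every other node gets $\bot$. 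The lift is a fixed, deterministic function of the sampled labeling and of the fixed instance $G_B$. Therefore the index set $I$ and the weights $p_i$ carry over unchanged, and $\outcome_B(\instance_B)$ is a well-defined probability distribution with total mass $\sum_i p_i = 1$. If several indices lift to the same labeling, their probabilities simply add, and this does not affect the argument below.

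Next, define the index sets
\begin{align*}
  S_A &:= \{ i\in I \mid \sigma_A^{(\oupt,i)} \text{ is legal for $A$ on } \instance_A\}, \\
  S_B &:= \{ i\in I \mid \sigma_B^{(\oupt,i)} \text{ is legal for $B$ on } G_B\}.
\end{align*}
For every $i\in S_A$, \cref{lemma:sound_lift_b_a}, applied to the legal labeling $\sigma_A^{(\oupt,i)}$ on the decoded instance $\instance_A$, shows that $\LIFT(\sigma_A^{(\oupt,i)})=\sigma_B^{(\oupt,i)}$ is legal for $B$ on $G_B$. Hence $S_A\subseteq S_B$, and
\[
  \sum_{i\in S_B} p_i \;\ge\; \sum_{i\in S_A} p_i \;\ge\; q,
\]
where the final inequality is the assumption that $\outcome_A$ solves $A$ with probability at least $q$.

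There is essentially no hard step. The only point needing care is that \cref{lemma:sound_lift_b_a} is stated for an instance $\instance_A$ obtained by decoding an arbitrary $3$-regular $G_B$, which may be malformed. This is exactly the setting of the present statement, so the soundness lemma applies without requiring $G_B$ to be a perfect encoding.
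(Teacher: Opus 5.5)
Your proposal is correct and matches the paper's proof. Both define the index sets $S_A$ and $S_B$, invoke \cref{lemma:sound_lift_b_a} to get $S_A\subseteq S_B$, and conclude $\sum_{i\in S_B} p_i \ge \sum_{i\in S_A} p_i \ge q$; your extra remarks, that the lift is a deterministic function of the sampled labeling and that the lemma applies to malformed $G_B$, are harmless details the paper leaves implicit.
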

\begin{proof}
	Let $S_A := \{ i\in I \mid \sigma_A^{(\oupt,i)} \text{ solves } A \text{ on } \instance_A\}$ and
	$S_B := \{ i\in I \mid \sigma_B^{(\oupt,i)} \text{ solves } B \text{ on } G_B\}$.
	By soundness of the lift from $A$ to $B$, for every $i\in I$,
	\[
	i\in S_A \Rightarrow\  \LIFT(\sigma_A^{(\oupt,i)}) \text{ is legal for } B
	\Rightarrow\ i\in S_B.
	\]
	Hence $S_A\subseteq S_B$, and therefore
	\[
	\sum_{i\in S_B} p_i \ \ge\ \sum_{i\in S_A} p_i \ \ge\ q,
	\]
	where the last inequality is exactly the assumption that $\outcome_A$ solves $A$ with probability at least $q$.
	Thus $\outcome_B$ solves $B$ with probability at least $q$.
\end{proof}

\begin{lemma}\label{lemma:simulation_a_b}
	If problem $A$ admits an optimal algorithm with locality $\Theta(T) = \Omega(\diamondsuit(\MM))$ for a model $\MM$, then problem $B$ admits an algorithm with locality $\Theta(T)$.
\end{lemma}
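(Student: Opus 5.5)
Given an algorithm $\mathcal{A}_A$ for $A$ with locality $T$, we build an algorithm $\mathcal{A}_B$ for $B$ on an arbitrary $3$-regular instance $G_B$. We decode $G_B$ into a labeled graph $\instance_A=(G_A,\sigma_A^{in})$, run $\mathcal{A}_A$ on it by simulation, and lift the resulting labeling back to $G_B$ using the lift of \cref{lemma:sound_lift_b_a}. This is exactly the algorithm $\mathcal{A}'_1$ of \cref{sec:simulation_of_algorithms}, instantiated with $\mathcal{A}_1=\mathcal{A}_A$ and with the ladder-gadget encoding from this section. We first check that this encoding fits the framework of \cref{sec:graph_encoding,sec:graph_decoding}. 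Uniqueness is \cref{lemma:unique_encoding}. Decodability holds because gadgets are disjoint and recognizable from their isomorphism type, and every gadget has diameter bounded in terms of $d$ and the number of input labels, so $\Lambda=O(1)$.

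Correctness has two parts, one for labels and one for locality. For labels, \cref{lemma:sound_lift_b_a} says that any legal $\sigma_A^{out}$ on the decoded instance lifts to a legal $\sigma_B^{out}$. Then \cref{lemma:prob_b_to_a} turns this into the statement that whenever $\mathcal{A}_A$ (or its outcome) succeeds with probability $q$ on $\instance_A$, the lifted algorithm succeeds with probability at least $q$ on $G_B$. This handles randomized, quantum, non-signaling and finitely-dependent settings uniformly. For locality, the simulation results of \cref{sec:simulation_of_algorithms} show the lifted algorithm has locality $\Lambda T=O(T)$ in each model. In \local this uses \cref{lemma:local_decoding_procedure} together with paths of length at most $\Lambda$ to emulate one round. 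In \slocal it uses the induced order $p'$. In the non-signaling and bounded-dependence models it uses \cref{cor:local_decoding_radius_T}, and in \qlocal it uses SWAP forwarding.

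The main obstacle is that the decoded graph $G_A$ may contain malformed vertices and edges, and may even be a multigraph. It is then not a faithful encoding, and its nodes carry no unique identifiers. For identifiers, the representative node of each gadget can take a $\poly(n)$-bounded identifier via $\id'$ as in \cref{lemma:local_decoding_procedure}. In models without identifiers, we can instead run the $\diamondsuit(\MM)$ symmetry-breaking step to obtain the needed coloring, costing an additive $\diamondsuit(\MM)=O(T)$ by the hypothesis $T=\Omega(\diamondsuit(\MM))$. For the malformed parts, $A$ is promise-free and always solvable on degree-$d$ graphs, so $\mathcal{A}_A$ must still produce a legal output there. Decoded nodes have degree at most $d$, since malformed vertices have degree $3\le d$ and gadget cycles correspond to bounded degree. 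Parallel edges can be handled by treating them as unconstrained, per the discussion in the introduction. Moreover, the constraint $\mathcal{C}_B$ only checks nodes lying inside decoded node objects, against the decoded neighborhood, so nothing more is required. Altogether $B$ is solvable with locality $O(\Lambda T)+O(\diamondsuit(\MM))=\Theta(T)$.
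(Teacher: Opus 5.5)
Your construction of $\mathcal{A}_B$ matches the first half of the paper's proof: decode $G_B$, simulate $\mathcal{A}_A$ via $\mathcal{A}'_1$ of the framework, and lift back with \cref{lemma:sound_lift_b_a} and \cref{lemma:prob_b_to_a}. Your handling of identifiers and malformed parts adds detail the paper glosses over. However, this only gives an \emph{upper} bound. Your final line, $O(\Lambda T)+O(\diamondsuit(\MM))=\Theta(T)$, claims more than was shown. The hypothesis that $\mathcal{A}_A$ is \emph{optimal} is never used, which is a sign that half of the claim is missing.

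Read literally, an $O(T)$ algorithm could be padded to have locality $\Theta(T)$. That is not what the lemma says, nor what the paper proves or needs. The intended claim is that the optimal locality of $B$ is $\Theta(T)$. This matters downstream: \cref{lemma:simulation_b_d} assumes that $B$ admits an \emph{optimal} algorithm with locality $\Theta(T)$, and the final theorem asserts that the complexity landscapes coincide. With only an upper bound, $B$ could a priori be strictly easier than $A$, and the chain of reductions breaks.

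The missing step is the reverse reduction, which is the second half of the paper's proof.
\begin{itemize}
\item Suppose some algorithm solves $B$ with locality $o(T)$.
\item Take any $A$-instance $(G_A,\sigma_A^{in})$ and consider its gadget encoding $G_B$. This is a correctly encoded $3$-regular graph with no malformed parts.
\item Each node of $G_A$ simulates its own node gadget and its share of the edge gadgets. By \cref{lem:locality_of_encoding}, radius-$t$ views in $G_B$ are determined by radius-$(t+1)$ views in $G_A$. So the $B$-algorithm can be run on $G_B$ (algorithm $\mathcal{A}'_2$ of the framework) with locality $o(T)+O(1)$.
\item Pull the output back to $G_A$ via \cref{lemma:sound_lift_a_b}, using \cref{lemma:prob_a_to_b} to preserve the success probability in the randomized and related models.
\end{itemize}
This gives an $A$-algorithm with locality $o(T)$, contradicting the optimality of $\mathcal{A}_A$. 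This direction uses the encoding rather than the decoding, so the malformed-part and multigraph issues you discuss do not arise there.
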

\begin{proof}
	First, let us prove that such algorithm with locality $\Theta(T)$ exists. Let $G_B$ be an instance of problem $B$. The algorithm $\AA_B$ will simulate $\AA_A$  as in \cref{sec:framework} on the decoded version of $G_B$. By the soundness of the $\LIFT$ (\cref{lemma:sound_lift_a_b,lemma:sound_lift_b_a}) the resulting $\sigma_{B}^{\oupt}$ satisfies problem $B$. The simulation has $\Theta(T)$ locality. Thus, the locality of $\AA_B$ is $\Theta(T)$. Now, let us prove that the locality of $\AA_B$ is $\Omega(T)$. For the sake of contradiction, assume it is $o(T)$. Let us take an instance $\instance_A= (G_A,\sigma_{A}^{in})$ for problem $A$. 
	Then simulate $\AA_B$ on the decoded $G_B$ as it is described in \cref{sec:framework}. The algorithm will have $o(T)$ locality and by the soundness of the $\LIFT$ (\cref{lemma:sound_lift_a_b,lemma:sound_lift_b_a}) the resulting $\sigma_{A}^{\oupt}$ is valid. This means that the resulting algorithm for problem $A$ takes $o(T)$, but we assumed that the optimal algorithm for $A$ had locality $\Theta(T)$, thus it would be not optimal hence such an $o(T)$-locality algorithm cannot exist.
\end{proof}

\begin{remark}[Comparing randomized algorithms]
	We compare randomized algorithms by \((T,q)\) (locality, success probability).
	We say that \(\alg{A}\) is at least as good as \(\alg{B}\) if \(T_A\le T_B\) and \(q_A\geq q_B\)
	(with one strict for ``strictly better''). An algorithm is optimal if no other algorithm is
	strictly better in this sense.
\end{remark}

\section{From an LCL without inputs to a PN-checkable problem}

In this section, we describe a reduction from an arbitrary always-solvable LCL without inputs \(B =(\mathcal{G}_3, \outlabels_B, \mathcal{C}_B,r_B)\) to an always-solvable PN-checkable LCL without inputs \(D =(\mathcal{G}_3, \outlabels_D, \mathcal{C}_D,r_D)\). We will start by introducing graph encoding along with relevant theorem statements. Then we will prove the reduction and, finally, we will return to the graph-theoretical theorems and provide proofs for them.

\begin{figure}
    \centering
    \begin{subfigure}{0.25\textwidth}
        \centering

        \begin{tikzpicture}[scale=0.5, every node/.style={circle, draw, inner sep=2pt}]
            \node (1) at (0,2) {};
            \node (2) at (-1,1) {};
            \node (3) at (1,1) {};
            \node (4) at (-1,-1) {};
            \node (5) at (1,-1) {};
            \node (6) at (0,-2) {};
        
            \draw (0, 3) -- (1);
            \draw (1) -- (2);
            \draw (1) -- (3);
            \draw (2) -- (3);
            \draw (2) -- (4);
            \draw (3) -- (5);
            \draw (4) -- (5);
            \draw (4) -- (6);
            \draw (5) -- (6);
            \draw (6) -- (0, -3);
        \end{tikzpicture}

        \caption{A-Gadget}
        \label{fig:gadget}
    \end{subfigure}
    \hfill
    \begin{subfigure}{0.73\textwidth}
        \begin{subfigure}[b]{\textwidth}
            \centering

        \begin{tikzpicture}[scale=0.5, every node/.style={circle, draw, inner sep=1.5pt, font=\small}]
        
        \newcommand{\gadget}[2]{
          \node (1#2) at #1 {};
          \node (2#2) at ($(1#2) + (1,1)$) {};
          \node (3#2) at ($(1#2) + (1,-1)$) {};
          \node (4#2) at ($(1#2) + (2,1)$) {};
          \node (5#2) at ($(1#2) + (2,-1)$) {};
          \node (6#2) at ($(1#2) + (3,0)$) {};
          
          \draw (1#2) -- (2#2) -- (4#2) -- (6#2) -- (5#2) -- (3#2) --(1#2);
          \draw (2#2) -- (3#2);
          \draw (4#2) -- (5#2);
        }
        
        \gadget{(0,0)}{A}
        \gadget{(4,0)}{B}
        \gadget{(8,0)}{C}
        \gadget{(12,0)}{D}
        \gadget{(16,0)}{E}

        \draw (1A) -- ($(1A) + (-1, 0)$);
        \draw (6E) -- ($(6E) + (1, 0)$);
        
        \foreach \X/\Y in {A/B, B/C, C/D, D/E}{
          \draw (6\X) -- (1\Y);
        }
        
        \end{tikzpicture}
        
        \caption{Example of expanded edge with length 5}
        \label{fig:expedge}
        \end{subfigure}
        
        \begin{subfigure}[b]{\textwidth}
            \centering

        \begin{tikzpicture}[scale=0.5, every node/.style={circle, draw, inner sep=1.5pt, font=\small}]
        
        \newcommand{\gadget}[2]{
          \node (1#2) at #1 {};
          \node (2#2) at ($(1#2) + (1,1)$) {};
          \node (3#2) at ($(1#2) + (1,-1)$) {};
          \node (4#2) at ($(1#2) + (2,1)$) {};
          \node (5#2) at ($(1#2) + (2,-1)$) {};
          \node (6#2) at ($(1#2) + (3,0)$) {};
          
          \draw (1#2) -- (2#2) -- (4#2) -- (6#2) -- (5#2) -- (3#2) --(1#2);
          \draw (2#2) -- (3#2);
          \draw (4#2) -- (5#2);
        }
        
        \gadget{(0,0)}{A}
        \gadget{(4,0)}{B}
        \gadget{(8,0)}{C}
        \gadget{(12,0)}{D}
        \gadget{(16,0)}{E}

        \node[fill=black] (F) at ($(1A) + (-1, 0)$) {}; 
        \draw (1A) -- (F);
        \draw[dotted, thick] (F) -- ($(F) + (-1, 1)$);
        \draw[dotted, thick] (F) -- ($(F) + (-1, -1)$);
        \node[fill=black] (L) at ($(6E) + (1, 0)$) {};
        \draw (6E) -- (L);
        \draw[dotted, thick] (L) -- ($(L) + (1, 1)$);
        \draw[dotted, thick] (L) -- ($(L) + (1, -1)$);
        
        \foreach \X/\Y in {A/B, B/C, C/D, D/E}{
          \draw (6\X) -- (1\Y);
        }

        \node[fill=magenta] at (1A) {};
        \node[fill=magenta] at (6E) {};
        
        \end{tikzpicture}
        
        \caption{Expanded edge with highlighted vertex types}
        \label{fig:expedgecol}
        \end{subfigure}
    \end{subfigure}
    \caption{Illustrations for gadget expansion}
    \label{fig:gadget_expedge}
\end{figure}

\subsection{Graph encoding}\label{sec:graph-encoding}

In this subsection, we introduce the graph encoding that will be used, together with relevant definitions, lemmas, and theorems. 
We will first introduce the following concepts:
\begin{description}
\item[A-gadget.]
We will define A-gadget as a gadget with vertices \(V = \{1, 2, 3, 4, 5, 6\}\) and edges \(E = \{\{1\}, \{1, 2\}, \{1, 3\}, \{2, 3\}, \{2, 4\}, \{3, 5\}, \{4, 5\}, \{5, 6\}, \{4, 6\}, \{6\}\}\) (as illustrated on the \cref{fig:gadget}). 
\item[Expanded A-edge.]
Let us take an edge \(e\) with label \(w \in \mathbb{N}\) and replace it with a sequence of \(w\) A-gadgets.  We will call this an \emph{expanded edge}. We will refer to the number of gadgets in the expanded edge as the \emph{length}. An example of an expanded edge with length 5 is illustrated in \cref{fig:expedge}.
\item[Original vertex.]
An original vertex is a vertex that was a part of the original graph and thus is not a part of any gadget. These vertices are colored black on \cref{fig:expedgecol}.
\item[Outer vertex.]
An outer vertex is a vertex that belongs to a gadget and is connected to an original vertex. So, every original vertex would be connected to three outer vertices. These vertices are colored pink on \cref{fig:expedgecol}.
\item[Inner vertex.]
An inner vertex is a vertex that belongs to a gadget and is not connected to any original vertex. These vertices are colored white on \cref{fig:expedgecol}.
\item[Gadgeting.]
First, let us take a 3-regular simple graph \((G, x)\), where \(x: E(G) \to \mathbb{N}\) is a distance-\(2r\) edge coloring. Then \(\text{gadget}(G, x) = (G', w)\), where \(w: E(G') \to \{0, 0.5\}\) and each edge in \(E(G)\) is expanded by definition. Finally, \(w\) is defined as follows:
\begin{itemize}[noitemsep]
    \item \(w(i, j) = 0.5 \text{ if }i \text{ is original vertex or } j \text{ is original vertex}\)
    \item \(w(i, j) = 0 \text{ otherwise}\)
\end{itemize}
We will refer to the resulting graph as \emph{correctly gadgeted 3-regular graph}.
\item[Gadgeted distance.]
We will define function \(\text{gdist}(u, v) = \min_{\pi \in \Pi}\sum_{v_i \in \pi}w(v_i, v_{i + 1})\).
\begin{remark}
    Note, that for two original vertices \(u, v \in G\) with distance \(d\), that after expansion became vertices \(u', v' \in G': \text{gdist}(u', v') = d\).
\end{remark}
\item[Gball.]
The \(r\)-gball for a vertex \(v\) includes all vertices and edges at gdist \(\leq r\) from \(v\). We will denote it as \(\gn_r(v)\).
\end{description}

Now we are ready to state the main result of this section, which we will prove in \cref{ssec:graph-theoretic-proofs}:

\begin{theorem}[Coloring of gadgeted graphs]\label{thm:gadgeting} Let \(r\) be  a positive natural number. Assume that \(G\) is a correctly gadgeted graph with labeling \(\chi: V(G) \to \mathbb{N}\) that satisfies:
\begin{enumerate}
	\item \(\chi\) is a distance 2 coloring: for every vertex \(v\) and vertices \(w, u \in \mathcal{N}(v) \cup {v}:\) \(\chi(w) \neq \chi(u)\).
    \item \(\chi\) is edge-consistent in every \(r\)-gball: if there exists \(\{v, u\} \in E(G)\), then for every vertex \(v' \in \gn_{r}(v)\) such that \(\chi(v') = \chi(v)\) there exists a vertex \(u'\) such that \(\{v', u'\} \in E(G)\) and \(\chi(u') = \chi(u)\).
\end{enumerate}
Then \(\chi\) is a correct gdistance-\(2r\) original coloring, that is, for each original vertex \(v\) it holds that every vertex in \(\gn_r(v)\) has a unique color.
\end{theorem}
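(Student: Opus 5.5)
The proof will be by contradiction, using the rigidity of the A-gadget structure. Suppose there are an original vertex $v$ and two distinct vertices $a,b\in\gn_r(v)$ with $\chi(a)=\chi(b)$. The plan is to show that a ``colour-preserving local map'' induced by $\chi$ must identify $a$ with $b$ along structure that the correctly gadgeted graph cannot support.

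The key tool is edge-consistency (condition~2). Take a shortest walk $v=p_0,p_1,\dots,p_m=a$ inside $\gn_r(v)$. Since $\chi(b)=\chi(a)$ and $b\in\gn_r(v)$, we can run the walk backwards from $b$. At each step, condition~2 applied at the current vertex $q_i$ (which has the same colour as $p_i$) gives a neighbour $q_{i-1}$ with $\chi(q_{i-1})=\chi(p_{i-1})$. By condition~1, this neighbour is unique, since the neighbours of $q_i$ all receive distinct colours. This yields a well-defined colour-preserving walk $q_m=b,\dots,q_0$, and we get $\chi(q_0)=\chi(v)$.

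More generally, condition~1 makes the map "vertex of colour $c$ adjacent to $x$" into a partial bijection. So $\chi$ restricted to a gball behaves like a local covering map onto the ``colour graph''. It therefore preserves degrees and cycle structure of bounded length, as long as we stay within the gball radius where condition~2 is guaranteed.

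The next step uses the gadget shape to pin down vertex types. Triangles are preserved under this local covering: a triangle $x,y,z$ lifts to a closed colour-walk of length 3, and distance-2 colouring prevents it from degenerating. In a correctly gadgeted graph, original vertices are exactly the vertices lying on no triangle and no 4-cycle. Outer and inner vertices each lie on a triangle of an A-gadget with a specific local pattern: the gadget consists of two triangles joined by a 4-cycle. Hence the lifted walk maps original vertices to original vertices, gadget positions to the same gadget positions, and entire A-gadgets to A-gadgets, oriented consistently.

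Consequently, the colour-preserving correspondence between the walks from $v$ to $a$ and from $q_0$ to $b$ respects gdist. The walk $q$ passes through the same number of gadgets, so $q_0$ is an original vertex with $\chi(q_0)=\chi(v)$ and $\text{gdist}(v,q_0)\le 2r$.

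The final step, which I expect to be the main obstacle, is to show $q_0=v$. Once that holds, uniqueness of lifts forces $q_i=p_i$ for all $i$, so $a=b$, a contradiction. To rule out $q_0\neq v$, I would use the fact that the expanded lengths encode the distance-$2r$ edge colouring $x$. Consider the three expanded edges at $v$ and at $q_0$. Their lengths form a triple of $x$-values, and colour-consistency forces the lifted gadget chains at $q_0$ to have the same lengths as those at $v$ (count gadgets along a colour-preserving walk, which is again determined uniquely by condition~1).

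If $q_0\neq v$ with $\text{gdist}(v,q_0)\le 2r$, then a shortest original path joins them. The edges incident to $v$ and the edges incident to $q_0$ all lie within distance $2r$ of each other in $G$, so they carry distinct $x$-labels and hence distinct lengths. This contradicts length-matching. The delicate part is making sure every application of condition~2 happens at a vertex whose $r$-gball contains the relevant neighbours, so I may need to route the argument through the midpoint and apply consistency in two halves, each of gdist at most $r$. Handling the boundary effects of the half-weight edges $0.5$ at original vertices is where careful bookkeeping is needed.
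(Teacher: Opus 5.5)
\textbf{Gap: type preservation.} The gap is your type-preservation step. You argue that since $\chi$ is a local covering onto the colour graph, it ``preserves cycle structure of bounded length.'' From this you conclude that original and triangle vertices never share colours, and that A-gadgets map to A-gadgets in order.

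Coverings do not lift short cycles. Lifting a triangle $p_0p_1p_2$ from a vertex $q_0$ with $\chi(q_0)=\chi(p_0)$ gives $q_0,q_1,q_2,q_3$ with $\chi(q_3)=\chi(q_0)$, but $q_3$ may lie at distance $3$ from $q_0$, which condition~1 allows. For example, colour the $3$-cube by its double cover onto $K_4$ (antipodal vertices share a colour). This satisfies both hypotheses, yet the colour graph is full of triangles and the cube has none. So ``original vertices are sent to original vertices'' cannot come from the abstract covering property.

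It must use the concrete geometry of the A-gadget, and this is where the paper does its real work. In \cref{lem:outer_disconnected}, suppose vertices coloured $\chi(w_i)$ and $\chi(w_j)$ are adjacent somewhere in the gball of an original $v$. Then edge-consistency puts colour $\chi(w_i)$ on an inner neighbour of $w_j$. That vertex then needs a neighbour coloured $\chi(v)$, and a case check shows every placement of it inside the A-gadget violates distance-$2$ colouring. Only from this does it follow that no triangle vertex shares a colour with an original vertex.

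\textbf{Same gap in length matching.} ``Counting gadgets along a colour-preserving walk'' assumes three things: the lifted walk advances along its own chain in step with the original, it never folds back, and it never gives an outer vertex's colour to an inner vertex. The paper proves each of these separately:
\begin{itemize}
\item colours within one gadget sequence are distinct (\cref{g:lem:gadget_seq_correct}, a minimal-counterexample argument);
\item outer colours are not reused on inner vertices (\cref{g:cor:outer_is_outer});
\item by induction, the level sets $S_k,S'_k$ of two chains starting at equally coloured outer vertices carry equal colour sets up to $\min(l,l')$. This forces $l=l'$, contradicting the distinct lengths given by the edge colouring $x$ (\cref{g:lem:outer_unique}).
\end{itemize}

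\textbf{What survives.} Once these facts are supplied, your closing reduction is valid and somewhat cleaner than the paper's final corollaries. You lift backwards from $b$, show $q_0=v$ by length mismatch, and then get $q_i=p_i$ from distance-$2$ colouring. The radius issue you flag is real but secondary: condition~2 only transfers edges to $v'\in\mathrm{gball}_r(v)$, while your $q_0$ can be at gdist $2r$ from $v$. The paper is similarly informal on this point, and in its application to problem~$D$ edge-consistency is imposed on a much larger view.
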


We will also prove the following lemma in \cref{ssec:graph-theoretic-proofs}:

\begin{restatable}[Original vertex is not a part of a triangle]{lemma}{outerdisc}\label{lem:outer_disconnected}
    Let \(v\) be an original vertex with color \(\chi(v)\). Let \(w_1, w_2, w_3\) be its neighbors. Then there is no pair of vertices \(\{u_1, u_2\} \in E(G)\) in the $r$-gball of \(v\), that \(\chi(u_1) = \chi(w_i)\) and \(\chi(u_2) = \chi(w_j)\), where \(i \neq j \text{ and } i, j \in \{1, 2, 3\}\). Or, in other words, there is no edge between the vertex with the same color, as two outer vertices.
\end{restatable}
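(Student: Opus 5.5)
Throughout I work under the hypotheses of \cref{thm:gadgeting}: $\chi$ is a distance-$2$ coloring that is edge-consistent in every $r$-gball. The plan is proof by contradiction. Suppose there is an edge $\{u_1,u_2\}$ in $\gn_r(v)$ with $\chi(u_1)=\chi(w_i)$ and $\chi(u_2)=\chi(w_j)$ for some $i\neq j$. I will use edge-consistency to transport the local structure around $v$ onto $u_1$ and $u_2$, and then show that this forces a triangle, or a vertex of degree exceeding $3$, at a place in the correctly gadgeted graph where neither can occur.

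First I fix the local structure around $v$. In a correctly gadgeted graph the three outer neighbours $w_1,w_2,w_3$ of $v$ lie in three different A-gadgets. The reason is that the original graph is simple and $3$-regular, and the distance-$2r$ edge coloring gives all three incident expanded edges positive length. Each $w_i$ is vertex $1$ or vertex $6$ of its A-gadget, so it lies on exactly one triangle, namely $\{1,2,3\}$ or $\{4,5,6\}$. Its two gadget neighbours are adjacent to each other, and neither of them is adjacent to $v$ or to any $w_k$ with $k\ne i$. By distance-$2$ coloring, $v,w_1,w_2,w_3$ have four distinct colors.

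Next I apply edge-consistency at $u_1$. Since $\{w_i,v\}\in E(G)$ and $\chi(u_1)=\chi(w_i)$, I need $u_1\in\gn_r(w_i)$, which should follow because $\gn_r(v)$ and $\gn_r(w_i)$ differ by at most a gdist of $0.5$. If this radius bookkeeping is a problem, I would shrink the edge to one lying slightly inside the ball. Granting it, $u_1$ has a neighbour $v'$ with $\chi(v')=\chi(v)$. Applying edge-consistency again to $\{v,w_j\}$ at $v'$ (this needs $v'$ within the gball of $v$), $v'$ has a neighbour $z$ with $\chi(z)=\chi(w_j)$. By distance-$2$ coloring at $u_1$, the neighbours of $u_1$ have distinct colors and differ from $\chi(u_1)$. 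So the neighbour $u_2$, with color $\chi(w_j)$, and the neighbour $v'$, with color $\chi(v)$, are distinct. Now $z$ and $u_2$ both carry color $\chi(w_j)$ and are both at distance at most $2$ from $u_1$. Distance-$2$ coloring only forbids equal colors within a closed neighbourhood, but $z$ and $u_2$ lie in the closed neighbourhoods of $v'$ and $u_1$ respectively, not necessarily a common one. The goal is therefore to derive that $u_1,u_2,v'$ (or $z$) close a short cycle with colors $\chi(v),\chi(w_i),\chi(w_j)$.

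The main step, and the one I expect to be the real obstacle, is turning this color pattern into a contradiction with the actual geometry. I would run edge-consistency backwards from the triangle-free side. The neighbours of $w_i$ other than $v$ form an edge $\{a,b\}$ inside the A-gadget. Transporting this to $u_1$, and likewise for $u_2$, gives that $u_1$ has neighbours with colors $\chi(v),\chi(a),\chi(b)$, and so $u_1$'s three neighbours are fully determined color-wise. Since $u_2$ is a neighbour of $u_1$ whose color $\chi(w_j)$ differs from $\chi(v)$, we must have $\chi(w_j)\in\{\chi(a),\chi(b)\}$. But $a$ is at distance $2$ from $w_j$ only through $v$? No: $a$ and $w_j$ are at distance $3$ via $w_i,v$. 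So this does not immediately clash with distance-$2$ coloring, and I need a further transport step. Applying edge-consistency to the edge $\{a,w_i\}$ at the vertex $u_2$ (which has $a$'s color) yields a neighbour of $u_2$ with color $\chi(w_i)$, which is $u_1$. Applying it to $\{a,b\}$ yields a neighbour with color $\chi(b)$. Then $u_2$, the color-analogue of $a$, must also carry $w_j$'s neighbourhood colors, namely $\chi(v)$ and $w_j$'s own gadget pair. Counting, $u_2$ would then need at least four distinctly colored neighbours, contradicting $3$-regularity. Making this count airtight will require careful case analysis over which gadget vertices can match which colors, using the rigid A-gadget structure (two triangles joined by a $4$-cycle) and the fact that no triangle contains an original vertex.
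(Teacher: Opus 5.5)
\textbf{There is a genuine gap: the proposal never reaches a contradiction, and the closing count does not hold.}

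Suppose $\chi(u_2)=\chi(w_j)=\chi(a)$. Transporting the edges at $w_j$ and at $a$ to $u_2$ only shows that the neighbour-colour set of $u_2$ equals two sets:
\begin{itemize}
\item $\{\chi(v),\chi(a'),\chi(b')\}$, where $a',b'$ are the gadget neighbours of $w_j$;
\item $\{\chi(w_i),\chi(b),\chi(c)\}$, where $c$ is the third neighbour of $a$.
\end{itemize}
These sets coincide as soon as $\chi(c)=\chi(v)$ and $\{\chi(a'),\chi(b')\}=\{\chi(w_i),\chi(b)\}$. Distance-$2$ colouring does not exclude either coincidence, because the vertices involved are at distance $3$ or more from each other. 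So ``$u_2$ needs four distinct neighbour colours'' does not follow.

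The triangle route fails for a related reason. The vertices $u_1,u_2,v'$ sit at an uncontrolled place in the gball, possibly inside a gadget, where triangles do occur. You also cannot argue that a vertex coloured $\chi(v)$ must be original: that fact (\cref{g:cor:original_is_original}) is derived later from this very lemma. Any contradiction has to be produced where the geometry is known, namely inside the A-gadgets adjacent to $v$.

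\textbf{The missing step.} You already hold the right starting configuration: $\chi(w_j)\in\{\chi(a),\chi(b)\}$, say $\chi(a)=\chi(w_j)$. (The paper gets the symmetric statement in one step, by applying edge-consistency to the edge $\{u_2,u_1\}$ at $w_j$ itself.) From there, stay inside the gadget of $w_i$. Label its vertices so that $a\sim c$, $b\sim c'$, and $d$ is adjacent to both $c$ and $c'$. Let $k$ be the third index.
\begin{enumerate}
\item Edge-consistency for $\{w_j,v\}$ at $a$ forces a neighbour of $a$ coloured $\chi(v)$. It cannot be $w_i$, which is adjacent to $v$, nor $b$, since $\dist(b,v)=2$. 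Hence $\chi(c)=\chi(v)$.
\item Transporting the three edges at $v$ to $c$ forces $\{\chi(c'),\chi(d)\}=\{\chi(w_i),\chi(w_k)\}$.
\item Since $\dist(c',w_i)=2$, this gives $\chi(c')=\chi(w_k)$ and $\chi(d)=\chi(w_i)$.
\item Edge-consistency for $\{d,c'\}$ at $w_i$ forces a neighbour of $w_i$ coloured $\chi(w_k)$. The only candidate is $b$, so $\chi(b)=\chi(w_k)=\chi(c')$ although $b\sim c'$, a contradiction.
\end{enumerate}
This short case analysis inside a single A-gadget is exactly what the paper's proof does, and it is the step your proposal is missing.
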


Finally, let us show that the encoding is unique:

\begin{lemma}[Uniqueness of encoding]\label{lem:unique_encoding}
        Let $G_1$ and $G_2$ be two graphs, and let $x_1 : E(G_1) \to [k]$ and $x_2 : E(G_2) \to [k]$ be input edge-labelings for some $k \ge 0$.
	Let $G_1'$ and $G_2'$ denote the respective encodings of $(G_1, x_1)$ and $(G_2, x_2)$. Then, if $G_1' \cong G_2'$, it must hold that  $(G_1,x_1)$ and $(G_2,x_2)$  are also isomorphic.%
\end{lemma}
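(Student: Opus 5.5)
Given an isomorphism $\varphi\colon G_1'\to G_2'$, the plan is to show that $\varphi$ respects the decomposition of each encoded graph into original vertices and expanded edges. The isomorphism $(G_1,x_1)\cong(G_2,x_2)$ is then read off directly, in the same spirit as \cref{lemma:unique_encoding}. The whole argument rests on a purely graph-theoretic characterisation of original vertices inside a correctly gadgeted graph, so that they are preserved by any isomorphism.

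\emph{Step 1: original vertices are exactly the vertices lying on no triangle.} Inside an A-gadget with vertices $1,\dots,6$, the sets $\{1,2,3\}$ and $\{4,5,6\}$ are triangles, so every gadget vertex lies on a triangle. An original vertex $v$ has three outer neighbours. Each outer neighbour belongs to a different A-gadget, or is an endpoint of a different expanded edge, and its only other neighbours lie inside its own gadget. Hence no two neighbours of $v$ are adjacent, and $v$ lies on no triangle. This holds even when two expanded edges at $v$ lead to the same neighbour in $G_1$: $G_1$ is simple, so parallel expanded edges do not occur. Since ``lies on no triangle'' is invariant under isomorphism, $\varphi$ maps original vertices of $G_1'$ bijectively onto original vertices of $G_2'$.

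\emph{Step 2: the expanded edges are the components of $G'\setminus\{\text{original vertices}\}$.} Deleting all original vertices leaves a disjoint union of paths of A-gadgets. Each such path arises from one edge $e=\{u,v\}$ of $G_i$, has exactly $x_i(e)$ gadgets, and is attached to exactly two original vertices $u'$ and $v'$ through its two outer vertices. Because $\varphi$ preserves the set of original vertices, it also maps the remaining components onto the remaining components. We then define $\psi\colon V(G_1)\to V(G_2)$ by letting $\psi(u)$ be the vertex of $G_2$ corresponding to $\varphi(u')$.

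\emph{Step 3: $\psi$ is an isomorphism that preserves labels.} If $\{u,v\}\in E(G_1)$, then its expanded edge $C$ is a component attached to $u'$ and $v'$. Therefore $\varphi(C)$ is a component attached to $\varphi(u')$ and $\varphi(v')$, so $\{\psi(u),\psi(v)\}\in E(G_2)$. The converse follows by applying the same argument to $\varphi^{-1}$. The label is recovered as the number of A-gadgets in the component, which equals $|V(C)|/6$ and is preserved by $\varphi$. Hence $x_2(\{\psi(u),\psi(v)\})=x_1(\{u,v\})$.

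The main obstacle is the degenerate labels. If $x(e)=0$, the expanded edge is empty, and $u'$ and $v'$ become directly adjacent original vertices. We handle this by also counting, for each pair of adjacent original vertices, a zero-length edge between them; $\varphi$ preserves these edges as well. This case does not occur in any case when $x$ is a proper distance coloring with labels in $[k]$, which are positive. A second point to check is that the triangle characterisation uses simplicity and $3$-regularity of the underlying graph. That is the only place where the proof relies on that assumption.
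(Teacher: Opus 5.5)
Your proof is correct and follows the same plan as the paper's: send each original vertex to the vertex of $G_2$ decoded from its image under $\varphi$, then read off edges and labels from the fact that $\varphi$ carries expanded edges to expanded edges of the same length. The paper only asserts that $\varphi$ respects original vertices and expanded edges. Your triangle characterisation (every A-gadget vertex lies on a triangle, no original vertex does), together with the decomposition of $G'$ minus the original vertices into components, is what justifies that step.
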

\begin{proof}
	Let us denote \(\zeta: V(G) \to V(G')\) as an injective function between vertices of graph \((G, x)\) and its encoding \(G'\). Such that \(\forall v \in V(G):\ \zeta(v)\) is original, and \(\forall \{v, w\} \in E(G), x(\{v, w\}) = m\) exists an expanded edge of length \(m\) between \(\zeta(v)\) and \(\zeta(w)\).
	Let the isomorphism between $G'_1$ and $G'_2$ be $\varphi$.
        We define a mapping $\varphi_2 : V(G_1)\rightarrow V(G_2)$ as follows:
        \[
	\text{For each node $v\in V(G_1)$, } \varphi_2(v) = \mathsf{decode}(\varphi(\mathsf{decode}(v))).
        \]
We next show that $\varphi_2$ is an isomorphism between the decoded labeled graphs.

Let $v,w\in V(G_1)$ be distinct.
By definition of the encoding, $\{v,w\}\in E(G_1)$ and \(x_1(\{v, w\}) = m\) if and only if there exists an expanded edge in $G_1'$
connecting a vertex \(\zeta(v)\) and \(\zeta(w)\).
As $\varphi$ preserves adjacency, it maps this expanded-edge to an expanded-edge
of the same length in $G_2'$.
Let $\varphi_2(v)=v'$ and $\varphi_2(w)=w'$.
Then $\varphi(\zeta(v))$ and $\varphi(\zeta(w))$ are exactly the nodes corresponding to $v'$ and $w'$ in $G_2'$,
so the existence of such expanded-edge is equivalent to $\{v',w'\}\in E(G_2)$ and \(x_2(\{v', w'\}) = m\).
Hence $\{v,w\}\in E(G_1) \text{ and } x_1(\{v, w\}) = m \iff \{v',w'\}\in E(G_2) \text{ and } x_2(\{v', w'\}) = m$.
\end{proof}

\subsection{PN-decoding}
\paragraph{Non-triangle vertex.} We will call a vertex \emph{non-triangle} if it is not a part of any triangle.

\begin{remark}
	Note that if distance-$2$ neighborhood of vertex \(u\) is included in \(T^r(v)\), then we can easily see if \(u\) is a non-triangle vertex or not.
\end{remark}

\begin{remark}
	By \cref{lem:outer_disconnected}, in a correctly gadgeted 3-regular graph only original vertices are non-triangle vertices. 
\end{remark}

\paragraph{Non-triangle restricted view.}
\(\check{T}^r(v)\) is a subgraph of \(T^r(v)\), such that the root \(v\) is a non-triangle vertex and every path from the 
root \(v\) to the leaves includes exactly \(r_B\) non-triangle vertices, and the last vertex in the path is a non-triangle vertex. 
\begin{remark}
    \(\check{T}^r(v)\) does not exist for some \(T^r(v)\). For example, if vertex \(v\) is not a non-triangle vertex.
\end{remark}

\paragraph{Correctly gadgeted balls.}  Let \(\mathcal{H}_g^r(v)\) be the set of all possible $r$-gballs of correctly gadgeted graphs.

\paragraph{First reconstruction function for D.}\label{def:r:d1} Let  $(\check{T}^r(v),\ssigma{D}{out})$ be a non-triangle restricted PN-view, the reconstruction function $R_{D1}(\cdot)$ is
\[R_{D1}(\check{T}^r(v), \ssigma{D}{out}), \]
defined as follows:
\begin{itemize}
                \item For every non-empty set \(M_\beta = \left\{u | u \in V(\check{T}^r(v)), \ssigma{D}{out}[0](u) = \beta\right\}\) there exists exactly one \(u' \in V(H^r(v))\). Let it be \(u' = \gamma(M_\beta)\).
                \item For every edge \((u, w) \in E(T^r_G(v))\) there exists exactly one edge: \[(u', w') = (\gamma(M_{\chi(u)}), \gamma(M_{\chi(w)})) \in E(H^r(v)).\]
                \item There are no other edges or vertices besides specified above.
\end{itemize}

\paragraph{Second reconstruction function for D.}\label{def:r:d2} 
 Let  $(\check{T}^r(v),\ssigma{D}{out})$ be a non-triangle restricted PN-view, such that \(R_{D1}\left(\check{T}^r(v), \ssigma{D}{out}\right) \in \mathcal{H}_g^r(v)\), the reconstruction function $R_{D2}(\cdot)$ is
\[R_{D2}(\check{T}^r(v), \ssigma{D}{out}) = H^r(v), \]
defined as follows:
\begin{itemize}
                \item For every vertex \(u\) such that \( u \in V(\check{T}^r(v)) \text{ and }u \text{ is non-triangle} \) there exists exactly one \(u' \in V(H^r(v))\). Let it be \(u' = \gamma(u)\).
                \item \((u', w') = (\gamma(u), \gamma(w)) \in E(H^r(v))\) if and only if there exists a path between \(u\) and \(w\), consisting of only triangle vertices (not non-triangle vertices).
                \item There are no other edges or vertices besides specified above.
                \item For all $u' = \gamma(u) \in H^r(v)$ we have $\ssigma{B}{out}(u') = \ssigma{D}{out}[1](u)$.
\end{itemize}

\paragraph{Problem D.}\label{def:problem:d}
Problem D is an LCL problem defined on $3$-regular graphs as follows:
\begin{itemize}[noitemsep]
    \item \(r_D = (4 k + 1)  r_B + 1\), where \(k\) is a constant that is larger than the number of edges in any $3$-regular centered graph of radius \(r_B\),
    \item \(\inlabels_D = \emptyset\),
    \item \(\outlabels_D = [\zeta] \times \outlabels_B\), where \(\zeta \) is an arbitrary constant that is larger than the maximum number of vertices in any 3-regular graph of radius \(r_D\), and
    \item \(\ssigma{D}{out} = \ssigma{D}{out}[0] \times \ssigma{D}{out}[1] \).
\end{itemize}
We say that \(f: V \to [\zeta]\) is reconstructible if it is a distance-$2$ coloring and is edge consistent, meaning that for every two vertices that are labeled the same, the labels of their neighbors are also the same. Let \(I(f)\) be an indicator function that returns True iff \(f\) is reconstructible.
Finally, we define the family of centered PN-views as follows:
\begin{align*}
	\mathcal{C}_D \coloneqq{}
	&\bigl\{ \left(T^r(v), \ssigma{D}{out}\right) \mid I(\ssigma{D}{out}[0]) \wedge \exists \check{T}^r(v) \wedge R_{D1}\left(\check{T}^r(v), \ssigma{D}{out}\right) \in \mathcal{H}_g^r(v) \wedge R_{D2}\left(\check{T}^r(v), \ssigma{D}{out}\right) \in \mathcal{C}_B \bigr\} \\
	{}\vee{}
	&\bigl\{ \left(T^r(v), \ssigma{D}{out}\right) \mid  I(\ssigma{D}{out}[0])  \wedge  \left( \nexists \check{T}^r(v) \vee R_{D1}\left(\check{T}^r(v), \ssigma{D}{out}\right) \notin \mathcal{H}_g^r(v) \right) \bigr\}.
\end{align*}

\subsection{Correctness of the encoding}

\begin{lemma}[Soundness of the Lift from $B$ to $D$]\label{lem:lift:b_to_d}
	 Let $\instance_B=(G_B)$ be an instance of problem $B$, let \(x\) be any distance \(2r_B\) edge coloring and let $\instance_D = (G_D)$ be the
	corresponding encoded (gadgeted) instance for problem $D$.
	Let $\ssigma{D}{out}$ be any legal output labeling for $\instance_D$.
	Define an output labeling $\ssigma{B}{out}$ for $G_B$ by identifying each vertex
	$u\in V(G_B)$ with its corresponding \emph{original} (non-triangle) vertex \(v\) in $G_D$ and setting
	\[
	\ssigma{B}{out}(u) := \ssigma{D}{out}[1](v).
	\]
	Then \(\ssigma{B}{out}\) is legal for problem \(B\) on \(G_B\).
\end{lemma}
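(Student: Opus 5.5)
Fix a vertex $u\in V(G_B)$ and let $v$ be the original (non-triangle) vertex of $G_D$ corresponding to $u$. It suffices to show that the radius-$r_B$ neighborhood of $u$ in $(G_B,\ssigma{B}{out})$ is isomorphic to an element of $\CC_B$. Since $\ssigma{D}{out}$ is legal, $(T^{r_D}(v),\ssigma{D}{out})\in\CC_D$, so it lies in one of the two disjuncts defining $\CC_D$. In both, $I(\ssigma{D}{out}[0])$ holds. The plan is therefore:
\begin{enumerate}
    \item Show that in a correctly gadgeted instance the second disjunct is impossible at $v$.
    \item Show that in the first disjunct $R_{D2}$ reconstructs exactly the $r_B$-ball of $u$ in $G_B$, labeled by $\ssigma{B}{out}$.
\end{enumerate}

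\textbf{Step 1: $\chi$ is a genuine coloring near $v$.} Set $\chi=\ssigma{D}{out}[0]$. Reconstructibility is checked at every vertex, so $\chi$ is a distance-$2$ coloring and edge-consistent everywhere. In particular it is edge-consistent inside every $r$-gball, with $r$ the gball radius corresponding to $r_D$; here $r_D=(4k+1)r_B+1$ is chosen so that the PN-view covers the relevant gball. Hence \cref{thm:gadgeting} applies: every vertex of $\gn_r(v)$ has a unique color.

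By the remark after \cref{lem:outer_disconnected}, the non-triangle vertices are exactly the original vertices, and $v$ is one of them. Because $G_D$ is correctly gadgeted, each expanded edge is traversed through its triangle vertices. So the non-triangle restricted view $\check{T}^r(v)$ exists, with every root-to-leaf path containing $r_B$ original vertices.

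\textbf{Step 2: $R_{D1}$ recovers the true gball.} Since colors are unique within the gball, the classes $M_\beta$ of $\check T^r(v)$ correspond bijectively to actual vertices of $G_D$. Edges of the unfolding project to actual edges. Thus $R_{D1}(\check T^r(v),\ssigma{D}{out})$ is isomorphic to the true gball of $v$ in $G_D$, which lies in $\mathcal H_g^r(v)$. This rules out the second disjunct, so the first holds and $R_{D2}(\check T^r(v),\ssigma{D}{out})\in\CC_B$.

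\textbf{Step 3: $R_{D2}$ recovers $\ball{r_B}{G_B}{u}$ with the right labels.} I will argue that $R_{D2}$ contracts each expanded edge, which is a triangle-vertex path between two original vertices, back into an edge. By \cref{lem:unique_encoding} and the correspondence of gdist with distance in $G_B$, the result is then isomorphic to $\ball{r_B}{G_B}{u}$. The labels carried over are $\ssigma{D}{out}[1]$ on original vertices, which by definition equal $\ssigma{B}{out}$. Hence $\ball{r_B}{G_B}{u}$ with $\ssigma{B}{out}$ is in $\CC_B$, and $u$ is satisfied.

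\textbf{Main obstacle.} The delicate point is Step 3. $R_{D2}$ is defined on the tree $\check T^r(v)$, so distinct walks reaching the same original vertex must be identified, or $R_{D2}$ produces an unfolding rather than the ball. I would handle this by interpreting $\gamma$ modulo $\chi$-classes: colors are unique in the gball by Step 1, so collapsing walks of equal color is exactly the identification performed by $R_{D1}$. I also need that a triangle-only path between two original vertices exists exactly when they are joined by an expanded edge. This uses that expanded edges are connected internally via triangle vertices and that distinct expanded edges meet only at original vertices.
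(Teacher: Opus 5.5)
Your proposal is correct and follows the paper's proof essentially step for step. Legality gives reconstructibility of $\ssigma{D}{out}[0]$; \cref{thm:gadgeting} then makes the colors unique on the gball of the original vertex $v$; so $R_{D1}$ lands in $\mathcal{H}_g^r(v)$, which forces the first disjunct of $\CC_D$ and hence $R_{D2}(\check{T}^r(v),\ssigma{D}{out})\in\CC_B$; and this reconstruction is the $\ssigma{B}{out}$-labeled $r_B$-ball of $u$. Your added point, that $R_{D2}$ must identify walks modulo $\chi$-classes (otherwise it returns the tree unfolding rather than the ball), only spells out what the paper compresses into ``by definition of gadgeting and $R_2$''.
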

\begin{proof}
	Fix any vertex $u\in V(G_B)$, and let $v$ be the corresponding original vertex in $G_D$.
	Since $\ssigma{D}{out}$ is legal for problem $D$ it satisfies the constraint $\CC_D$.
	Consider the radius $r_D$ PN-view $(T^{r_D}_{G_D}(v),\ssigma{D}{out})$.
	Since $G_D$ is correctly gadgeted, the only non-triangle vertices are the original vertices.
	Thus, the non-triangle restricted view \((\check{T}^{r_D}(v), \ssigma{D}{out})\) exists for an original vertex \(v\).
	By definition of problem D, \(\ssigma{D}{out}[0]\) is reconstructible in \((T^{r_D}_{G_D}, \ssigma{D}{out})\) and thus is a distance 2 coloring and edge consistent in $(\check{T}^{r_D}(v), \ssigma{D}{out})$.
	By \cref{thm:gadgeting} it implies that \(\ssigma{D}{out}[0]\) is a correct gdistance \(2r_B\) original coloring.
	Meaning, every vertex in the gball of original vertex \(v\) has a unique color. 
	Thus, \(R_1\) function works correctly and \(R_1(\check{T}^{r_D}(v), \ssigma{D}{out})\) belongs to $\mathcal{H}^{r_D}_g(v)$.
	By definition of problem \(D\) it implies that \(R_2(\check{T}^{r_D}(v), \ssigma{D}{out})\) must belong to \(\mathcal{C}_B\).
	
	Finally, \(R_2(\check{T}^{r_D}(v), \ssigma{D}{out}) \cong (B^{r_B}_{G_B}(u), \ssigma{B}{out})\) by definition of gadgeting and \(R_2\). Thus, for \(\forall u \in V(G_B): (B^{r_B}_{G_B}(u), \ssigma{B}{out}) \in \mathcal{C}_B\) and \(\ssigma{B}{out}\) is legal for problem \(B\) on \(G_B\).   
\end{proof}

\paragraph{Contracted graph.} Let \(G_D\) be an instance of problem \(D\). Then we will define the \emph{contracted} \(G_D\) as follows. Locally, we contract every expanded edge by replacing it with a single edge. If the result contains any multi-edges or loops, we reverse contraction on those edges.

\begin{lemma}[Soundness of the Lift from $D$ to $B$]\label{lem:lift:d_to_b}
	Let $\instance_D=(G_D)$ be an instance of problem $D$, and let $G_B$ be the
	instance for the problem \(B\), achieved by contracting \(G_D\).
		Let $\ssigma{B}{out}$ be a legal output labeling for problem \(B\) on $G_B$.
Let \(\chi\) be any distance \(2r_D\) coloring. 		
Define an output labeling $\ssigma{D}{out} = (\ssigma{D}{out}[0], \ssigma{D}{out}[1])$ as $\ssigma{D}{out}[0] = \chi$, and 
		\[\ssigma{D}{out}[1](v) =
                \begin{cases}
			\ssigma{B}{out}(u) & \text{if $u$ is a match in \(G_B\) for \(v\)},\\
                        \bot & \text{otherwise.}
		\end{cases}\]
		Then $\ssigma{D}{out}$ is legal for problem \(D\)  on $G_D$.
\end{lemma}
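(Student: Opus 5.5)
We check the constraint $\CC_D$ node by node. Fix $v\in V(G_D)$ and its radius-$r_D$ PN-view $(T^{r_D}(v),\ssigma{D}{out})$. The goal is to show that this view lies in one of the two sets whose union defines $\CC_D$. Both sets require $I(\ssigma{D}{out}[0])$, so we establish that first and then split on whether the reconstruction succeeds.

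\emph{Step 1: reconstructibility of $\chi$.} A distance-$2r_D$ coloring is in particular a distance-$2$ coloring, so the first half of $I$ holds. For edge consistency, suppose two vertices $a,b$ appear in the same PN-view (or, more generally, are compared by the checker) with $\chi(a)=\chi(b)$. Since they are both at distance at most $r_D$ from $v$, their distance in $G_D$ is at most $2r_D$. The distance-$2r_D$ property then forces $a=b$ as vertices of $G_D$. Hence their neighbourhoods are literally the same, and the multiset of neighbour colours coincides. This gives $I(\ssigma{D}{out}[0])$. If $\check T^{r_D}(v)$ does not exist, or $R_{D1}\notin\mathcal H^{r_D}_g(v)$, the view falls into the second set and we are done.

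\emph{Step 2: the main case.} Assume $\check T^{r_D}(v)$ exists and $R_{D1}(\check T^{r_D}(v),\ssigma{D}{out})\in\mathcal H^{r_D}_g(v)$. Because $\chi$ is injective on the ball of radius $r_D$ around $v$, $R_{D1}$ glues together exactly the tree nodes that project to the same vertex of $G_D$. So $R_{D1}$ returns the genuine gball of $v$ in $G_D$, and by assumption this gball is correctly gadgeted. In particular, $v$ is a non-triangle vertex whose surroundings up to gdistance $r_B$ consist of expanded A-edges between non-triangle vertices. These are exactly the vertices that survive as vertices of the contracted graph $G_B$, and each expanded edge contracts to a single edge.

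We then argue that $R_{D2}(\check T^{r_D}(v),\ssigma{D}{out})$ is isomorphic to $B^{r_B}_{G_B}(u)$, where $u$ is the match of $v$, together with $\ssigma{B}{out}$ restricted to it. Non-triangle vertices map to vertices of $G_B$. Triangle-only paths between them are the expanded edges, which are exactly the edges of $G_B$. The $\ssigma{D}{out}[1]$ labels were defined as $\ssigma{B}{out}$ on matches.

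Some care is needed in two places. First, we must check that local contraction never produces loops or multi-edges inside the gball. Where it would, the reversal rule keeps the edge expanded, and we must check this is consistent with what $R_{D2}$ sees. Second, the radius bound $r_D=(4k+1)r_B+1$ must be large enough. Each A-gadget has diameter $3$, and there are at most $k$ edges in a radius-$r_B$ ball, so the ball should fit inside the PN-view. Once this is in place, legality of $\ssigma{B}{out}$ gives $R_{D2}(\cdot)\in\CC_B$, and the view lies in the first set.

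\emph{Main obstacle.} The main obstacle is Step 2: showing that a view passing the $R_{D1}\in\mathcal H_g$ test really does reconstruct the actual neighbourhood of $u$ in $G_B$. This includes the degenerate cases where contraction is partially reversed and where $v$ is a non-triangle vertex of a malformed region that happens to look locally correct.

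I would handle this by using the injectivity of $\chi$ to identify the reconstructed gball with a subgraph of $G_D$. Contraction is defined purely locally on expanded edges, so it commutes with restricting to that subgraph. The remaining cases are then either genuinely gadgeted, and handled as above, or rejected by $R_{D1}$ and covered by the second set.
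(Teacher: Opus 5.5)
Your proposal is correct and follows essentially the same route as the paper's proof. Reconstructibility of $\chi$ comes directly from the distance-$2r_D$ property, and the cases where $\check{T}^{r_D}(v)$ does not exist or $R_{D1}\notin\mathcal{H}^{r_D}_g(v)$ fall under the second clause of $\mathcal{C}_D$. In the remaining case, injectivity of $\chi$ on the view makes $R_{D2}$ reproduce $(B^{r_B}_{G_B}(u),\ssigma{B}{out})$, which lies in $\mathcal{C}_B$ by legality. The subtleties you flag (reversed contractions near $u$, whether $r_D$ is large enough) are not discussed in the paper's proof either, which simply asserts the isomorphism from the distance-$2r_D$ property.
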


\begin{proof}
Fix any vertex $v\in V(G_D)$ and consider the labeled radius $r_D$ PN-view
	$\bigl(T^{r_D}_{G_D}(v),\ssigma{D}{out}\bigr)$.
By construction, $\ssigma{D}{out}[0]=\chi$ is reconstructible, hence
	$I_{\mathrm{rec}}(\ssigma{D}{out}[0])$ holds. 
		Now, if the non-triangle restricted view of \(\bigl(T^{r_D}_{G_D}(v),\ssigma{D}{out}\bigr)\) does not exist, 
	then the view is legal by the second rule of problem
	\(D\). If \(\bigl(\check{T}^{r_D}_{G_D}(v),\ssigma{D}{out}\bigr)\) exists
	but \(R_1\bigl(\check{T}^{r_D}_{G_D}(v),\ssigma{D}{out}\bigr) \notin \mathcal{H}^{r_D}_g\), then the view is again legal by the second rule of problem \(D\).
	Finally, if \(\bigl(\check{T}^{r_D}_{G_D}(v),\ssigma{D}{out}\bigr)\) exists
	and \(R_1\bigl(\check{T}^{r_D}_{G_D}(v),\ssigma{D}{out}\bigr) \in \mathcal{H}^{r_D}_g\), then, since \(\ssigma{D}{out}[0]\) is a distance \(2r_D\) coloring, \(R_2\bigl(T^{r_D}_{G_D}(v),\ssigma{D}{out}\bigr) \cong (B^{r_B}_{G_B}(u), \ssigma{B}{out})\). Hence, \(R_2\bigl(T^{r_D}_{G_D}(v),\ssigma{D}{out}\bigr) \in \mathcal{C}_B\).
	Thus, \(\forall v\in V(G_D) : \bigl(T^{r_D}_{G_D}(v),\ssigma{D}{out}\bigr) \in \mathcal{C}_D\) and the \(\ssigma{D}{out}\) is legal for problem \(D\) on \(G_D\).
\end{proof}
\begin{lemma}\label{lemma:prob_b_to_d}
	Let $\instance_B =(G_B)$ be an instance of problem $B$. Let $G_D$ be the corresponding encoded (gadgeted) instance for problem $D$. Let $\outcome_D(\instance_D) = \{(\sigma_{D}^{(out,i)},p_i)\}_{i\in I}$ be an outcome that solves $D$ on $\instance_D$ with probability at least $q$. Define $\outcome_B(\instance_B)$ by sampling $(\sigma_{B}^{(out,i)},p_i)$ from $\outcome_D(\instance_D)$ and applying the $\LIFT$. If the lift is sound, the $\outcome(\instance_B)$ solves $B$ with probability at least $q$.
\end{lemma}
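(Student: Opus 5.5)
The proof follows the template of \cref{lemma:prob_a_to_b}: compare the sets of successful samples. Write the outcome as $\outcome_D(\instance_D)=\{(\sigma_D^{(\oupt,i)},p_i)\}_{i\in I}$. For each $i$, let $\sigma_B^{(\oupt,i)} := \LIFT(\sigma_D^{(\oupt,i)})$, where the lift sets $\sigma_B^{(\oupt,i)}(u) := \sigma_D^{(\oupt,i)}[1](v)$ for $v$ the original vertex of $G_D$ corresponding to $u$. Then $\outcome_B(\instance_B)=\{(\sigma_B^{(\oupt,i)},p_i)\}_{i\in I}$ by construction. Define
\[
S_D := \{ i\in I \mid \sigma_D^{(\oupt,i)} \text{ is legal for } D \text{ on } G_D\},\qquad S_B := \{ i\in I \mid \sigma_B^{(\oupt,i)} \text{ is legal for } B \text{ on } G_B\}.
\]

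The key step is to show $S_D\subseteq S_B$, which is exactly the soundness statement \cref{lem:lift:b_to_d}. Here $G_D$ is the gadgeting of $G_B$ with respect to a distance-$2r_B$ edge coloring $x$, and \cref{lem:lift:b_to_d} holds for any such $x$. So every legal $\sigma_D^{(\oupt,i)}$ lifts to a legal $\sigma_B^{(\oupt,i)}$. This lift is a deterministic map applied pointwise to each sample, so no probability mass is created or destroyed: the weight on sample $i$ is the same $p_i$ in both outcomes. If several $i$ lift to the same labeling, their weights simply add, and the inequality below still holds. Hence
\[
\sum_{i\in S_B}p_i \;\ge\; \sum_{i\in S_D}p_i \;\ge\; q .
\]

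The only subtle point is the correspondence between $\instance_B$ and $\instance_D$, namely that each $u\in V(G_B)$ has a well-defined original vertex in $G_D$. This holds because $G_D$ is a correct gadgeting, in which the original vertices are in bijection with $V(G_B)$. All the real content, which rests on \cref{thm:gadgeting}, is already contained in \cref{lem:lift:b_to_d}, so this lemma is a routine measure-theoretic wrapper around it.
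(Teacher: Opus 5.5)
Your proposal is correct and matches the paper's argument: the paper proves this lemma by combining the soundness lemma \cref{lem:lift:b_to_d}, which gives $S_D\subseteq S_B$ sample by sample, with the same probability-summation argument as in \cref{lemma:prob_a_to_b}. Your remark about several samples lifting to the same labeling is harmless but unnecessary, since the sums are taken over the index set $I$ anyway.
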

\begin{lemma}\label{lemma:prob_d_to_b}
		 Let $\instance_D =(G_D)$ be an instance of problem $D$. Let $G_B$ be the instance for the problem $B$ achieved by contracting every expanding edge in $G_D$. Let $\outcome_B(\instance_B)= \{(\sigma_{B}^{(out,i)},p_i)\}_{i\in I}$  be any outcome that solves $B$ on $G_B$ with probability at least $q$. Define $\outcome_D(\instance_D)$ by sampling $(\sigma_{B}^{(out,i)},p_i)$ from $\outcome_B(G_B)$ and applying the $\LIFT$. If the lift is sound, then $\outcome_D(\instance_D)$ solves $D$ with probability at least $q$.
\end{lemma}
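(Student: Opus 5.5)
The proof follows the template of \cref{lemma:prob_b_to_a}, now in the direction $B \to D$. First I fix notation. Write $\outcome_B(\instance_B) = \{(\ssigma{B}{(out,i)}, p_i)\}_{i\in I}$. Here $\ssigma{B}{(out,i)}$ is obtained from $\ssigma{B}{(out,i)}$ by the lift of \cref{lem:lift:d_to_b}. That is, we set the first component $\ssigma{D}{(out,i)}[0]$ to a fixed distance-$2r_D$ coloring $\chi$. In the sampling model this $\chi$ is either computed beforehand with the symmetry-breaking oracle or taken independently of the sample $i$. The second component $\ssigma{D}{(out,i)}[1]$ copies $\ssigma{B}{(out,i)}(u)$ to the matched vertex $v$ and puts $\bot$ everywhere else. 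Because the lift is a deterministic function of the sampled labeling (together with $\chi$), $\outcome_D(\instance_D)$ is a well-defined distribution over output labelings of $G_D$ with the same weights $p_i$.

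Next I define the success sets. Let $S_B := \{ i\in I \mid \ssigma{B}{(out,i)} \text{ is legal for } B \text{ on } G_B\}$ and $S_D := \{ i\in I \mid \LIFT(\ssigma{B}{(out,i)}) \text{ is legal for } D \text{ on } G_D\}$. The key step is to invoke soundness of the lift from $D$ to $B$ (\cref{lem:lift:d_to_b}). That lemma states that every legal $B$-labeling on the contracted graph $G_B$, combined with a distance-$2r_D$ coloring $\chi$, lifts to a legal $D$-labeling. So for each $i$ we get $i\in S_B \Rightarrow i\in S_D$, hence $S_B\subseteq S_D$. From this,
\[
\Pr[\outcome_D \text{ solves } D] = \sum_{i\in S_D} p_i \ \ge\ \sum_{i\in S_B} p_i \ \ge\ q,
\]
which is the claim.

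The only genuine subtlety, and the step I expect to need care, is the role of $\chi$. \cref{lem:lift:d_to_b} requires $\chi$ to be a genuine distance-$2r_D$ coloring. If $\chi$ is produced by a randomized or non-signaling procedure that can fail, the bound degrades to $q$ minus the failure probability. I would therefore state explicitly that $\chi$ is a correct distance-$2r_D$ coloring with probability $1$. This matches the hypothesis ``if the lift is sound'': soundness is read as including the availability of a correct $\chi$, which \cref{sec:symmetry_breaking} supplies at additive cost $\diamondsuit(\MM)$. With that convention the inclusion $S_B\subseteq S_D$ holds pointwise and the argument goes through verbatim.
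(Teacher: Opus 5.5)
Your proof is correct and follows the paper's argument: the paper proves this lemma by noting that soundness of the lift (\cref{lem:lift:d_to_b}) gives $S_B\subseteq S_D$ pointwise, and then repeating the summation $\sum_{i\in S_D}p_i\ge\sum_{i\in S_B}p_i\ge q$ from \cref{lemma:prob_b_to_a}. Your explicit remark that the coloring $\chi$ must be fixed independently of the sample (or be correct with probability $1$) is a useful clarification that the paper leaves implicit; note also the small typo where the lifted labeling should read $\sigma_D^{(\mathrm{out},i)}$ rather than $\sigma_B^{(\mathrm{out},i)}$.
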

The proofs of \cref{lemma:prob_b_to_d,lemma:prob_d_to_b} follow from the soundness of the lift (\cref{lem:lift:b_to_d,lem:lift:d_to_b}) and use the same probabilistic arguments as in \cref{lemma:prob_a_to_b,lemma:prob_b_to_a}, respectively.

\begin{lemma}\label{lemma:simulation_b_d}
	If problem $B$ admits an optimal algorithm with locality $\Theta(T) = \Omega(\diamondsuit(\mathcal{M}))$ for a model \(\mathcal{M}\), then problem $D$ admits an algorithm with locality $\Theta(T)$.
\end{lemma}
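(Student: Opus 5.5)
We mirror the proof of \cref{lemma:simulation_a_b}, with the reductions of \cref{sec:framework} replaced by the contraction/gadgeting maps of this section and with one extra symmetry-breaking step in each direction. The plan has two parts: an upper bound (an $O(T)$-locality algorithm for $D$ built from one for $B$) and a lower bound by contradiction (an $o(T)$-locality algorithm for $D$ would yield one for $B$).

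\textbf{Upper bound.} Let $\AA_B$ be the optimal algorithm for $B$ with locality $\Theta(T)$. On an instance $G_D$, the algorithm $\AA_D$ proceeds in four steps.
\begin{enumerate}
\item It computes a distance-$2r_D$ coloring $\chi$ with at most $\zeta$ colors. This costs $\diamondsuit(\MM)$ locality (\cref{sec:symmetry_breaking}), and since $\zeta$ exceeds the size of any radius-$r_D$ ball, enough colors are available.
\item It contracts the expanded edges locally to obtain $G_B$. Each node determines, with constant locality, whether it lies inside a chain of A-gadgets, and contraction is a purely local operation.
\item It simulates $\AA_B$ on $G_B$ exactly as in \cref{sec:simulation_of_algorithms}. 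One round on $G_B$ corresponds to $O(1)$ rounds on $G_D$, because the expanded edges have bounded length: lengths are colors of a distance-$2r$ edge coloring, hence $O(1)$, and malformed long chains stay uncontracted, so each step remains constant.
\item It outputs $\ssigma{D}{out}$ as in \cref{lem:lift:d_to_b}.
\end{enumerate}
By \cref{lem:lift:d_to_b} this output is legal, and by \cref{lemma:prob_d_to_b} the success probability is preserved. The total locality is $O(T)+\diamondsuit(\MM)=O(T)$ because $T=\Omega(\diamondsuit(\MM))$. For the randomized, \nonsign, bounded-dependence and quantum variants, the same composition works: the coloring is obtained in the respective model, and the simulation arguments of \cref{sec:simulation_of_algorithms} carry over verbatim.

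\textbf{Lower bound.} Assume $D$ had an algorithm $\AA_D$ of locality $o(T)$. Given $G_B$, we first compute a distance-$2r_B$ edge coloring $x$ with $O(1)$ colors, again at cost $\diamondsuit(\MM)$. We then form $\text{gadget}(G_B,x)$, which every node can construct locally for its own neighborhood. Because each expanded edge has $O(1)$ length, $t$ rounds in the gadgeted graph are simulable in $O(t)$ rounds of $G_B$, in the same way as $\AA'_2$ in \cref{sec:simulation_of_algorithms}: each original node owns its node and a fixed half of every incident expanded edge. We run $\AA_D$ there and read off the output via \cref{lem:lift:b_to_d}, which is where \cref{thm:gadgeting} does the real work, together with \cref{lemma:prob_b_to_d}. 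This gives $B$ an algorithm of locality $o(T)+\diamondsuit(\MM)$. If this quantity is $o(T)$, it contradicts the optimality of $\AA_B$. The boundary case $T=\Theta(\diamondsuit(\MM))$ must be handled separately: there the bound $\Theta(T)$ is matched by the trivial lower bound, since $D$ requires a distance-2 coloring in its first component, which is itself symmetry breaking.

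\textbf{Main obstacle.} The delicate point is the upper-bound simulation on \emph{adversarial} instances $G_D$. Contraction must be well defined and locally computable even on malformed parts, for instance long chains of A-gadgets or gadgets whose contraction would create multi-edges. Moreover, the claimed isomorphism $R_{D2}\cong B^{r_B}_{G_B}(u)$ inside \cref{lem:lift:d_to_b} is what lets views with valid reconstruction match $G_B$. I would handle the first issue by contracting only maximal chains whose length is at most the maximal color of $x$ and leaving all others as malformed. The reconstruction check then fails on views containing such chains, and they are accepted by the second rule of $\CC_D$.
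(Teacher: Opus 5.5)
Your proposal is correct and follows the paper's proof essentially step for step. For the upper bound you use a distance-$2r_D$ coloring plus simulation of $\mathcal{A}_B$ on the contracted graph with \cref{lem:lift:d_to_b}. For the lower bound you use a distance-$2r_B$ edge coloring, gadgeting, simulation of $\mathcal{A}_D$ and \cref{lem:lift:b_to_d}, with the same case split in which the coloring component of $D$ handles $T=\Theta(\diamondsuit(\MM))$. Contracting only bounded-length chains usefully makes explicit a locality point that the paper's contraction glosses over.

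One small correction: $\zeta$ exceeding the size of every radius-$r_D$ ball does not guarantee that a distance-$2r_D$ coloring with $\zeta$ colors exists, since a graph of diameter at most $2r_D$ with more than $\zeta$ vertices needs more colors. So $\zeta$ should be chosen above the size of a radius-$2r_D$ ball. This is a constant-level fix that the paper's definition of $D$ implicitly needs too.
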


\begin{proof}
	First, let us prove that such algorithm with locality \(\Theta(T)\) exists. Let \(G_D\) be an instance of problem \(D\). 
	The algorithm \(\alg{D}\) will consist of two parts: coloring and simulation. First, we will provide a valid distance \(2r_D\) coloring, which by definition has locality \(O(\diamondsuit(\mathcal{M}))\).
	Then, we will simulate the \(\alg{B}\) as in \cref{sec:framework} on contracted version of \(G_D\). By the soundness of lift \cref{lem:lift:d_to_b} the resulting \(\ssigma{D}{out}\) satisfies problem \(D\). The simulation has \(\Theta(T)\) locality. Thus, locality of \(\alg{D}\) is \(O(\diamondsuit(\mathcal{M})) + \Theta(T) = O(T) + \Theta(T) =  \Theta(T)\).

	Now, let us prove that locality of \(\alg{D}\) is \(\Omega(T)\). Assume, it is not, then its locality is \(o(T)\). Let us take $G_B$ an instance of problem \(B\). We can get a distance \(2r_B\) edge-coloring in \(\Theta(\diamondsuit(\mathcal{M}))\). 
	Then we can simulate \(\alg{D}\) on a gadgeted \(G_B\) as it is described in \cref{sec:framework}. 
	It will have \(o(T)\) locality and by \cref{lem:lift:b_to_d} resulting \(\ssigma{B}{out}\) is valid. 
That implies that the resulting algorithm for problem \(B\) takes \(\Theta(\diamondsuit(\mathcal{M})) + o(T)\). 

Assume, \(\alg{B}\) has locality \(\omega(\diamondsuit(\mathcal{M}))\), then we arrive at contradiction as \(\Theta(\diamondsuit(\mathcal{M})) + o(T) = o(T)\) and \(\alg{B}\) is not optimal. Now, assume \(\alg{B}\) has locality \(\Theta(\diamondsuit(\mathcal{M}))\). Since, as a part of the output (\(\ssigma{D}{out}[0]\)) \(\alg{D}\) must provide a valid coloring, \(\alg{D} = \Omega(\diamondsuit(\mathcal{M}))\). Thus, \(\alg{D}\) cannot have locality \(o(T)\).
\end{proof}

\subsection{Graph-theoretical proofs}\label{ssec:graph-theoretic-proofs}

Now, we proceed with proving the already introduced graph-theoretical result \cref{thm:gadgeting}. To do so, we first prove a few auxiliary lemmas. We start with the following lemma we already introduced:

\outerdisc*

\begin{proof}

    W.l.o.g., assume that \(\chi(v) = 1\), \(\chi(w_1) = 2\), \(\chi(w_2) = 3\), \(\chi(w_3) = 4\). For contradiction, w.l.o.g.\ assume that there exists \(\{u_1, u_2\} \in E(G)\), such that \(\chi(u_1) = \chi(w_1) = 2\) and \(\chi(u_2) = \chi(w_3) = 4\). By 2nd assumption of \cref{thm:gadgeting} we know, that if such edge exists, then exists \(u \in \mathcal{N}(w_3)\) such that \(\chi(u) = \chi(w_1) = 2\). Thus, the labeling \(\chi\) of our graph will look like \cref{fig:lemma4:b} modulo symmetry.

    Following the same logic, \(u\) must be connected to a vertex labeled with \(\chi(v) = 1\). Then this vertex must be located as
    represented in blue in \cref{fig:lemma4:c,fig:lemma4:d,fig:lemma4:e}.

    Now, let us proceed with proving why every possible location of vertex labeled with 1 will lead to a contradiction.

    Let us start with placing it as in \cref{fig:lemma4:c}. We can easily notice that this creates a contradiction, as vertex labeled with 4 is connected to two vertices with label 1. So \(\chi\) is not a distance-2-coloring, which contradicts the 1st assumption of \cref{thm:gadgeting}.

    Next is placing vertex labeled 1, as in \cref{fig:lemma4:d,fig:lemma4:e}. This would imply that its remaining neighbors are labeled 3 and 4 per 2nd assumption of \cref{thm:gadgeting}. Thus, vertices labeled 4 and 3 are connected. Thus, the remaining vertex in the gadget must be labeled 3, as every vertex labeled with 4 must contain a vertex labeled with 3 in its neighborhood.

    Now we can notice, that both placements of 3 and 4 (\cref{fig:lemma4:d,fig:lemma4:e}) are invalid, as neither of them is a valid distance 2 coloring and thus both contradict the assumption of \cref{thm:gadgeting}.

    By exhaustion of all options, none of the neighbors of \(w_3\) can be labeled with 2. Thus this edge doesn't exist and by the symmetry this edge doesn't exist for any other pair of outer vertices.
\end{proof}

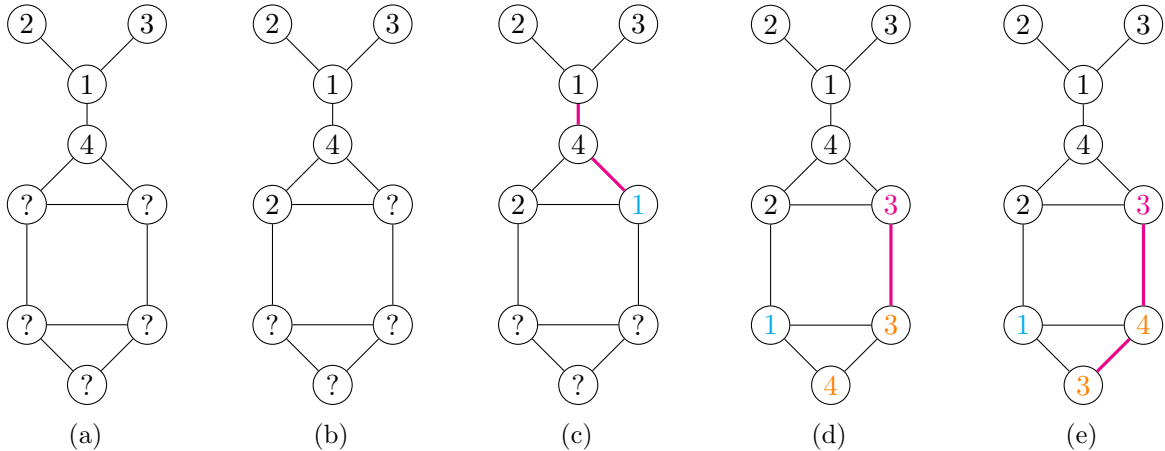
\begin{figure}[ht]
    \centering
    \begin{subfigure}{0.19\textwidth}
        \centering
        \begin{tikzpicture}[scale=0.8, every node/.style={circle, draw, inner sep=2pt}]
            \node (2) at (-1,4) {2};
            \node (3) at (1,4) {3};
            \node (1) at (0,3) {1};
            \node (4) at (0,2) {4};
            \node (5) at (-1,1) {?};
            \node (6) at (1,1) {?};
            \node (7) at (-1,-1) {?};
            \node (8) at (1,-1) {?};
            \node (9) at (0,-2) {?};
        
            \draw (1) -- (2);
            \draw (1) -- (3);
            \draw (1) -- (4);
            \draw (4) -- (5);
            \draw (4) -- (6);
            \draw (5) -- (6);
            \draw (5) -- (7);
            \draw (6) -- (8);
            \draw (7) -- (8);
            \draw (8) -- (9);
            \draw (7) -- (9);
        \end{tikzpicture}
        \caption{}
        \label{fig:lemma4:a}
    \end{subfigure}
    \begin{subfigure}{0.19\textwidth}
        \centering
        \begin{tikzpicture}[scale=0.8, every node/.style={circle, draw, inner sep=2pt}]
            \node (2) at (-1,4) {2};
            \node (3) at (1,4) {3};
            \node (1) at (0,3) {1};
            \node (4) at (0,2) {4};
            \node (5) at (-1,1) {2};
            \node (6) at (1,1) {?};
            \node (7) at (-1,-1) {?};
            \node (8) at (1,-1) {?};
            \node (9) at (0,-2) {?};
        
            \draw (1) -- (2);
            \draw (1) -- (3);
            \draw (1) -- (4);
            \draw (4) -- (5);
            \draw (4) -- (6);
            \draw (5) -- (6);
            \draw (5) -- (7);
            \draw (6) -- (8);
            \draw (7) -- (8);
            \draw (8) -- (9);
            \draw (7) -- (9);
        \end{tikzpicture}
        
        \caption{}
        \label{fig:lemma4:b}
    \end{subfigure}
    \begin{subfigure}{0.19\textwidth}
        \centering
        \begin{tikzpicture}[scale=0.8, every node/.style={circle, draw, inner sep=2pt}]
            \node (2) at (-1,4) {2};
            \node (3) at (1,4) {3};
            \node (1) at (0,3) {1};
            \node (4) at (0,2) {4};
            \node (5) at (-1,1) {2};
            \node (6) at (1,1) {\textcolor{cyan}{1}};
            \node (7) at (-1,-1) {?};
            \node (8) at (1,-1) {?};
            \node (9) at (0,-2) {?};
        
            \draw (1) -- (2);
            \draw (1) -- (3);
            \draw[magenta, very thick] (1) -- (4);
            \draw (4) -- (5);
            \draw[magenta, very thick] (4) -- (6);
            \draw (5) -- (6);
            \draw (5) -- (7);
            \draw (6) -- (8);
            \draw (7) -- (8);
            \draw (8) -- (9);
            \draw (7) -- (9);
        \end{tikzpicture}
        \caption{}
        \label{fig:lemma4:c}
    \end{subfigure}
    \hfill
    \begin{subfigure}{0.19\textwidth}
        \centering
        \begin{tikzpicture}[scale=0.8, every node/.style={circle, draw, inner sep=2pt}]
            \node (2) at (-1,4) {2};
            \node (3) at (1,4) {3};
            \node (1) at (0,3) {1};
            \node (4) at (0,2) {4};
            \node (5) at (-1,1) {2};
            \node (6) at (1,1) {\textcolor{magenta}{3}};
            \node (7) at (-1,-1) {\textcolor{cyan}{1}};
            \node (8) at (1,-1) {\textcolor{orange}{3}};
            \node (9) at (0,-2) {\textcolor{orange}{4}};
        
            \draw (1) -- (2);
            \draw (1) -- (3);
            \draw (1) -- (4);
            \draw (4) -- (5);
            \draw (4) -- (6);
            \draw (5) -- (6);
            \draw (5) -- (7);
            \draw[magenta, very thick] (6) -- (8);
            \draw (7) -- (8);
            \draw (8) -- (9);
            \draw (7) -- (9);
        \end{tikzpicture}
        \caption{}
        \label{fig:lemma4:d}
    \end{subfigure}
    \hfill
    \begin{subfigure}{0.19\textwidth}
        \centering
        \begin{tikzpicture}[scale=0.8, every node/.style={circle, draw, inner sep=2pt}]
            \node (2) at (-1,4) {2};
            \node (3) at (1,4) {3};
            \node (1) at (0,3) {1};
            \node (4) at (0,2) {4};
            \node (5) at (-1,1) {2};
            \node (6) at (1,1) {\textcolor{magenta}{3}};
            \node (7) at (-1,-1) {\textcolor{cyan}{1}};
            \node (8) at (1,-1) {\textcolor{orange}{4}};
            \node (9) at (0,-2) {\textcolor{orange}{3}};
        
            \draw (1) -- (2);
            \draw (1) -- (3);
            \draw (1) -- (4);
            \draw (4) -- (5);
            \draw (4) -- (6);
            \draw (5) -- (6);
            \draw (5) -- (7);
            \draw[magenta, very thick] (6) -- (8);
            \draw (7) -- (8);
            \draw[magenta, very thick] (8) -- (9);
            \draw (7) -- (9);
        \end{tikzpicture}
        \caption{}
        \label{fig:lemma4:e}
    \end{subfigure}
    \hfill
    \caption{Illustrations for \cref{lem:outer_disconnected}. Issues highlighted in pink, potential placements of 1 are in blue and implied placements of 3 and 4 are in orange. }
    \label{fig:lemma4}
\end{figure}

\paragraph{Coloring of the neighborhood.} Assume, that the neighborhood for a vertex \(v\) is \(\mathcal{N}(v) = \{w_1, w_2, w_3\}\). Then the coloring of such neighborhood is \(\chi(\mathcal{N}(v)) = \{\chi(w_1), \chi(w_2), \chi(w_3)\}\). As \(\chi\) is a distance 2 coloring, the colors in the set \(\chi(\mathcal{N}(v))\) are unique.

\begin{lemma}\label{g:lem:original_not_triangle}
Let \(v\) be an original vertex with color \(\chi(v)\) with \(\mathcal{N}(v) = \{w_1, w_2, w_3\}\). Then in the \(\gn_r(v)\) there is no such 3-tuple of vertices \(u_1, u_2, u_3\), that \(\{u_1, u_2\}, \{u_1, u_3\}, \{u_2, u_3\} \in E(G)\) and \(\chi(u_1) = \chi(v)\). Or in other words, a vertex that is part of a triangle (cycle of length 3) cannot be labeled with the same label as an original vertex.
\end{lemma}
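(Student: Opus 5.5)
Throughout, I work under the hypotheses of \cref{thm:gadgeting}: $\chi$ is a distance-$2$ coloring and is edge-consistent inside $\gn_r(v)$. The strategy is to argue by contradiction using edge-consistency. Suppose that in $\gn_r(v)$ there is a triangle $u_1,u_2,u_3$ with $\chi(u_1)=\chi(v)$.

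\textbf{Step 1: transfer the neighbourhood colours.} Since $v$ is adjacent to $w_1,w_2,w_3$ and $u_1\in\gn_r(v)$ has the same colour as $v$, edge-consistency gives, for each $i$, a neighbour $u'_i$ of $u_1$ with $\chi(u'_i)=\chi(w_i)$. The graph is $3$-regular, so $\mathcal{N}(u_1)$ has exactly three vertices. By distance-$2$ colouring these three vertices receive distinct colours. Hence $\chi(\mathcal{N}(u_1))=\{\chi(w_1),\chi(w_2),\chi(w_3)\}$.

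\textbf{Step 2: use the triangle.} The vertices $u_2,u_3$ are both in $\mathcal{N}(u_1)$. Therefore $\chi(u_2)=\chi(w_i)$ and $\chi(u_3)=\chi(w_j)$ for some $i\neq j$. Since $\{u_2,u_3\}\in E(G)$, this is an edge in $\gn_r(v)$ (up to the boundary issue below) whose endpoints carry the colours of two distinct outer neighbours of $v$. This is exactly what \cref{lem:outer_disconnected} rules out, giving the contradiction.

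\textbf{Main obstacle: the gball boundary.} The delicate step is checking that $u_2,u_3$ and the edge $\{u_2,u_3\}$ lie inside $\gn_r(v)$, so that \cref{lem:outer_disconnected} applies. A triangle consists only of gadget vertices, since original vertices are joined only to outer vertices of distinct gadgets. All edges inside a gadget have weight $0$. So if $u_1$ is in the gball, the whole A-gadget containing it is at the same gdist, with one exception: $u_1$ could be an outer vertex whose other incident edge has weight $0.5$, but the triangle edges themselves are all internal.

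I would handle this by noting that the triangle edges $\{u_1,u_2\},\{u_1,u_3\},\{u_2,u_3\}$ have weight $0$. Hence $\text{gdist}(v,u_2)=\text{gdist}(v,u_3)=\text{gdist}(v,u_1)\le r$, and the edge belongs to the gball.

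If instead \cref{lem:outer_disconnected} is read as requiring the edge strictly inside some smaller radius, I would fall back on replaying its case analysis (\cref{fig:lemma4}) directly on the triangle at $u_1$. That is the step I expect to need the most care.
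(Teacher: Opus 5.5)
Your proof is correct and follows essentially the paper's argument: edge-consistency at $u_1$ forces $u_2,u_3$ to carry the colours of two distinct neighbours of $v$, and the edge $\{u_2,u_3\}$ then contradicts \cref{lem:outer_disconnected}. The only difference is that the paper first uses edge-consistency once more to move this edge to one incident to $w_2$ before invoking the lemma, whereas you apply the lemma directly to $\{u_2,u_3\}$ (placed in the gball by the zero-weight triangle edges, and in any case by the lemma's own hypothesis), which is slightly cleaner.
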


\begin{proof}
    Assume by contradiction that such a 3-tuple exists. Then by 2nd assumption of \cref{thm:gadgeting} \(\chi(u_2) \in \chi(\mathcal{N}(v))\) and \(\chi(u_3) \in \chi(\mathcal{N}(v))\).

    W.l.o.g., assume \(\chi(u_2) = \chi(w_2)\) and \(\chi(u_3) = \chi(w_3)\). However, \(\{u_2, u_3\} \in E(G)\) which by 2nd assumption of \cref{thm:gadgeting} implies that there exists a vertex \(u \in \mathcal{N}(w_2)\) such that \(\chi(u) = \chi(w_3)\), which contradicts \cref{lem:outer_disconnected}. Thus, this 3-tuple doesn't exist.
\end{proof}

\begin{corollary}\label{g:cor:original_is_original}
Let \(v\) be an original vertex with color \(\chi(v)\). Then for every inner or outer vertex \(u\) in \(\gn_r(v)\), \(\chi(u) \neq \chi(v)\), as every inner and outer vertex is part of a triangle. Symmetrically, if we have an inner or outer vertex \(u\) with color \(\chi(u)\), then for every original vertex \(v\) in the \(r\)-neighborhood \(\chi(v) \neq \chi(u)\).
\end{corollary}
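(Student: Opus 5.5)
The corollary has two directions. Both follow from \cref{g:lem:original_not_triangle} once we check one structural fact about correctly gadgeted graphs: every inner and every outer vertex lies on a triangle. So the plan is to establish that fact first and then apply the lemma.

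\textbf{Step 1: every gadget vertex lies on a triangle.} Every inner or outer vertex belongs to some A-gadget on vertex set $\{1,\dots,6\}$. From the edge list of the A-gadget:
\begin{itemize}[noitemsep]
\item vertices $1,2,3$ form a triangle, since $\{1,2\},\{1,3\},\{2,3\}$ are all edges;
\item vertices $4,5,6$ form a triangle, since $\{4,5\},\{5,6\},\{4,6\}$ are all edges.
\end{itemize}
Thus every vertex of an A-gadget lies on a triangle. Gadgeting only adds edges between gadgets, or between gadgets and original vertices, so these triangles survive in the correctly gadgeted graph. In particular, every outer vertex (an endpoint $1$ or $6$ of a gadget adjacent to an original vertex) and every inner vertex is on a triangle.

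\textbf{Step 2: first direction.} Let $v$ be original and let $u\in\gn_r(v)$ be inner or outer. By Step~1 there are vertices $u_2,u_3$ with $\{u,u_2\},\{u,u_3\},\{u_2,u_3\}\in E(G)$. If $\chi(u)=\chi(v)$ held, the triple $(u_1,u_2,u_3)=(u,u_2,u_3)$ would be exactly the configuration excluded by \cref{g:lem:original_not_triangle}.

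One technicality needs care: that lemma speaks of triples ``in $\gn_r(v)$'', and $u_2,u_3$ might lie just outside the $r$-gball. This is harmless. Edges inside a gadget have weight $0$, so $u_2,u_3$ are at the same gdistance from $v$ as $u$. Hence the whole triangle lies in $\gn_r(v)$ whenever $u$ does. Therefore $\chi(u)\neq\chi(v)$.

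\textbf{Step 3: symmetric direction.} Let $u$ be inner or outer and let $v$ be an original vertex within the relevant neighborhood of $u$. Gdistance is symmetric, so $u\in\gn_r(v)$, and Step~2 gives $\chi(v)\neq\chi(u)$.

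\textbf{Main obstacle.} The only point needing attention is the gball-containment in Step~2, which the zero internal edge weights resolve. A milder issue is interpreting the ``$r$-neighborhood'' in the second sentence of the corollary as the $r$-gball, so that symmetry of gdistance applies. The rest is bookkeeping.
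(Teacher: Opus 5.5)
Your proposal is correct and follows the paper's own one-line justification: every inner or outer vertex lies on a triangle inside its A-gadget, so \cref{g:lem:original_not_triangle} applies directly. Your extra checks (the zero-weight gadget edges keep the whole triangle inside $\gn_r(v)$, and symmetry of gdist gives the second direction) simply spell out details the paper leaves implicit.
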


\begin{corollary}\label{g:cor:outer_is_outer}
Let \(w\) be an outer vertex with color \(\chi(w)\) neighboring an original vertex \(v\), then there is no such inner vertex \(u\) in \(\gn_r(v)\), such that \(\chi(u) = \chi(w)\). As otherwise by the second assumption of \cref{thm:gadgeting}, there would exist a vertex \(u' \in \mathcal{N}(u)\) such that \(\chi(u) =\chi(v)\). By distance definition, \(u' \in \gn_r(v)\). As vertex \(u\) is an inner vertex, then \(u'\) might be an outer or inner vertex, which by \cref{g:cor:original_is_original} cannot have the same color as an original vertex \(v\).
\end{corollary}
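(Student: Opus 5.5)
The plan is to argue by contradiction, using the second assumption of \cref{thm:gadgeting} on the edge $\{w,v\}$ to transfer the color $\chi(v)$ onto a neighbor of $u$, and then to invoke \cref{g:cor:original_is_original}. So suppose there is an inner vertex $u \in \gn_r(v)$ with $\chi(u) = \chi(w)$.

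\textbf{Step 1 (apply edge-consistency).} Since $\{w,v\} \in E(G)$, the second assumption of \cref{thm:gadgeting}, read with $w$ as the distinguished endpoint, says: every vertex of $\gn_r(w)$ whose color is $\chi(w)$ has a neighbor of color $\chi(v)$. To use this we need $u \in \gn_r(w)$. We only know $u \in \gn_r(v)$, so some gdist bookkeeping is required.
\begin{itemize}
\item Every edge incident to $v$ has weight $0.5$.
\item Every edge between two gadget vertices has weight $0$.
\end{itemize}
Hence any shortest walk from $v$ into the gadget world pays exactly $0.5$ on its first edge. If that first edge is $\{v,w\}$, then $\text{gdist}(w,u) = \text{gdist}(v,u) - 0.5 \le r$. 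Otherwise, $\text{gdist}(w,u) \le \text{gdist}(v,u) + 0.5$.

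In the second case I would either check that the walk can be rerouted, or fall back on the fact that $r$ is chosen with slack in the definition of problem $D$. This membership issue is the only place where I expect friction. If needed, I would state the corollary for $u \in \gn_{r-1/2}(v)$, which is harmless for its later use.

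\textbf{Step 2 (locate the neighbor).} We obtain $u' \in \mathcal{N}(u)$ with $\chi(u') = \chi(v)$. Since $u$ is inner, by definition it is not adjacent to any original vertex. Hence $u'$ is itself an inner or outer vertex, i.e.\ it belongs to an A-gadget. Moreover, the edge $\{u,u'\}$ has weight $0$, so $\text{gdist}(v,u') = \text{gdist}(v,u) \le r$, and therefore $u' \in \gn_r(v)$.

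\textbf{Step 3 (contradiction).} Every vertex of an A-gadget lies in one of the triangles $\{1,2,3\}$ or $\{4,5,6\}$, so $u'$ is part of a triangle. By \cref{g:lem:original_not_triangle}, in the form of \cref{g:cor:original_is_original}, no inner or outer vertex of $\gn_r(v)$ can share the color of the original vertex $v$. This contradicts $\chi(u') = \chi(v)$, so no such $u$ exists.
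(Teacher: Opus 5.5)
Your argument is the same as the paper's (edge-consistency on $\{w,v\}$ applied at $u$ produces a gadget neighbor $u'\in\gn_r(v)$ of color $\chi(v)$, which contradicts \cref{g:cor:original_is_original}), and it is correct. The membership $u\in\gn_r(w)$ that you flagged, and which the paper passes over silently, closes without any fallback: original vertices are pairwise non-adjacent and all their incident edges have weight $\tfrac12$, so every path from $v$ to a gadget vertex has weight in $\tfrac12+\mathbb{N}$, hence $\text{gdist}(v,u)\le r-\tfrac12$ and $\text{gdist}(w,u)\le \tfrac12+\text{gdist}(v,u)\le r$ (equivalently, your restricted version for $\gn_{r-1/2}(v)$ is literally the same statement for inner $u$).
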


\paragraph{Gadget sequence.}
Gadget sequence is a sequence of gadgets between two original vertices (not including the ends)

\paragraph{Inner edge distance.}
We will define \(\edist\) (inner edge distance) for vertices that belong to the same gadget sequence or are original vertices at the ends of such a sequence. Then this is the distance function defined on the induced subgraph on the described vertices.

\begin{lemma}\label{g:lem:gadget_seq_correct}
    For every pair of vertices \((u, v)\), belonging to the same sequence of gadgets, \(\chi(u) \neq \chi(v)\).
\end{lemma}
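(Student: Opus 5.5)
The plan is to argue by contradiction. Suppose two distinct vertices \(u \neq v\) in the same gadget sequence (between original vertices \(o_1\) and \(o_2\)) satisfy \(\chi(u) = \chi(v)\). The idea is to propagate this coincidence along the sequence, using the edge-consistency assumption of \cref{thm:gadgeting}, until it forces a collision involving an original or outer vertex, which the earlier results forbid. Since all vertices of one sequence lie within the \(r\)-gball of \(o_1\) (their gdist is at most \(1/2\)), both assumptions of \cref{thm:gadgeting}, \cref{lem:outer_disconnected}, \cref{g:cor:original_is_original} and \cref{g:cor:outer_is_outer} apply throughout.

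First I exploit the rigid local structure of an A-gadget. Label the gadget vertices \(1,\dots,6\) as in \cref{fig:gadget}. Vertices \(1\) and \(6\) are the \emph{ports}: each lies in exactly one triangle and has one external neighbor. Vertices \(2,3,4,5\) each lie in one triangle and have both neighbors outside that triangle inside the gadget.

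Next I show that equal colors force equal \emph{types}, meaning port versus non-port and the orientation along the sequence. Edge consistency means that \(\chi(u)=\chi(v)\) implies the colored neighborhoods coincide: \(\chi(\mathcal{N}(u)) = \chi(\mathcal{N}(v))\). Since \(\chi\) is a distance-2 coloring, this defines a color-preserving bijection between the neighborhoods. Applying edge consistency again to those neighbors, triangles must map to triangles, because an edge between two neighbors of \(u\) forces an edge between the corresponding neighbors of \(v\). So the map \(u \mapsto v\) extends, step by step, to a color-preserving local map along edist. I propagate it toward the end of the sequence closest to \(u\), say with \(\edist(u,o_1) \le \edist(v,o_1)\) after swapping names. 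Stepping one edge at a time from \(u\) toward \(o_1\), the corresponding walk from \(v\) has the same colors. When the walk from \(u\) reaches an outer vertex \(w\) adjacent to \(o_1\), the walk from \(v\) reaches a vertex \(w'\) with \(\chi(w')=\chi(w)\).

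Now there are two cases. If \(w'\) is inner, this directly contradicts \cref{g:cor:outer_is_outer}. If \(w'\) is \(w\) itself, then the walks coincide, so \(u = v\) by injectivity of the step map under the distance-2 coloring. If \(w'\) is a different outer vertex, one more step shows that \(o_1\)'s color appears at a neighbor of \(w'\). That neighbor must be an original vertex \(o' \neq o_1\), and the only candidate is \(o_2\), reached through the whole sequence. But \(\edist(u,o_1) < \edist(v,o_2)\) is not guaranteed, so here I instead use that \(\chi(o_1)=\chi(o_2)\) with both adjacent to endpoints of the same sequence; lengths \(\ge 1\) give gdist \(1\le r\), so \(o_2 \in \gn_r(o_1)\). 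Then edge consistency transports \(o_1\)'s three neighbor colors to \(o_2\), and \cref{lem:outer_disconnected} combined with the distance-2 condition yields a contradiction.

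The main obstacle is making the propagation rigorous: showing that the color-preserving correspondence really is well defined and respects the gadget geometry, in particular that the two "middle rungs" \(\{2,3\}\) and \(\{4,5\}\) cannot be confused with each other, and that orientation cannot flip. I would handle this by a finite case check on a single A-gadget, in the style of \cref{fig:lemma4}, and then induct over consecutive gadgets.
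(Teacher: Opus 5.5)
Your first two cases are fine (the backward-uniqueness argument for $w'=w$ is correct), but the third case has a genuine gap. There you get $\chi(w)=\chi(w')$ for the outer vertices at the two ends of the sequence, and then $\chi(o_1)=\chi(o_2)$. You then assert that \cref{lem:outer_disconnected} plus the distance-2 condition give a contradiction. They do not. The vertices $o_1$ and $o_2$ are joined only through gadget sequences, so they are not within distance $2$. Edge consistency at $o_1,o_2$ only gives $\chi(\mathcal{N}(o_1))=\chi(\mathcal{N}(o_2))$, and this produces no edge whose endpoints carry the colors of two distinct neighbors of $o_1$.

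Nothing established before this lemma forbids two original vertices at gdist $1$ from sharing a color. The paper obtains that only in the corollary following \cref{g:lem:outer_unique}, which depends on the present lemma, so using it here would be circular. In fact $\chi(w)=\chi(w')$ is itself an instance of the statement you are proving, so this case merely pushes the original problem to the ends of the sequence.

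Your fallback of excluding an ``orientation flip'' by a finite check on one A-gadget cannot work either. The A-gadget is invariant under $1\leftrightarrow 6$, $2\leftrightarrow 4$, $3\leftrightarrow 5$, so the expanded edge is reversal-symmetric and a reflected coloring is locally indistinguishable. The flip is precisely your third case.

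What is missing is a monotone potential that prevents the partner walk from drifting toward $o_2$. The paper fixes the end $o_1$ and takes a same-colored pair $u\neq u'$ in the sequence minimizing $k=\max(\edist(u,o_1),\edist(u',o_1))$, with $\edist(u,o_1)=k$. Distance-2 coloring and $\dist\le\edist$ force the two levels to differ by at least $3$. It then moves the \emph{farther} vertex: for the neighbor $w$ of $u$ one level closer to $o_1$, edge consistency gives a neighbor $w'$ of $u'$ with $\chi(w')=\chi(w)$. Here $w'\neq w$ because same-colored vertices cannot share a neighbor, and $\edist(w',o_1)\le k-2$. So $(w,w')$ is a counterexample with smaller maximum; the case $w'=o_1$ is excluded by \cref{g:cor:original_is_original}.

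Moving the closer vertex toward $o_1$, as you do, is exactly what lets the partner escape to the other end. Moving the farther one makes the reflection harmless, and it removes any need for the local rigidity analysis you identified as the main obstacle.
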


\begin{proof}
    Consider a sequence of gadgets starting from an original vertex \(v\). Assume, for the sake of contradiction, that there is at least one pair of vertices \((u,\ u')\) for which the labeling is the same.

    Firstly, notice that \(\dist(u, u') \leq \edist(u, u')\) as \(\edist\) does not account for some edges.

    Thus, \(\edist(u, u')\) must be bigger than 2, as otherwise \(\dist(u, u') \leq 2\) and \(\chi\) would not be a distance 2 coloring, contradicting the first assumption of \cref{thm:gadgeting}. By construction of a gadget sequence this implies \(|\edist(u, v) - \edist(u', v)| > 2\).

    Let \(k\) be the smallest number such that there exist \(u, u'\) with \(max(\edist(u', v) , \edist(u, v)) = k\) and \(\chi(u) = \chi(u')\). W.l.o.g., assume that \(\edist(u, v) = k\), or in other words \(u\) is a further vertex.  Now, let \(w \in \mathcal{N}(u)\), such that \(\edist(w, v) < \edist(u, v)\). Such vertex exists as \(\edist(u,v) \geq \edist(u',v) + 2 \geq 0 + 2 = 2\). Then, by the second assumption of \cref{thm:gadgeting} vertex \(u'\) must have a neighbor \(w'\), such that \(\chi(w') = \chi(w)\). Note that \(w \neq w'\), as \(\edist(u, u') > 2\) and thus \(u\), \(u'\) cannot have a common neighbor.

    Then, \(\dist(w', v) \leq \dist(u', v) + 1 < \dist(u, v) = k\), \(\dist(w, v) < \dist(u, v) =k\) and \(\chi(w) = \chi(w')\). Thus, \(k\) is not the smallest distance on which a pair with the same labeling exists, contradicting the assumption about \(k\). Thus, such pair does not exist and every vertex within a gadget sequence has a unique color.
\end{proof}

\begin{lemma}\label{g:lem:outer_unique}
    Let \(v\) be an outer vertex with color \(\chi(v)\), neighboring an original vertex \(w\). Then, for every vertex \(v' \neq v\) in \(\gn_r(w):\) \(\chi(v') \neq \chi(v)\).
\end{lemma}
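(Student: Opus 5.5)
We argue by contradiction and case analysis on the type of the vertex \(v'\). Suppose \(v'\neq v\) lies in \(\gn_r(w)\) and \(\chi(v')=\chi(v)\). Every vertex of a correctly gadgeted graph is original, outer, or inner, so we treat three cases. If \(v'\) is original, \cref{g:cor:original_is_original} applies directly: \(v\) is an outer vertex and hence lies on a triangle, so its color cannot coincide with that of an original vertex in the relevant gball. If \(v'\) is inner, \cref{g:cor:outer_is_outer} excludes \(\chi(v')=\chi(v)\), since \(v\) is an outer vertex neighboring the original vertex \(w\) and \(v'\in\gn_r(w)\). The remaining case, and the real content of the lemma, is when \(v'\) is itself an outer vertex.

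In that case, let \(w'\) be the original vertex adjacent to \(v'\).

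\emph{Subcase \(w'=w\).} Then \(v'\) is another neighbor of \(w\), so \(v\) and \(v'\) are at distance \(2\). The distance-2 coloring assumption of \cref{thm:gadgeting} gives \(\chi(v)\neq\chi(v')\).

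\emph{Subcase \(w'\neq w\).} Here we use edge-consistency in the direction from \(v\) to \(v'\). Since \(\{v,w\}\in E(G)\) and \(\chi(v')=\chi(v)\) with \(v'\in\gn_r(\cdot)\) of the appropriate center, \(v'\) must have a neighbor \(u'\) with \(\chi(u')=\chi(w)\). The neighbors of the outer vertex \(v'\) are \(w'\) (original) and two vertices of its own A-gadget, and those two lie on a triangle with \(v'\). By \cref{g:cor:original_is_original}, a triangle vertex cannot share the color of the original vertex \(w\), so we must have \(u'=w'\) and therefore \(\chi(w')=\chi(w)\). Now \(w'\) is an original vertex in \(\gn_r(w)\): it sits at gdistance \(1/2\) beyond \(v'\), so this needs a small gdistance bookkeeping step, possibly applying the consistency with center \(v'\) or reducing the radius slightly.

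If \(w'\) lies on the same gadget sequence starting at \(w\), which happens exactly when \(v'\) is the far endpoint of the expanded edge containing \(v\), then \cref{g:lem:gadget_seq_correct} (which includes the original endpoints) gives \(\chi(w)\neq\chi(w')\). This is a contradiction.

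Otherwise we compare the two original vertices \(w\neq w'\), which carry the same color. Edge-consistency propagates this equality along their incident expanded edges: each neighbor color of \(w\) must appear around \(w'\), and \cref{lem:outer_disconnected} forces outer vertices to map to outer vertices. Continuing inductively along a shortest gadgeted path from \(w\) to \(w'\), in the same minimal-counterexample style as \cref{g:lem:gadget_seq_correct}, we obtain a pair of equally colored vertices that are strictly closer together, and eventually a pair within distance \(2\) or on a single gadget sequence. Either outcome contradicts the distance-2 coloring or \cref{g:lem:gadget_seq_correct}.

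The main obstacle is this last subcase. We must show that two distinct original vertices in the same gball cannot share a color using only local consistency, with a descent argument whose measure truly decreases, and we must keep every vertex involved inside the \(r\)-gball so that assumption~2 remains applicable.
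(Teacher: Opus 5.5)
There is a genuine gap, and it sits in the last subcase, which is the whole content of the lemma. The easy cases are handled correctly: \(v'\) original or inner via \cref{g:cor:original_is_original,g:cor:outer_is_outer}, \(v'\) adjacent to \(w\) via the distance-2 property, and \(v'\) on the same expanded edge via \cref{g:lem:gadget_seq_correct}. The transfer from \(\chi(v)=\chi(v')\) to \(\chi(w)=\chi(w')\) is also fine, modulo gball bookkeeping.

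The problem is that your descent has no decreasing measure. Let \(a\) be the neighbor of \(w\) on a shortest path to \(w'\). Edge-consistency gives a twin \(a'\) adjacent to \(w'\), but \(a'\) may lie on the far side of \(w'\), so the pair \((a,a')\) need not be closer than \((w,w')\). In \cref{g:lem:gadget_seq_correct} the descent works only because a gadget sequence is path-like.

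More fundamentally, no argument using only hypotheses 1--2 of \cref{thm:gadgeting} and the lemmas you cite can succeed, because those hypotheses are preserved when a coloring is pulled back along a covering map. Gadget \(K_4\) with every expanded edge of length \(1\), gadget its double cover \(Q_3\) the same way, and pull back an injective coloring. The result \(\chi\) is a distance-2 coloring and is globally edge-consistent, and every lemma you invoke holds. Yet antipodal original vertices of \(Q_3\) (at gdistance \(3\)) share a color, and the lemma's conclusion itself fails for \(r\ge 3\): an outer neighbor of one of them has a same-colored twin at gdistance \(2.5\) from it. This example differs from a correctly gadgeted graph only in that expanded-edge lengths repeat. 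The missing ingredient is therefore exactly the one your plan never uses: lengths come from a distance-\(2r\) edge coloring, so distinct expanded edges within the gball have distinct lengths.

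The paper's proof uses precisely this fact, and it works directly with \(v\) and \(v'\). Let \(S_k\) and \(S'_k\) be the vertices at \(\edist\) \(k\) from \(v\) and \(v'\) in their respective gadget sequences, and let \(l,l'\) be the sequence lengths. The paper shows \(\chi(S_k)=\chi(S'_k)\) for all \(k\le\min(l,l')\) by a minimal counterexample. The base case uses \(\chi(N(v))=\chi(N(v'))\) together with \cref{g:cor:original_is_original}. The inductive step uses edge-consistency together with uniqueness within a sequence (\cref{g:lem:gadget_seq_correct}). At level \(\min(l,l')\) one sequence reaches an outer vertex, and \cref{g:cor:outer_is_outer} forces the other to end there as well. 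Hence \(l=l'\), contradicting the distinctness of lengths.

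Your detour through \(w,w'\) also reverses the paper's logical order. Uniqueness of original-vertex colors is derived there as a corollary of this lemma. Proving it directly would require this same sequence-length comparison, so the detour gains nothing.
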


\begin{proof}
    Assume by contradiction that there are two sequences that start from outer vertices \(v\) and \(v'\), such that \(\chi(v) = \chi(v')\).

    First, we will introduce some notations. We denote by \(S_k\) and \(S'_k\) the set of vertices with the inner edge distance \(k\) from \(v\) and \(v'\), respectively, in the corresponding gadget sequences. Let \(u, u'\) be the outer vertices on the other side of the gadget sequences from \(v\) and \(v'\), respectively. Finally, let \(l = \edist(v, u)\) and \(l' = \edist(v', u')\).

    Now, we will prove that \(\forall k \leq min(l, l'):\  \chi(S_k) = \chi(S'_k)\).

    Assume for the sake of contradiction, that \(k\) is the minimal number such that exist \(\chi(S_k) \neq \chi(S'_k)\). By assumption, \(\chi(v) = \chi(v')\), therefore \(k \geq 1\).

    If \(k = 1\), then \(\chi(S_{1})\neq \chi(S'_{1})\). However, \(\chi(w)\cup\chi(S_{1}) = \chi(N(v)) = \chi(N(v')) = \chi(w')\cup\chi(S'_{1})\). As \(\chi(S_{1})\neq \chi(S'_{1})\) it implies that \(\chi(w') \in \chi(S_1)\). But by \cref{g:cor:original_is_original} an inner vertex cannot share a color with original inside \(\gn(v)\). Thus, \(k > 1\).

    Because minimality of \(k\), \(\chi(S_{k-1}) = \chi(S'_{k - 1})\) or in other words \(\forall w \in S_{k - 1} \exists w' \in S'_{k - 1}\) such that \(\chi(w) = \chi(w')\). Thus, by the second assumption of \cref{thm:gadgeting} \(\chi(\bigcup_{w \in S_k} N(w)) = \chi(\bigcup_{w' \in S'_k} N(w'))\). By construction, it implies \(\chi(S_{k - 2}) \cup \chi(S_{k - 1}) \cup \chi(S_{k}) = \chi(S'_{k - 2}) \cup \chi(S'_{k - 1}) \cup \chi(S'_{k})\). From \cref{g:lem:gadget_seq_correct}, we know that every vertex in a gadget sequence has a unique color within the sequence. By assumption \(\chi(S_k) \neq \chi(S'_k)\), thus \(\chi(S_{k - 2}) \cup \chi(S_{k - 1}) \neq \chi(S'_{k - 2}) \cup \chi(S'_{k - 1})\). However, this means that \(\chi(S_{k - 1}) \neq \chi(S'_{k - 1})\) or \(\chi(S_{k - 2}) \neq \chi(S'_{k - 2})\) and therefore \(k\) is not minimal. Thus, such \(k\) does not exist and \(\chi(S_{min(l, l')}) = \chi(S'_{min(l, l')})\). However, at least one of the gadget sequences has an outer vertex laying on the distance \(min(l, l')\) and by \cref{g:cor:outer_is_outer} the color of an outer vertex cannot be reused as the color of an inner vertex. Thus,  \(l\) must be equal to \(l'\), which is impossible as in the correct expansion there are no 2 sequences that have the same length in the $r$-gneighborhood.

    Therefore, we have reached a final contradiction and the color of the outer vertex is unique for an \(r\)-gneighborhood.
\end{proof}

\begin{corollary}
	Let \(v\) be an original vertex with color \(\chi(v)\). Then, for every vertex \(u \neq v\) in \(\gn_r(v):\)  \(\chi(u) \neq \chi(v)\). Assume by contradiction there is such \(u\) that \(\chi(u) = \chi(v)\). By \cref{g:cor:original_is_original} \(u\) must be an original vertex, so \(u\) is connected to an outer vertex \(w\), such that \(gdist(v, w) \leq gdist(v, u)\). Then by 2nd assumption of \cref{thm:gadgeting} there is \(u' \in \mathcal{N}(u)\), such that \(\chi(u') = \chi(w)\), which contradicts \cref{g:lem:outer_unique}. Thus, such \(u\) cannot exist.
\end{corollary}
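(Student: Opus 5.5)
The statement is the corollary just above. Let $v$ be an original vertex; every $u\neq v$ in $\gn_r(v)$ must have $\chi(u)\neq\chi(v)$. I would argue by contradiction: suppose some $u\in\gn_r(v)$ with $u\neq v$ has $\chi(u)=\chi(v)$.

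\textbf{Step 1: $u$ is original.} Every inner or outer vertex lies on a triangle, since each A-gadget contains the triangles $\{1,2,3\}$ and $\{4,5,6\}$. So by \cref{g:cor:original_is_original}, $u$ cannot be inner or outer. Hence $u$ is itself an original vertex, and it has three outer neighbours, one at the start of each incident gadget sequence.

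\textbf{Step 2: pull back an outer colour.} Apply the second assumption of \cref{thm:gadgeting} to the edge $\{v,w\}$, where $w$ is any outer neighbour of $v$. Since $u\in\gn_r(v)$ and $\chi(u)=\chi(v)$, the assumption yields a neighbour $u'$ of $u$ with $\chi(u')=\chi(w)$. It remains to check that $u'\neq w$. This holds because $w$ is adjacent to $v$ and to no other original vertex: a gadget sequence has length at least $1$, so an outer vertex touches exactly one original vertex. Therefore $u'$ is a vertex different from $w$ that carries the colour of $w$.

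\textbf{Step 3: locate $u'$ and conclude.} \cref{g:lem:outer_unique} forbids any vertex $v'\neq w$ in $\gn_r(v)$ from having colour $\chi(w)$, so we only need $u'\in\gn_r(v)$. The edge $\{u,u'\}$ has weight $0.5$, so $\text{gdist}(v,u')\le \text{gdist}(v,u)+0.5$. This could exceed $r$ when $u$ lies on the boundary of the gball.

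\textbf{Main obstacle: the boundary case.} I would first try to choose $u'$ on the expanded edge that leads back toward $v$. To do this, take the neighbour $w_u$ of $u$ on a shortest $v$--$u$ path. Apply the edge-consistency assumption in reverse, from $u$'s perspective, using that $v\in\gn_r(u)$ since gdist is symmetric. This gives a neighbour of $v$ coloured $\chi(w_u)$, and that neighbour lies within gdist $0.5$ of $v$, so it is inside $\gn_r(v)$. Now $w_u$ is an outer vertex adjacent to $u$, with $\text{gdist}(v,w_u)<\text{gdist}(v,u)\le r$. Thus $w_u$ and this neighbour of $v$ are two distinct vertices with the same colour, both in $\gn_r(v)$. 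This contradicts \cref{g:lem:outer_unique}, applied with the outer vertex taken at $v$.

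If only $\gn_r(u)$ is controlled, I would instead invoke \cref{g:lem:outer_unique} for the original vertex $u$, with the same conclusion. Either way, no such $u$ exists, and the colour of $v$ is unique in its $r$-gball.
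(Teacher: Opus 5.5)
Your final argument is correct and is essentially the paper's proof. You take the outer neighbour $w_u$ of $u$ on the $v$-side, apply edge-consistency to $\{u,w_u\}$ at $v\in\gn_r(u)$ to obtain a neighbour of $v$ coloured $\chi(w_u)$, and then contradict \cref{g:lem:outer_unique} at $v$; the paper's condition $\text{gdist}(v,w)\le\text{gdist}(v,u)$ plays exactly the role of your boundary fix, and your version correctly places the pulled-back vertex in $\mathcal{N}(v)$ where the paper's text writes $\mathcal{N}(u)$ (the abandoned Step~2 direction can simply be dropped).
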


\begin{corollary}
Let \(v\) be an inner vertex with color \(\chi(v)\). Then, for every vertex \(u \neq v\) in the \(r\)-gneighborhood, \(\chi(u) \neq \chi(v)\). This is because the color of the inner vertex is defined by its place in a gadget sequence between two outer vertices, which have unique colors in the \(r\)-neighborhood.
\end{corollary}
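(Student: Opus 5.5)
The plan is a case analysis on the type of the vertex that would share the color of the inner vertex. In each case we either invoke an earlier result directly or reduce to the uniqueness of outer colors (\cref{g:lem:outer_unique}). Let \(v\) be an inner vertex, lying in the gadget sequence between outer vertices \(a\) and \(b\), where \(a\) is adjacent to an original vertex \(o_a\) and \(b\) to an original vertex \(o_b\). Suppose for contradiction that some \(u \neq v\) in the relevant \(r\)-gneighborhood has \(\chi(u)=\chi(v)\).

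\emph{Easy cases.} The first three cases are handled as follows.
\begin{enumerate}
\item If \(u\) is original, this contradicts \cref{g:cor:original_is_original}, since \(v\) lies on a triangle.
\item If \(u\) is outer, adjacent to an original vertex \(o\), this contradicts \cref{g:cor:outer_is_outer}, applied to \(o\). To apply it we need \(v \in \gn_r(o)\). I would get this from the gdist accounting: inner vertices add no gdistance, so \(v\) lies in the gball of whichever original vertex centers the neighborhood.
\item If \(u\) lies in the same gadget sequence as \(v\), this is exactly \cref{g:lem:gadget_seq_correct}.
\end{enumerate}

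\emph{Remaining case: \(u\) inner in a different sequence.} Here I would run the layer-propagation argument from the proof of \cref{g:lem:outer_unique}, but in reverse, starting from the common color of \(v\) and \(u\) and walking outward to an endpoint.

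Define layers \(S_k\) as the vertices of \(v\)'s sequence at inner edge distance \(k\) from \(v\), taken in the direction of \(a\). Define \(S'_k\) analogously in \(u\)'s sequence. Each \(S'_k\) is taken in the direction singled out by the color matching, i.e.\ the side containing the neighbors of \(u\) whose colors are \(\chi(N(v)\cap S_1)\). The second assumption of \cref{thm:gadgeting} gives \(\chi(N(u))=\chi(N(v))\). Together with \cref{g:lem:gadget_seq_correct}, which gives distinct colors inside a sequence, this lets us inductively show \(\chi(S_k)=\chi(S'_k)\), exactly as in the induction of \cref{g:lem:outer_unique}. Let \(k^*=\edist(v,a)\). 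At \(k=k^*\), the color \(\chi(a)\) appears on some vertex \(a'\in S'_{k^*}\).

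We then finish by the type of \(a'\).
\begin{itemize}
\item If \(a'\) is inner, this contradicts \cref{g:cor:outer_is_outer}.
\item If \(a'\) is outer and \(a'\neq a\), this contradicts \cref{g:lem:outer_unique}.
\item So \(a'=a\). But \(a\) belongs only to \(v\)'s sequence, contradicting that \(u\) lies in a different sequence.
\end{itemize}
If \(u\)'s sequence ends before reaching layer \(k^*\), its terminating outer vertex sits at an inner layer of \(v\)'s side. This is again excluded by \cref{g:cor:outer_is_outer}.

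\emph{Main obstacle.} I expect the hardest step to be setting up the layer correspondence rigorously. The A-gadget has the symmetry swapping vertices \(2\leftrightarrow 3\) and \(4\leftrightarrow 5\). Inner vertices also have neighbors on both sides, so the direction in \(u\)'s sequence must be fixed by the colors rather than by position. I would handle this by using \(\chi(S_{k-1}\cup S_k\cup S_{k+1})\) matching and intra-sequence distinctness to force the direction, as done for outer vertices.

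We must also check that every vertex touched stays within the gball where \cref{thm:gadgeting}'s second assumption applies. I expect this to be harmless, because traversing inner vertices adds no gdistance.
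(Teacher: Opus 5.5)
Your proposal is correct and makes the paper's argument explicit. The paper justifies this corollary in one line (an inner vertex's color is forced by its position between the two uniquely colored outer vertices of its sequence); your step of pushing the color match out to an outer vertex and then invoking \cref{g:cor:outer_is_outer} and \cref{g:lem:outer_unique} is that idea, and it leaves the gball-radius bookkeeping just as unchecked as the paper does. The direction ambiguity you worry about can be avoided in either of two ways:
\begin{itemize}
\item Use two-sided BFS layers from \(v\) and from \(u\). For these, \(S_k\subseteq\mathcal{N}(S_{k-1})\subseteq S_{k-2}\cup S_{k-1}\cup S_k\) holds automatically until an outer vertex is reached.
\item More simply, push the match along a shortest path \(v=p_0,\dots,p_m=a\). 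This gives \(q_m=a\) by \cref{g:lem:outer_unique}, and then the distance-2 coloring traces back \(q_i=p_i\) step by step, ending with \(u=q_0=v\).
\end{itemize}
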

\begin{corollary}\label{g:cor:correct_gadgeting}
    \(\chi\) is a correct gdistance 2r original coloring. Thus, \cref{thm:gadgeting} is proved.
\end{corollary}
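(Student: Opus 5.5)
The plan is to assemble the preceding lemmas and corollaries into a single statement about an arbitrary original vertex \(v\). Fix \(v\) and suppose, for contradiction, that two distinct vertices \(x,y\in\gn_r(v)\) satisfy \(\chi(x)=\chi(y)\). I would split into cases by vertex type (original, outer, inner), and in each case reduce to one of the already established uniqueness statements:
\begin{itemize}[noitemsep]
\item \cref{g:cor:original_is_original}: originals never share a color with triangle vertices;
\item the corollary after \cref{g:lem:outer_unique}: \(v\)'s own color is unique in \(\gn_r(v)\);
\item \cref{g:lem:outer_unique}: the outer neighbors of an original have unique colors;
\item the inner-vertex corollary together with \cref{g:lem:gadget_seq_correct}: colors inside one gadget sequence are distinct.
\end{itemize}
Mixed-type pairs are dismissed immediately by \cref{g:cor:original_is_original} and \cref{g:cor:outer_is_outer}.

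The real issue, and what I expect to be the main obstacle, is that those statements are centered. For example, uniqueness of an original's color is proved inside \emph{its own} gball. Two vertices \(x,y\in\gn_r(v)\) may instead be at gdistance up to \(2r\) from each other, with neither playing the role of the center. So I would not apply the corollaries at \(x\). Instead I would transport the conflict back to \(v\) using assumption~2 of \cref{thm:gadgeting}, which is formulated relative to the gball of an arbitrary vertex and hence applies to all of \(\gn_r(v)\).

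Concretely, assume without loss of generality that \(\text{gdist}(v,x)\le\text{gdist}(v,y)\), and take a shortest path \(x=p_0,p_1,\dots,p_m=v\) inside \(\gn_r(v)\). I would apply edge-consistency step by step. Because \(\chi(p_0)=\chi(y)\) and \(\{p_0,p_1\}\in E(G)\), there is a neighbor \(q_1\) of \(y\) with \(\chi(q_1)=\chi(p_1)\). Iterating gives a walk \(y=q_0,q_1,\dots,q_m\) with \(\chi(q_i)=\chi(p_i)\) for every \(i\), and in particular \(\chi(q_m)=\chi(v)\).

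Two things then need checking. First, the walk must stay inside the gball where assumption~2 is available; I would control this by choosing the order of the pair and bounding gdist along the walk, using that the verification radius in problem \(D\) was chosen generously. Second, the walk must be injective on the relevant steps, i.e.\ \(q_i\neq p_i\) for all \(i\). For this I would argue by minimality: take the counterexample pair \((x,y)\) with the smallest \(\text{gdist}(v,x)\). If \(q_i=p_i\) for some \(i\), the first such coincidence already yields a closer same-colored pair \((p_{i-1},q_{i-1})\), or violates the distance-2 condition of assumption~1 when the two vertices are adjacent or share a neighbor.

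Once this is in place the contradiction follows at \(v\). If \(q_m\neq v\), then \(q_m\) is a vertex other than \(v\) carrying \(\chi(v)\), which contradicts the uniqueness of \(v\)'s color (the corollary after \cref{g:lem:outer_unique}). If instead the coincidence occurs at the step just before \(v\), then \(p_{m-1}\) is an outer neighbor of \(v\) and \(q_{m-1}\) is a second vertex of \(\gn_r(v)\) with the same color, contradicting \cref{g:lem:outer_unique}. Hence every vertex of \(\gn_r(v)\) has a unique color, which is exactly the conclusion of \cref{thm:gadgeting}.
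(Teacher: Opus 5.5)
\textbf{There is a genuine gap, at exactly the step you flag and then defer.}

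\emph{The transport step.} Assumption~2 of \cref{thm:gadgeting}, as formally written, is centered at the endpoint of the edge. To push the pair $(p_i,q_i)$ across $\{p_i,p_{i+1}\}$ you need $q_i\in\gn_r(p_i)$, i.e.\ $\text{gdist}(p_i,q_i)\le r$. It does not cover arbitrary same-colored pairs inside $\gn_r(v)$. For $x,y\in\gn_r(v)$ you only know $\text{gdist}(x,y)\le 2r$, so already $q_1$ is unjustified. Ordering the pair does not change $\text{gdist}(x,y)$. The verification radius of problem~$D$ is not a hypothesis of the theorem you are proving.

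\emph{The endgame.} This fails however assumption~2 is read. Nothing keeps the $q$-walk inside $\gn_r(v)$: $q_1$ may be the neighbor of $y$ pointing away from $v$. Even if colors preserve vertex type, so that the $q$-walk has the same edge weights as the $p$-path, you only get $\text{gdist}(v,q_m)\le\text{gdist}(v,y)+\text{gdist}(x,v)\le 2r$. So the corollary that $\chi(v)$ is unique in $\gn_r(v)$ says nothing about $q_m$. You have merely moved the conflict to the pair $(v,q_m)$, again at gdistance up to $2r$.

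\emph{Smaller points.} Your ``mixed pairs are dismissed immediately'' step has the same defect, since \cref{g:cor:original_is_original,g:cor:outer_is_outer} are centered as well. Your injectivity step is fine and needs no minimality: at the first index with $q_i=p_i$, the distinct vertices $p_{i-1},q_{i-1}$ are same-colored neighbors of $p_i$, contradicting assumption~1. A gdist-minimality argument would in any case break on the weight-$0$ edges.

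\textbf{This cannot be repaired from the stated hypotheses}, because with assumption~2 as formally written the conclusion is false. Gadget a $3$-regular graph $G_0$ of girth greater than $4r$. Pick originals $a,b$ at distance $r$ from $v$ in different branches, so that $\text{gdist}(a,b)=2r>r$. Let $\chi$ be injective except that $\chi(a)=\chi(b)$. Then:
\begin{itemize}
\item assumption~1 holds, since $a$ and $b$ are far apart in hop distance;
\item assumption~2 could only ever compare $a$ with $b$, and that is vacuous since $b\notin\gn_r(a)$ and vice versa;
\item yet $a,b\in\gn_r(v)$ share a color.
\end{itemize}

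The paper obtains this corollary simply by combining the three centered uniqueness statements (original, outer, inner) as if they applied to all pairs in $\gn_r(v)$. So it silently skips the centering issue you correctly spotted, and your transport does not close it either.

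\textbf{What is missing} is an explicit strengthening of the theorem:
\begin{itemize}
\item Assume edge-consistency for same-colored pairs at gdistance up to some $R$ comfortably above $2r$. The radius $r_D$ of problem~$D$ is what is meant to supply this.
\item Re-prove \cref{g:lem:outer_unique} and the corollaries after it at that radius.
\end{itemize}
With that in hand, the case split goes through directly. Your transport also works, provided you track the invariant $\text{gdist}(v,p_i)+\text{gdist}(v,q_i)\le 2r$ instead of trying to keep the walk inside $\gn_r(v)$. This invariant is preserved as long as colors preserve vertex type, and it keeps every transported pair within the strengthened consistency radius.
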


\section{From PN-checkable to RE formalism}\label{sec:re-formalism}

In this section we reduce an arbitrary always-solvable PN-checkable LCL on 3-regular graphs \(D =(\mathcal{G}_3, \outlabels_D, \mathcal{C}_D,r_D)\) to a round eliminator problem \(E = (\Sigma^{out}_{E}, \mathcal{C}_V, \mathcal{C}_E)\).
\paragraph{Rooted trees of radius \(r\).}
We say that a rooted tree \((T,v)\) has radius \(r\) if \(T^r_T(v)\cong (T,v)\).

\paragraph{Class of radius \(r\) rooted trees.}
Let \(\mathcal{T}_r = \{(T,v)\;:\; (T,v)\text{ is a rooted tree of radius } r\}.\)

\paragraph{Setup.}
Let \(C_D\) be the set of PN-views accepted for problem \(D\).

\paragraph{Neighbor enumeration.}
We define auxiliary functions
\[
\gamma : ((T,v),\sigma)\times \mathcal{N}(v)\to[3]
\quad\text{and}\quad
\zeta : ((T,v),\sigma)\times[3]\to\mathcal{N}(v),
\]
where the mapping \(\gamma(((T,v),\sigma),\cdot)\) is injective.
Moreover,
\[
\zeta\bigl(((T,v),\sigma),\gamma(((T,v),\sigma),w)\bigr)=w
\quad
\text{for all } w\in\mathcal{N}(v).
\]
\noindent
In other words, \(\gamma\) provides an enumeration for neighbors of \(v\) in \((T, v)\) and \(\zeta\) provides a neighbor of \(v\) by its number.

\paragraph{Directional and pruned subtrees.}
For \((T, v) \in \mathcal{T}_r\) and \(w\in\mathcal{N}(v)\), let \((T,v \to w)\) denote the rooted
labeled subtree of depth \(r-1\) obtained by re-rooting at \(w\) and
restricting  to vertices that lie
in the component of \((T, v)\) containing \(w\).
We write its labeled version as \(((T, v\to w),\sigma)\) and its PN-view version as \(T^r_G(v \to w)\).

Let \((T,v \setminus w)\) be the rooted labeled subtree of depth \(r-1\)
obtained by removing the branch containing \(w\).
We write its labeled version as \(((T, v\setminus w),\sigma)\) and its PN-view version as \(T^r_G(v \setminus w)\).

\begin{remark}
By the definition of PN-views, for any adjacent vertices \(u\) and \(v\),
\[
	(T^r_G(v \to u), \sigma)
\cong
(T^r_G(u \setminus v), \sigma).
\]
\end{remark}

\paragraph{Compatibility indicator.}
We define the compatibility indicator
\[
I_{\mathrm{fit}} : (C_D\times[3])^2 \to \{0,1\}.
\]
\noindent
For \((((T,v),\sigma),x),\;
(((T',v'),\sigma'),x')
\in C_D\times[3],\)
let
\[
u = \zeta(((T,v),\sigma),x) \in \mathcal{N}(v),
\quad
u' = \zeta(((T',v'),\sigma'),x') \in \mathcal{N}(v').
\]
\noindent
Then we set
\[
I_{\mathrm{fit}}
\bigl(
(((T,v),\sigma),x),
(((T',v'),\sigma'),x')
\bigr)=1
\]
if and only if
\[
(T_{v\to u},\,\sigma) \cong (T'_{v'\setminus u'},\,\sigma')
\quad\text{and}\quad
(T_{v\setminus u},\,\sigma) \cong (T'_{v'\to u'},\,\sigma').
\]
\noindent
Otherwise,
\[
I_{\mathrm{fit}}
\bigl(
(((T,v),\sigma),x),
(((T',v'),\sigma'),x')
\bigr)=0.
\]

\begin{definition}[Problem E]\label{def:problem:e} Problem E is an RE-problem, such that
\begin{itemize}
    \item \(\mathcal{G}_E\) is a family of 3-regular graphs.
    \item \(\outlabels_E = C_D \times [3]\).
    \item \(\mathcal{C}_V \coloneqq \{[(a, 1), (a, 2), (a, 3)] \mid a \in C_D\}\), a condition on nodes.
    \item \(\mathcal{C}_E \coloneqq \{[(a, x_1), (b, x_2)] \mid I_{\mathrm{fit}}((a, x_1), (b, x_2))\}\), a condition on edges.
    \item Let \(t : V(G) \to C_D\) be the function that assigns to each vertex \(v \in V(G)\) the element \(a \in C_D\) associated with it in the \(\mathcal{C}_V\) condition. We write \(t(v) = (t_T(v), t_\sigma(v))\).
    \item \(l: V(G) \to \outlabels_D\) and \(l(v) = t_\sigma(v)(v)\). So the labeling of the node corresponds to the labeling of the root node in its output.
\end{itemize}
\end{definition}

\begin{lemma}\label{lem:t_good}
	If for every vertex \(\mathcal{C}_V\) holds and for every edge \(\mathcal{C}_E\) holds, then for a vertex \(v\in V(G)\): \((T_{G}^r(v), l) \cong t(v)\). 
\end{lemma}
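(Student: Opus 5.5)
We prove, by induction on the depth $j \le r$, the stronger statement that for every vertex $v$ the depth-$j$ truncation of the labeled PN-view $(T^j_G(v), l)$ is isomorphic to the depth-$j$ truncation of $t(v)$. The isomorphism should send the child of the root along the edge $\{v,w\}$ to the child $\zeta(t(v), x)$, where $x$ is the index that $v$ writes on its half-edge towards $w$.

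The node constraint $\mathcal{C}_V$ forces the three half-edges at $v$ to carry $(a,1),(a,2),(a,3)$ for a single $a = t(v) \in C_D$. So $t$ is well defined, and the indices give a bijection between $\mathcal{N}_G(v)$ and the three children of the root of $t_T(v)$. Since $G$ is $3$-regular and every element of $C_D$ is a radius-$r$ PN-view of a $3$-regular graph, the root of $t_T(v)$ indeed has exactly three children, so this matching is well defined.

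\textbf{Base case $j=0$.} The claim is that $l(v)$ equals the root label of $t(v)$, which holds by the definition of $l$.

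\textbf{Inductive step.} Take an edge $\{v,w\}$ with half-edge labels $(t(v),x)$ at $v$ and $(t(w),x')$ at $w$. Let $u = \zeta(t(v),x)$ and $u' = \zeta(t(w),x')$. The edge constraint $\mathcal{C}_E$ gives $I_{\mathrm{fit}} = 1$, hence
\[
(t(v)_{v\to u},\sigma) \cong (t(w)_{w\setminus u'},\sigma).
\]
So the branch of $t(v)$ through child $u$, re-rooted at $u$, coincides with the view of $w$ with its branch back towards $v$ removed.

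By the induction hypothesis at $w$ at depth $j-1$, the view $t(w)$ agrees with $(T^{j-1}_G(w), l)$, and the child $u'$ corresponds to the walk from $w$ back to $v$. Removing that branch, $t(w)_{w\setminus u'}$ truncated at depth $j-1$ agrees with the non-backtracking walks from $w$ that avoid returning to $v$. By the Remark, these are exactly $T^j_G(v\to w)$ truncated at depth $j$. Gluing the three branches over the common root (root label $l(v)$) gives the depth-$j$ statement at $v$.

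One caveat is that the induction hypothesis must be applied to the subtree $w\setminus v$ and not only to the full view. I therefore plan to strengthen the hypothesis so that it states the isomorphism is compatible with the child correspondence. That is automatic from how the isomorphism is built recursively, branch by branch.

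\textbf{Main obstacle.} The delicate point is this index-compatibility: the $I_{\mathrm{fit}}$ condition only asserts isomorphism of abstract rooted subtrees. I need the isomorphism that identifies $t(w)_{w\setminus u'}$ with a piece of $t(v)$ to send each child of $w$ to the grandchild of $v$ matching the actual neighbor of $w$ in $G$. I plan to handle this by not tracking concrete isomorphisms, but defining the isomorphism class of a depth-$j$ view recursively as the root label together with the multiset of isomorphism classes of the depth-$(j-1)$ directed branches. Both sides satisfy the same recursion: the PN-view by definition, and $t$ via the fit conditions. Equality of classes then follows by induction without needing coherent choices.

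A secondary check is that the depth-$(r-1)$ subtrees used in $I_{\mathrm{fit}}$ carry enough depth, since the branch $v\to u$ contains depth up to $r-1$ below $u$. Finally, $j=r$ gives the lemma.
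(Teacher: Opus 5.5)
Your proposal is correct and follows essentially the same route as the paper: induction on depth, combining the Remark $T(v\to u)\cong T(u\setminus v)$, the edge constraint $I_{\mathrm{fit}}$, and the hypothesis at the neighbor. Your index-compatible strengthening also properly justifies the step the paper simply asserts, namely passing from $(T^r_G(u),l)\cong_k t(u)$ to $(T^r_G(u\setminus v),l)\cong_k t(u\setminus v)$; compatibility at the root's children already suffices, since the fit isomorphism is only ever composed abstractly with the hypothesis at $w$, so the grandchild-level concern in your ``main obstacle'' paragraph does not actually arise.
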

\begin{proof}
	We will be proving this statement by induction. We will also introduce an additional notation. 
		Labeled trees \(((T, v), \sigma)\) and \(((T', v'), \sigma')\) are \(k\)-isomorphic (denoted as \(\cong_k\)) iff \[((T, v)[w \in V((T, v))| dist(v, w)\leq k], \sigma) \cong ((T', v')[w \in V((T', v'))| dist(v', w)\leq k], \sigma'),\] where \(G[X]\) is notation for the induced subgraph of \(G\) on \(X\).
	
		Now, let us notice that for every node \(v\): \((T^r_{G}(v), l) \cong_0 t(v)\), as per definition \(l(v) = t_{\sigma}(v)(v)\). Thus the base of induction is proved. 

	Assume, that for every vertex \(v\), \((T^r_{G}(v), l) \cong_k t(v)\) for some \(0 \leq k \leq r - 1\). Let us select an arbitrary neighbor \(u \in \mathcal{N}(v)\). Then, from definitions \((T^{r}_{G}(v\to u), l) \cong (T^{r}_{G}(u\setminus v), l)\);  from \((T^r_{G}(u), l) \cong_k t(u)\) follows \((T^{r}_{G}(u \setminus v), l) \cong_k t(u \setminus v)\).
	From passive constraint \(t(u \setminus  v) \cong t(u \to v)\).
	Thus, \((T^r_{G}(v \to u), l) \cong t(v \to u)\).
		As \(u\) was selected arbitrarily, the same logic follows for the remaining neighbors of \(v\).
	As \((T^r_{G}(v), l)\) and \(t(v)\) are 3-regular trees of depth \(r\), with same labeling of the roots
	and each subtree being \(k\)-isomorphic, then \((T^r_{G}(v), l) \cong_{k + 1} t(v)\).
	Induction step proved, so \((T^r_{G}(v), l) \cong t(v)\). 
\end{proof}

\begin{lemma}[Soundness of the lift from D to E]\label{lem:sound:de}
Let $G_D$ be an instance of problem $D$. Let $G_E \cong G_D$ be an instance of problem \(E\). Let \(\sigma_E\) be a labelling on half-edges of \(G_E\)  such that all conditions on active and passive nodes are met.
	Define \(\ssigma{D}{out}\) as \(l\) in problem \(D\) description, then \(\ssigma{D}{out}\) is legal for problem \(D\). 
\end{lemma}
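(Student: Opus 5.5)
The plan is to reduce the statement almost entirely to \cref{lem:t_good}. Problem $D$'s validity condition asks that for every node $v$ of $G_D$ the labeled radius-$r_D$ PN-view $(T^{r_D}_{G_D}(v),\ssigma{D}{out})$ be isomorphic to some centered PN-view in $\CC_D$. Since $\ssigma{D}{out}=l$, it suffices to show that $(T^{r_D}_{G_D}(v),l)\cong a$ for some $a\in C_D$.

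First, I will use the node constraint $\CC_V$. It fixes, for each node $v$, a single element $t(v)\in C_D$ such that the three incident half-edges carry $(t(v),1),(t(v),2),(t(v),3)$. So $t$ and $l$ are well defined, with $l(v)=t_\sigma(v)(v)$ the root label of $t(v)$.

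Next, I identify $G_E$ with $G_D$; the statement gives $G_E\cong G_D$, and both are $3$-regular simple graphs. Then the PN-views $T^{r_D}_{G_E}(v)$ and $T^{r_D}_{G_D}(v)$ coincide. Since every node constraint and every edge constraint holds by hypothesis, \cref{lem:t_good} applies with $r=r_D$ and gives $(T^{r_D}_{G_D}(v),l)\cong t(v)$. Because $t(v)\in C_D$, the view at $v$ is accepted. As $v$ was arbitrary, $\ssigma{D}{out}$ is legal for $D$.

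The main obstacle is making the application of \cref{lem:t_good} sound, and in particular its use of the edge constraint. For an edge $\{v,u\}$ with half-edge labels $(t(v),x)$ and $(t(u),x')$, I need $\zeta(t(v),x)$ and $\zeta(t(u),x')$ to be the tree-neighbors that correspond to the actual graph neighbors $u$ and $v$. Only then does $I_{\mathrm{fit}}=1$ give $t(v\to u)\cong t(u\setminus v)$, which is the passive-constraint step in that lemma's induction. So the half-edge index $x$ at $v$ must be read as naming which branch of $t(v)$ points along that edge.

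I will fix this by choosing, at each node, the bijection between incident edges and the branches of $t(v)$ induced by the indices in $\CC_V$. The induction in \cref{lem:t_good} then matches branch $x$ of $t(v)$ with the subtree of $T^{r_D}_{G_D}(v)$ through the edge labeled $x$. Labels only need to agree up to isomorphism of labeled trees, so any residual ambiguity from symmetric branches is absorbed. If this identification turns out to be delicate, I will restate it explicitly as a small claim before invoking \cref{lem:t_good}.
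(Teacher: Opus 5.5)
Your proposal is correct and follows the paper's proof: apply \cref{lem:t_good} to get $(T^{r_D}_{G_D}(v), l)\cong t(v)$ for every $v$, and conclude from $t(v)\in C_D$ that every view is accepted. You also read the half-edge index $x$ as naming the branch $\zeta(t(v),x)$ of $t(v)$, so that the induction matches the subtree through edge $\{v,u\}$ with $t(v)_{v\to\zeta(t(v),x)}$; the paper's proof of \cref{lem:t_good} leaves this implicit, and your fix is the right one.
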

\begin{proof}
		By \cref{lem:t_good}, \(\forall v\in V(G_E)\): \((T_{G}^r(v), l) \cong t(v)\). Let \(u\) be an arbitrary vertex in \(V(G_D)\) and let \(v \in V(G_E)\) be its match,
		then by construction \((T_{G_E}^r(v), l) \cong (T_{G_D}^r(u), \ssigma{D}{out})\). 
		By the condition for active nodes, \(t(v) \in C_D\), implying \((T_{G_D}^r(u), \ssigma{D}{out}) \in C_D\), thus \(\ssigma{D}{out}\) is legal. 
\end{proof}

\begin{lemma}[Soundness of the lift from E to D]\label{lem:sound:ed}
	Let $G_E$ be an instance of problem $E$. Let $G_D \cong G_E$ be an instance of problem \(D\) and let \(\ssigma{D}{out}\) be legal for \(G_D\). 
		Define \(\sigma_{E}\) as a labeling on half-edges, such that
	\(\sigma_{E}(v, \{v, u\}) = ((T^r_{G_D}(v), \ssigma{D}{out}), \gamma((T^r_{G_D}(v), \ssigma{D}{out}), u))\).

	Then, \(\sigma_{E}\) is legal for problem \(E\).
\end{lemma}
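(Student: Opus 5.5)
We check the two constraints of problem \(E\) directly, using only that \(\ssigma{D}{out}\) is legal and the definition of PN-views. For every node \(v\) we write \(a_v \coloneqq (T^r_{G_D}(v), \ssigma{D}{out})\). Since \(\ssigma{D}{out}\) is legal for \(D\), we have \(a_v \in C_D\) for all \(v\) (up to isomorphism; we fix canonical representatives of the isomorphism classes in \(C_D\) so that \(\gamma\) and \(\zeta\) are well defined). Hence every half-edge label \(\sigma_E(v,\{v,u\}) = (a_v, \gamma(a_v,u))\) lies in \(\outlabels_E = C_D\times[3]\).

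\emph{Node constraint.} Let \(v\) have neighbors \(u_1,u_2,u_3\); because \(G_D\) is a simple \(3\)-regular graph, these are distinct. The three half-edge labels at \(v\) are \((a_v,\gamma(a_v,u_i))\) for \(i=1,2,3\). The map \(\gamma(a_v,\cdot)\) is injective from \(\mathcal{N}(v)\), a three-element set, into \([3]\), so it is a bijection. The multiset of labels at \(v\) is therefore \([(a_v,1),(a_v,2),(a_v,3)] \in \CC_V\).

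\emph{Edge constraint.} Let \(\{v,u\}\) be an edge. Its two labels are \((a_v,x)\) and \((a_u,x')\), where \(x=\gamma(a_v,u)\) and \(x'=\gamma(a_u,v)\). By the inverse property of \(\zeta\), we have \(\zeta(a_v,x)=u\) and \(\zeta(a_u,x')=v\). So \(I_{\mathrm{fit}}=1\) reduces to two isomorphisms,
\[
(T^r_{G_D}(v\to u),\ssigma{D}{out}) \cong (T^r_{G_D}(u\setminus v),\ssigma{D}{out})
\quad\text{and}\quad
(T^r_{G_D}(v\setminus u),\ssigma{D}{out}) \cong (T^r_{G_D}(u\to v),\ssigma{D}{out}).
\]
The first is exactly the Remark following the definition of directional and pruned subtrees. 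The second is the same Remark with the roles of \(u\) and \(v\) swapped. It holds because the non-backtracking walks \((v,u,\dots)\) starting with the step \(v\to u\), with the first vertex dropped, correspond bijectively to the walks from \(u\) that do not immediately return to \(v\). This correspondence respects the successor relation and the labels, which are inherited from the last vertex of each walk, and the depths match at \(r-1\). Hence \([(a_v,x),(a_u,x')]\in\CC_E\).

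The only delicate step is the bookkeeping around isomorphism classes. The vertex \(u\) is a vertex of \(G_D\), whereas \(\gamma\) and \(\zeta\) act on neighbors of the root of the abstract tree \(a_v\). We will identify \(u\) with the length-one walk \((v,u)\) in \(T^r_{G_D}(v)\) and transport it through the fixed isomorphism to the canonical representative. We then check that \(I_{\mathrm{fit}}\) depends only on which root-neighbor is selected, so it is invariant under this transport. With that convention in place, both constraints hold for all nodes and edges, and \(\sigma_E\) is legal for \(E\).
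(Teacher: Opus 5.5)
Your proof is correct and follows the same route as the paper's. The node constraint comes from $\gamma(a_v,\cdot)$ being a bijection from the three distinct neighbors onto $[3]$, and the edge constraint comes from the Remark $(T^r_G(v\to u),\sigma)\cong(T^r_G(u\setminus v),\sigma)$ applied in both directions. Your version is more careful than the paper's two-line proof, which cites the Remark in a mistyped form ($v\to u$ versus $v\setminus u$), skips the observation that $a_v\in C_D$ because $\ssigma{D}{out}$ is legal, and omits the isomorphism-class bookkeeping for $\gamma$ and $\zeta$.
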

\begin{proof}
		By definition of \(\sigma_{E}\) and \(\gamma\), the constraints on nodes are fulfilled. Since \((T^r_G(v \to u), \sigma) \cong (T^r_G(v \setminus u), \sigma)\), the constraints on edges are also fulfilled. Thus \(\sigma_{E}\) is legal for problem \(E\). 
\end{proof}

\begin{lemma}
	Let $\instance_D = (G_D)$ be an instance of problem $D$. Let $G_E\cong G_D$ be an instance of problem $E$. Let $\outcome_E = \{(\sigma_E^{\oupt,i},p_i)\}_{i\in I}$ be an outcome that solves $E$ on $G_E$ with probability at least $q$. Define $\outcome_D$ by sampling $\{(\sigma_E^{\oupt,i},p_i)\}_{i\in I}$ from $\outcome_E$ and applying the $\LIFT$. If the lift is sound, then $\outcome_D$ solves $D$ with probability at least $q$.
\end{lemma}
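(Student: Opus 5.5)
This follows the same pattern as \cref{lemma:prob_a_to_b} and \cref{lemma:prob_b_to_d}. The only ingredient specific to this step is the soundness of the lift from $E$ to $D$, which is \cref{lem:sound:de}. Concretely, for each realization $\sigma_E^{\oupt,i}$ in the support of $\outcome_E$, the lift $\LIFT(\sigma_E^{\oupt,i})$ is the labeling $l$ from \cref{def:problem:e}. That is, every node $v$ reads the element $t(v)\in C_D$ written on its three incident half-edges and outputs the label of the root, $t_\sigma(v)(v)$. This is well defined whenever the node constraint $\mathcal{C}_V$ holds at $v$, because then all three half-edges carry the same $a\in C_D$. Where $\mathcal{C}_V$ fails, we set $l(v)$ to an arbitrary fixed default; this case only arises for illegal samples, which we do not need to control.

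We introduce the index sets
\[
S_E := \{ i\in I \mid \sigma_E^{\oupt,i} \text{ is legal for } E \text{ on } G_E\},
\qquad
S_D := \{ i\in I \mid \LIFT(\sigma_E^{\oupt,i}) \text{ is legal for } D \text{ on } G_D\}.
\]
We then show $S_E\subseteq S_D$. Fix $i\in S_E$. Every node constraint in $\mathcal{C}_V$ and every edge constraint in $\mathcal{C}_E$ is satisfied. \cref{lem:t_good} then gives $(T^{r}_{G}(v),l)\cong t(v)$ for all $v$. Since $G_E\cong G_D$, \cref{lem:sound:de} gives that $l$ is legal for $D$, so $i\in S_D$.

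The probability bound is now immediate:
\[
\sum_{i\in S_D} p_i \;\ge\; \sum_{i\in S_E} p_i \;\ge\; q,
\]
where the last inequality is the hypothesis that $\outcome_E$ solves $E$ with probability at least $q$. Hence $\outcome_D$ solves $D$ with probability at least $q$.

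The main point to check is that the lift is applied pointwise to each sampled labeling, with the same probabilities $p_i$. It is a deterministic function of the sample and uses no extra randomness, so the distribution of $\outcome_D$ is the pushforward of $\outcome_E$. Given that, the argument uses no property of the outcome beyond its success probability, and so applies uniformly across all the models in \cref{ssec:models}. The substantive content sits in \cref{lem:t_good} and \cref{lem:sound:de}, which we take as given.
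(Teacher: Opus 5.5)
Your proposal is correct and matches the paper's argument. The paper omits this proof and defers to the index-set argument of \cref{lemma:prob_a_to_b,lemma:prob_b_to_a}, which is exactly your inclusion $S_E\subseteq S_D$ via soundness of the lift (\cref{lem:sound:de}) followed by summing the $p_i$; your separate appeal to \cref{lem:t_good} is redundant but harmless, since \cref{lem:sound:de} already invokes it.
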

\begin{lemma}
		Let $\instance_E = (G_E)$ be an instance of problem $E$. Let $G_D\cong G_E$ be an instance of problem $D$. Let $\outcome_D = \{(\sigma_D^{\oupt,i},p_i)\}_{i\in I}$ be an outcome that solves $D$ on $G_D$ with probability at least $q$. Define $\outcome_E$ by sampling $\{(\sigma_D^{\oupt,i},p_i)\}_{i\in I}$ from $\outcome_D$ and applying the $\LIFT$. If the lift is sound, then $\outcome_E$ solves $E$ with probability at least $q$.
\end{lemma}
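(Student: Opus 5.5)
The statement is the probabilistic transfer in the direction $D \to E$. Given an outcome $\outcome_D$ that solves $D$ on $G_D$ with probability at least $q$, we build $\outcome_E$ on $G_E \cong G_D$ by sampling $\sigma_D^{\oupt,i}$ and lifting it to $\sigma_E^{\oupt,i}$. The lift is the one of \cref{lem:sound:ed}: each half-edge $(v,\{v,u\})$ receives the pair consisting of the labeled PN-view $(T^{r}_{G_D}(v),\sigma_D^{\oupt,i})$ and the index $\gamma(\cdot,u)$. The plan is to copy the set-inclusion argument of \cref{lemma:prob_a_to_b,lemma:prob_b_to_a} verbatim, with \cref{lem:sound:ed} supplying soundness.

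First, fix the isomorphism $G_E \cong G_D$, so that nodes and half-edges are identified; the lift is then a deterministic map from labelings of $G_D$ to half-edge labelings of $G_E$. Define the good events
\[
S_D := \{ i\in I \mid \sigma_D^{\oupt,i} \text{ is legal for $D$ on } G_D\},
\qquad
S_E := \{ i\in I \mid \LIFT(\sigma_D^{\oupt,i}) \text{ is legal for $E$ on } G_E\}.
\]
Second, apply \cref{lem:sound:ed} to each $i \in S_D$. Since $\sigma_D^{\oupt,i}$ is legal, every view $(T^r_{G_D}(v),\sigma_D^{\oupt,i})$ lies in $C_D$, so the lifted half-edge labels lie in $\outlabels_E = C_D\times[3]$. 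The node constraint $\mathcal{C}_V$ then holds by injectivity of $\gamma$. The edge constraint $\mathcal{C}_E$ holds by the remark $(T^r_G(v\to u),\sigma)\cong(T^r_G(u\setminus v),\sigma)$. Hence $i\in S_E$, which gives $S_D\subseteq S_E$.

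Third, sum the probabilities: $\sum_{i\in S_E}p_i\ge\sum_{i\in S_D}p_i\ge q$.

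The only step needing care is the second. One must check that the lemma's hypotheses match exactly: the instance of $E$ is the same graph as the instance of $D$, with no encoding involved, and the lift is applied pointwise to every sample. One must also confirm that $I_{\mathrm{fit}}$ is symmetric enough that the multiset formulation of $\mathcal{C}_E$ is satisfied whichever endpoint is listed first. Both conditions in $I_{\mathrm{fit}}$ follow from the same remark applied once at $v$ and once at $u$, so I expect no real obstacle here.
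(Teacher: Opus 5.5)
Your proposal is correct and is the argument the paper intends. The paper omits this proof as identical to \cref{lemma:prob_a_to_b,lemma:prob_b_to_a}: obtain $S_D\subseteq S_E$ from soundness of the lift (\cref{lem:sound:ed}), then sum the $p_i$. Your re-check of the node and edge constraints simply re-derives \cref{lem:sound:ed}, and you correctly use $(T^r_G(v\to u),\sigma)\cong(T^r_G(u\setminus v),\sigma)$ for both conditions of $I_{\mathrm{fit}}$ (the paper's proof of that lemma writes $v\setminus u$, which is a typo).
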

The proof of the lemma follows the same steps as the proofs for \cref{lemma:prob_a_to_b,lemma:prob_b_to_a} and is thus omitted.

\begin{lemma}\label{lemma:simulation_d_e}
If problem $D$ admits an optimal algorithm with locality $\Theta(T) = \Omega(\diamondsuit(\mathcal{M}))$ for a model \(\mathcal{M}\), then problem $E$ admits an algorithm with locality $\Theta(T)$.
\end{lemma}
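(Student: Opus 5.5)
We follow the template of \cref{lemma:simulation_a_b,lemma:simulation_b_d}. The key difference is that here the instances coincide ($G_E \cong G_D$), so no gadget encoding or decoding is needed. Only the output labels are translated, and both translations are local.

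\emph{Upper bound.} Let $\alg{D}$ be the assumed optimal algorithm for $D$ with locality $\Theta(T)$. Define $\alg{E}$ on $G_E$ as follows. First run $\alg{D}$ (same graph) to obtain $\ssigma{D}{out}$. Then each node $v$ gathers its radius-$r_D$ PN-view $(T^{r_D}_{G_D}(v),\ssigma{D}{out})$ and labels each incident half-edge $(v,\{v,u\})$ with $\bigl((T^{r_D}_{G_D}(v),\ssigma{D}{out}),\gamma((T^{r_D}_{G_D}(v),\ssigma{D}{out}),u)\bigr)$. This lift adds only $r_D = O(1)$ to the locality, so the total is $T + O(1) = \Theta(T)$.

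This post-processing composes in each model of \cref{ssec:models}:
\begin{itemize}
\item In \local and \qlocal it is an extra $r_D$ rounds after the classical output is fixed.
\item In \slocal a node queries radius $T+r_D$ and recomputes the outputs it needs.
\item In the \nonsign and bounded-dependence models the lifted outcome is a deterministic function of the outputs within distance $r_D$. Hence non-signaling and independence beyond $T+r_D$ follow exactly as in the arguments of \cref{sec:simulation_of_algorithms}.
\end{itemize}
Correctness comes from \cref{lem:sound:ed}, and the success probability is preserved by the probabilistic lemma just stated (same argument as \cref{lemma:prob_b_to_a}).

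\emph{Lower bound.} Suppose for contradiction that $E$ has an algorithm $\alg{E}$ with locality $o(T)$. On an instance $G_D$ of $D$, run $\alg{E}$ on the same graph and output $l(v) = t_\sigma(v)(v)$. Each node reads only its own half-edge labels, so no extra rounds are needed. By \cref{lem:t_good} and \cref{lem:sound:de} the result is legal for $D$, and the companion probabilistic lemma preserves the success probability. This gives a $D$-algorithm of locality $o(T)$ (or at least $o(T)+O(1)$), contradicting optimality of $\alg{D}$ in the sense of the remark on comparing algorithms.

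The main obstacle is that last phrase. If $T$ could be $O(1)$, the additive $r_D$ would matter, and $o(T)+O(1)$ need not be $o(T)$. Since $T=\Omega(\diamondsuit(\MM))$, this can only occur in models where $\diamondsuit(\MM)=O(1)$. I would handle that case by arguing at the level of $\Theta$: a constant-locality $E$-algorithm yields a constant-locality $D$-algorithm. The $E$ to $D$ direction in fact costs zero extra locality, so $E$ with locality $o(T)$ gives $D$ with locality $o(T)$ directly, and optimality of $\alg{D}$ is contradicted. This is cleaner than the corresponding case in \cref{lemma:simulation_b_d}.

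A secondary point to check is that the model-by-model composition of the radius-$r_D$ post-processing is legitimate in \slocal, where outputs are written sequentially. Nodes store the views they computed, so later nodes can recompute the required radius-$r_D$ labeled views within their enlarged query radius.
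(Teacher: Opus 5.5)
Your proposal is correct and is essentially the paper's proof. For the upper bound you run $\mathcal{A}_D$ and spend $O(1)$ extra locality writing the labeled radius-$r_D$ PN-view and neighbor index on each half-edge, which is valid by \cref{lem:sound:ed}. For the lower bound you turn a hypothetical $o(T)$-locality $E$-algorithm into a $D$-algorithm that outputs $l(v)$ using only internal computation, which is valid by \cref{lem:sound:de} and contradicts the optimality of $\mathcal{A}_D$.

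Your per-model remarks, and your observation that the $E \to D$ direction adds no locality (which settles the constant-$T$ worry), go beyond what the paper writes without changing the argument. One small correction: in SLOCAL, justify the post-processing step by the standard composition of two sequential algorithms, not by "recomputing" outputs of nodes that have not yet been processed.
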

\begin{proof}
First, let us prove that such \(\alg{E}\) with \(\Theta(T)\) exists. Such an algorithm can consist of two steps. First, the nodes proceed with \(\alg{D}\), which will have \(\Theta(T)\) locality.
The final output depends only on \((T^r_{G}(v), \sigma_{out})\); comparing it to \(C_D\) and outputting \(((T^r_{G}(v), \sigma_{out}), x)\) with correct enumeration of \(x\) on the half-edges has constant locality \(O(1)\). By \cref{lem:sound:ed}, such an algorithm produces a valid solution.

Now assume, that there is a faster algorithm \(\alg{E}\) with locality \(o(T)\), then we can run this algorithm. However, by \cref{lem:sound:de} this will also solve problem D, if we will take the \(l\) as the output, which takes only an internal computation. Then \(\alg{D}\) was not optimal. A contradiction.
\end{proof}

\section{Final proof}\label{sec:final}
In this section, we finalize the main theorem of this paper and provide a corresponding proof.

\begin{theorem}
		The complexity landscape for LCL problems and RE problems is the same for any model \(\mathcal{M}\) above its symmetry-breaking region \(\diamondsuit(\mathcal{M})\).

		Or in other words, if an always-solvable LCL problem \(A\) admits an optimal algorithm with locality \(\Theta(T) = \Omega(\diamondsuit(\MM))\) for a model \(\MM\), then there exists an RE problem \(E\), which admits an algorithm with locality \(\Theta(T)\). And vice versa.
\end{theorem}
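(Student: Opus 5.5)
The plan is to compose the three reductions already established, $A \to B \to D \to E$, and check that each step preserves optimal locality in the $\Omega(\diamondsuit(\MM))$ regime. The converse direction is handled separately and is essentially trivial.

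For the forward direction, start with an always-solvable LCL $A$ whose optimal locality in $\MM$ is $\Theta(T)$ with $T = \Omega(\diamondsuit(\MM))$.

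1. Apply \cref{lemma:simulation_a_b} to obtain the input-free LCL $B$ on $3$-regular graphs with locality $\Theta(T)$. Its proof shows both an upper bound (simulate $\alg{A}$ on the decoding) and a lower bound (any $o(T)$ algorithm for $B$ would yield an $o(T)$ algorithm for $A$ via the encoding). So $\Theta(T)$ is in fact the optimal locality of $B$, and this is what the next lemma requires as its hypothesis.
2. Before applying the next step, check that $B$ is always solvable, since the reduction to $D$ assumes this. By construction of $\mathcal{C}_B$, malformed regions are unconstrained. Every decoded instance is an instance of $A$ (if multigraph artifacts arise, I would invoke the convention that nodes with parallel edges are unconstrained). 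Hence \cref{lemma:sound_lift_b_a} turns a solution of $A$ into one of $B$.
3. Apply \cref{lemma:simulation_b_d} to obtain the PN-checkable problem $D$ with optimal locality $\Theta(T)$. This step uses symmetry breaking to produce the distance-$2r_D$ coloring and the distance-$2r_B$ edge coloring. Since $T = \Omega(\diamondsuit(\MM))$, the additive $O(\diamondsuit(\MM))$ cost is absorbed. Always-solvability of $D$ follows from \cref{lem:lift:d_to_b}.
4. Apply \cref{lemma:simulation_d_e} to obtain the RE problem $E$ with locality $\Theta(T)$.

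At each stage, the randomized and outcome-based models are covered by the probabilistic lemmas (\cref{lemma:prob_a_to_b}, \cref{lemma:prob_b_to_a}, and their analogues for $D$ and $E$), and the simulations are realized in each model by \cref{sec:framework}.

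For the converse, take an RE problem $E$ with optimal locality $\Theta(T)$. Write it as an LCL $A$ with checking radius $1$: output labels are triples of half-edge labels, node constraints come from $\mathcal{C}_V$, edge constraints from $\mathcal{C}_E$, nodes of degree $\neq 3$ are unconstrained, and there are no inputs. On $3$-regular graphs a solution of $E$ is literally a solution of $A$ and vice versa, by having each node output its three half-edge labels together with a consistent assignment of labels to incident edges. This translation needs only one round of local computation. On other graphs $A$ is trivial, so the worst-case locality is unchanged and $A$ has locality $\Theta(T)$. If the intended reading of ``vice versa'' also demands an equivalent LCL, this is the same statement.

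The main obstacle is the transfer of optimality along the chain. Each lemma assumes the source problem has an \emph{optimal} locality $\Theta(T)$, but it only concludes that the target \emph{admits} a $\Theta(T)$ algorithm. So I must extract from each proof the lower-bound half, namely that the target cannot be solved in $o(T)$, to feed the next lemma. For \cref{lemma:simulation_b_d} this lower-bound argument has a case split on whether $T = \Theta(\diamondsuit(\MM))$, and I would handle it exactly as in that proof. The lower bound for $E$ is given by the second half of \cref{lemma:simulation_d_e}.
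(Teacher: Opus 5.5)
Your forward direction is the paper's proof: you chain \cref{lemma:simulation_a_b}, \cref{lemma:simulation_b_d} and \cref{lemma:simulation_d_e}. It is more careful than the paper's version, which only says ``following the same logic.'' You correctly notice that each lemma only concludes that the target \emph{admits} a $\Theta(T)$ algorithm, while the next lemma needs optimality, and that the lower-bound halves of the lemma proofs supply exactly this. You also justify always-solvability of $B$ and $D$, which the paper asserts without comment.

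The one step that fails as written is in your converse: ``On other graphs $A$ is trivial.'' If only nodes of degree $\neq 3$ are unconstrained, $A$ is not trivial on a graph that is $3$-regular except at a few vertices. Every degree-$3$ node, and every edge between two such nodes, must still satisfy $\mathcal{C}_V$ and $\mathcal{C}_E$. The algorithm for $E$ carries no guarantee on non-$3$-regular inputs, so your upper bound for $A$ is unjustified there.

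There are two ways to fix this:
\begin{itemize}
\item The cheapest is to take the LCL over the family $\mathcal{G}_3$ of $3$-regular graphs, as the paper itself does for $B=(\mathcal{G}_3,\dots)$. Then the converse really is the triviality the paper claims.
\item If you want all graphs of maximum degree $3$, pad instead. Attach to each missing half-edge of a low-degree node a copy of $K_4$ minus an edge $\{a,b\}$, plus a new vertex adjacent to $a$ and $b$; in this gadget every vertex has degree $3$ except the new one. Simulate $E$'s algorithm on the resulting $3$-regular graph with $O(1)$ overhead, and restrict the output.
\end{itemize}

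Also make explicit how the ``consistent assignment'' is represented. A per-node triple of labels does not say which incident edge carries which label, so the edge constraint cannot be checked from it. You either need half-edge or edge outputs in the LCL, or one extra symmetry-breaking step to name neighbours. The latter is affordable since $T=\Omega(\diamondsuit(\MM))$.
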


\begin{proof}
		By \cref{lemma:simulation_a_b}, for any always-solvable LCL problem \(A\) which admits an optimal algorithm with locality \(\Theta(T) = \Omega(\diamondsuit(\MM))\), we can construct an always-solvable LCL problem \(B\) on 3-regular graphs without inputs that admits an algorithm with locality \(\Theta(T)\). Following the same logic through \cref{lemma:simulation_b_d,lemma:simulation_d_e}, we can construct an RE problem \(E\), which admits an algorithm with locality \(\Theta(T)\).

		Now, for the other direction, it is trivial that problems in the RE formalism can be easily expressed as LCL problems. Thus, both formalisms have the same complexity landscape above the symmetry-breaking region. 
\end{proof}

\section*{Acknowledgements.}
We would like to thank Alkida Balliu, Dennis Olivetti, Sebastien Brandt and Henrik Lievonen for many fruitful discussions. This work was supported in part by the Research Council of Finland, Grants 363558 and 359104.  
\bibliography{da,extra}

\end{document}